%% file: main.tex
\documentclass[12pt,twoside]{extarticle}
\renewcommand{\baselinestretch} {1.10}

\usepackage{authblk}
\usepackage{amsmath}
\usepackage{amsthm}
\usepackage{amssymb}
\usepackage{amsfonts}
\usepackage{algorithm}
\usepackage{algpseudocode}
\usepackage[top = 1 in, bottom = 1.2 in, left = 1 in, right = 1 in]{geometry}
\usepackage{color}
\usepackage{mathrsfs}
\usepackage[normalem]{ulem}
\usepackage{longtable}
\usepackage{hyperref}
\usepackage{cleveref}
\usepackage{natbib}
\usepackage{dsfont}
\usepackage{graphicx}
\usepackage{caption}
\usepackage{subcaption}
\usepackage{float}
\usepackage{mathtools}
\usepackage{etoolbox}
\usepackage{thmtools}
\usepackage{enumitem}
\usepackage[textsize=tiny]{todonotes}
\usepackage{tikz}
\usepackage{pgfplots}
\pgfplotsset{compat=1.18}

\usepackage{multirow}

\usepackage{esvect}
\usepackage[utf8]{inputenc}
\usepackage{comment}

\def\r#1{\textcolor{red}{#1}}

\newcommand{\PP}{\mathbb{P}}

\newtheorem{theorem}{Theorem}

\newtheorem{lemma}{Lemma}

\newtheorem{proposition}{Proposition}
\newtheorem{corollary}{Corollary}
\newtheorem{assumption}{Assumption}
\declaretheoremstyle[headfont=\bf,bodyfont=\normalfont]{ex}
\declaretheorem[style=ex]{example}
\declaretheoremstyle[bodyfont=\normalfont]{rm}
\declaretheorem[numbered=no,style=rm]{remark}

\DeclareMathOperator*{\argmax}{arg\,max}

\newcommand{\normmm}{{\vert\kern-0.25ex\vert\kern-0.25ex\vert}}
\newcommand{\bignormmm}{{\big\vert\kern-0.25ex\big\vert\kern-0.25ex\big\vert}}
\newcommand{\Bignormmm}{{\Big\vert\kern-0.25ex\Big\vert\kern-0.25ex\Big\vert}}

\makeatletter
\long\def\@makecaption#1#2{
\vskip 0.8ex
\setbox\@tempboxa\hbox{\small {\bf #1:} #2}
\parindent 1.5em  %% How can we use the global value of this???
\dimen0=\hsize
\advance\dimen0 by -3em
\ifdim \wd\@tempboxa >\dimen0
\hbox to \hsize{
\parindent 0em
\hfil 
\parbox{\dimen0}{\def\baselinestretch{0.96}\small
    {\bf #1.} {#2}
  } 
\hfil}
\else \hbox to \hsize{\hfil \box\@tempboxa \hfil}
\fi
}
\makeatother

\usepackage{hyperref}

\input{macros}

\numberwithin{equation}{section}

\begin{document}
% \begin{center}
%     {\bf \LARGE Optimal Semiparametric Dynamic Pricing with Feature Diversity} \\
  
% 	\vspace{1em}
    
%   \vspace{.6em}
% \end{center}
\title{Optimal Semiparametric Dynamic Pricing with Feature Diversity}

\renewcommand\Authfont{\large}
\renewcommand\Affilfont{\small}

\author[1]{Jinhang Chai}
\author[2]{Yaqi Duan}
\author[1]{Jianqing Fan\thanks{His research is supported by NSF Grant DMS-2412029 and the ONR Grant N00014-25-1-2317.}}
\author[3]{Kaizheng Wang}

\affil[1]{Princeton University}
\affil[2]{New York University}
\affil[3]{Columbia University}

\date{}
\maketitle

\vspace{-3em}
\begin{abstract}
We study contextual dynamic pricing under a semiparametric demand model in which the purchase probability is $1-F(p-m(\mathbf{x}))$, where $m(\mathbf{x})$ captures mean utility as a function of product features and buyer covariates, and $F$ is an unknown market-noise distribution. Existing methods either incur suboptimal regret or rely on restrictive structural assumptions. We propose a stagewise greedy pricing algorithm that iteratively refines the estimate of $F$ via local polynomial regression while pricing greedily with current estimates. By exploiting feature diversity, the algorithm reuses endogenous samples collected during exploitation for nonparametric estimation, avoiding costly global random exploration used in prior work.

We establish a general regret bound that applies to any estimator $\hat m$ of the utility function, and derive explicit rates for linear, nonparametric additive, and sparse linear classes of $m$. For the linear class, our regret scales as $T^{\max\{1/2,\,3/(2\beta+1)\}}$, where $\beta$ is the smoothness of $F$ and $T$ is the time horizon. This improves the best known rates for semiparametric contextual pricing and achieves the parametric $\sqrt{T}$ rate when $\beta \ge 5/2$. We further prove a matching lower bound, showing the optimality of our rate, and present numerical experiments that corroborate the theory and demonstrate the practical advantages of iterative refinement.
\end{abstract}

\newcommand{\SHOWPROOFS}{0}

\ifnum\SHOWPROOFS=1
\section{Checklist}
\begin{enumerate}
\item $\beta=1$ case?

The lower bound and upper bound using our algorithm are only valid for $\beta\ge 2$. This is because we imposed the assumption that $\phi'(u)\ge c_{\phi}$ (which is essentially $-F''(1-F)<2(F')^2$). For the upper bound, we can have $T^{3/4}$, same as \cite{fan2022policy}.
\item Stated an upper-bound theorem and its proof. 
\item Modified the initial-stage to remove additional asps, modified the post-smoothing stage.
\item Added log factors/define range coverage(bring defs upfront)/write a proof outline and top-down structure. Being a little sloppy on range estimation.
\item Revised with LLM. To add literature review: Feature diversity by e.g., Hamsa Bastani; Yining Wang's paper.
\item Question: whether to simplify Section 3  or 4; should I replace $m(\bx)$ simply by $m$?
\end{enumerate}
\fi

\section{Introduction}
\label{sec:intro}

Dynamic pricing with demand learning is a central problem in revenue management and online decision-making. A seller repeatedly interacts with heterogeneous customers, observes contextual information, and must choose prices to balance revenue maximization with demand learning. While classical pricing models assume either fully parametric demand or non-contextual settings, modern applications increasingly require flexible models that incorporate rich covariates and unknown demand distributions.

In this paper, we study a \emph{semiparametric contextual pricing} problem in which the purchase probability is given by
\[
\PP(y=1 \mid \bx, p) = 1 - F\bigl(p - m(\bx)\bigr),
\]
where $p$ is the offering price, $m(\bx)$ is the mean utility function given the product and customer covariate vector $\bx$, and $F$ is the market noise distribution; see section \S~\ref{setup} for details. This model naturally generalizes both linear contextual pricing and nonparametric demand learning, capturing heterogeneity in product features and customer valuations while remaining robust to misspecification of the market noise distribution.

Learning under this model poses two fundamental challenges. First, both $m(\cdot)$ and $F$ are tightly coupled and must be learned simultaneously, yet errors in one directly bias estimation of the other. Second, the data used to estimate $F$ are generated endogenously by the pricing policy, inducing distribution shift across time and dependence of the data over time. These challenges further complicate the optimal algorithm design and analysis.

\cite{fan2022policy} initiated the study of this model and established regret guarantees under the $\beta$-smoothness assumptions on $F$. However, their approach relies on a costly global random exploration stage, leading to regret rates that deteriorate as the smoothness parameter increases. More recently, \cite{wang2025tight} developed active-learning-based algorithms that achieve the optimal $T^{3/5}$ regret rate for the special case $\beta=2$ and linear utility. That said, extensions to higher smoothness regimes remain unclear.

Our key insight is that under a \emph{feature diversity} condition—ensuring that covariates are sufficiently well spread—samples collected during exploitation can be systematically reused to refine the estimation of the unknown distribution $F$. A related idea has appeared in \cite{bastani2021mostly} in the context of contextual bandits. Building on this observation, we propose a \emph{stagewise greedy algorithm} that alternates between greedy pricing and localized nonparametric estimation of $F$ via local polynomial regression. By organizing the learning process into geometrically growing stages, the algorithm progressively sharpens the estimate of $F$ while avoiding the cost of explicit global exploration.

\subsection{Related Work}

\paragraph{Contextual pricing with parametric demand.}
A large body of literature studies dynamic pricing with contextual information under parametric demand models. Early contributions focus on linear and generalized linear demand models with known link functions, where regret bounds of order $\sqrt{T}$ or $\log T$ can be achieved using explore--then--commit or greedy policies under appropriate conditions~\citep{broder2012dynamic,qiang2016dynamic}. In the contextual setting, \cite{ban2021personalized} was the first to establish a $\sqrt{dT}$ regret bound. More recently, \cite{chai2024localized} showed that dimension-free regret is achievable via localization and proved the resulting rates to be minimax optimal. The model studied in this paper builds on \cite{javanmard2019dynamic}, which assumes a parametric form for the CDF.

\paragraph{Nonparametric and semiparametric pricing.}
\cite{wang2014close,wang2021multimodal} studied nonparametric demand learning, but without covariates.
In the contextual setting, when the demand function is fully nonparametric, regret typically suffers from the curse of dimensionality~\citep{chen2021nonparametric}. Contextual semiparametric models offer a middle ground by combining low-dimensional structure with nonparametric flexibility. In this setting, \cite{fan2022policy} and \cite{luo2024distribution} proposed algorithms based on a nonparametric kernel estimator and a perturbed linear bandit, respectively, and derived regret bounds depending on the smoothness of $F$. However, their rates are suboptimal. Recently, \cite{wang2025tight} established minimax optimal regret for the special case of $\beta=2$ and linear utility using techniques from active learning. Readers can refer to Table 1 in \cite{wang2025tight} for a detailed comparison of prior works.

\paragraph{Feature diversity and exploration-free learning.}
Several recent works show that sufficiently rich variation in contextual features can significantly reduce the need for explicit exploration in online decision making~\citep{Bietti2018PracticalEA}. In contextual bandits, \cite{bastani2021mostly} formalize a covariate diversity condition under which greedy or mostly greedy policies achieve near-optimal regret. Related guarantees for greedy algorithms are also obtained via smoothed analysis, where small random perturbations of contexts ensure adequate excitation even in adversarial settings~\citep{kannan2018smoothed}.

%\cite{wu2020stochastic}

% Beyond purely parametric models, \cite{krishnamurthy2018semiparametric} study semiparametric contextual bandits and demonstrate that adaptively collected data can be efficiently reused through orthogonalized estimation. 

% \paragraph{Our contribution.}
% Compared to the above works, our approach differs in both algorithmic design and analysis. Rather than relying on a single exploration phase or explicit active learning procedures, we propose an iterative scheme that reuses exploitation data to refine the nonparametric component. This allows us to strictly improve upon existing regret bounds for $\beta > 2$ and recover the parametric $\sqrt{T}$ rate when the noise distribution is sufficiently smooth.

\subsection{Problem Setup} \label{setup}

We consider a dynamic pricing problem under a semiparametric demand model~\citep{fan2022policy}. At each time $t$, a customer with covariate vector $\bx_t\in\RR^d$ arrives ($\bx_t$ can also include product features and their interactions with customer covariates), and the seller posts a price $p_t\in\calP$, where $\calP=[p_{\min},p_{\max}]$ is a fixed interval.
%~\footnote{While the price set can be general, for simplicity, we later take $[p_{\min},p_{\max}]=[0,B]$.}
The customer's valuation is given by
\[
		v_t = m(\bx_t) + \varepsilon_t,
\]
and a purchase occurs whenever $v_t \ge p_t$ so that we only observe $y_t=\mathds{1}(v_t\ge p_t)$. Here, $m(\bx)$ is the mean utility (intrinsic value) function and $\{\varepsilon_t\}_{t=1}^T$ are i.i.d. zero-mean noise variables, independent of $\{\bx_t\}_{t=1}^T$.  Then, the demand function, which is the probability of purchase, is  given by
\begin{equation} \label{demand}
d(\bx,p) = P(y_t= 1 | \bx_t = x, p_t = p)  = 1 - F\!\left(p - m(\bx)\right),
\end{equation}
where $F$ denotes the CDF of the noise $\varepsilon_t$.  In this paper, $F$ is unknown, treated nonparametrically, and assumed to be $\beta$-smooth for some $\beta \ge 2$. %on a compact interval that contains the relevant range of $p - m(\bx)$.

Under the above model, the expected revenue at time $t$ with offer price $p$ is
\[
r(\bx_t,p) = p\bigl(1 - F(p - m(\bx_t))\bigr).
\]
The (clairvoyant) optimal price at time $t$ is then given by
\[
p_t^* \in \argmax_{p\in\calP} r(\bx_t,p).
\]
The mean-utility function $m$ and the noise distribution $F$ are generally unknown and need to be learned dynamically from the empirical sale data.  
%~\footnote{Our theory readily implies the optimal rate when $m$ is known.}
We assume $\{\bx_t\}_{t=1}^T$ are i.i.d.\ draws from a fixed distribution on a compact subset of $\RR^d$ and $\hat{p}_t$ is a pricing policy based on the data before time $t$.  The goal  is to maximize the cumulative revenue of the policy $\hat{p}_t$ among all possible policies, or equivalently, to minimize the regret defined by,
\[
\operatorname{Regret}(T)
=\sum_{t=1}^T \Bigl( r(\bx_t,p_t^*)-r(\bx_t,\hat p_t)\Bigr).
\]

\paragraph{Examples.}  
Typical choices for $m$ include the following three cases, which we will discuss in detail in Section~\ref{sec:theory}. 
\begin{enumerate}
\item Linear model: $m(\bx) = \bx^{\top}\btheta$. 
\item Nonparametric additive model: $m(\bx) = \sum_{j=1}^d m_j(x_j)$.
\item Sparse linear model: $m(\bx) = \bx^{\top}\btheta$ with $\|\btheta\|_0\le s$.
\end{enumerate}

\subsection{Sketch of Algorithm and Contributions}

We use the doubling time window size strategy, which has a time window of size $2^l T_0$ at stage $l$, and learn dynamically the optimal pricing policy as follows.  See Figure~\ref{fig:diagram} for an illustration.

\begin{figure}[htbp]
	\centering
	\includegraphics[width=0.6\textwidth]{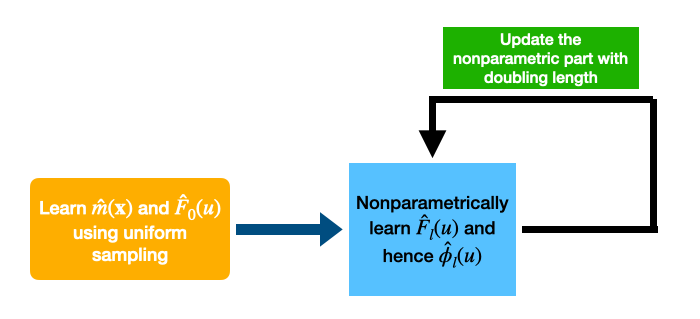}
	\caption{Algorithmic diagram, where $\phi(u) = u - \frac{1-F(u)}{F'(u)}$ is a crutical component in optimal pricing.}
	\label{fig:diagram}
\end{figure}

\paragraph{Stagewise Greedy Pricing via Iterative Local Polynomial Regression).}
\begin{enumerate}
    \item Explore in the first $T_0$ rounds to get an estimator $\hat m$ and $\hat F_0$.
    \item For $l=1,2,\ldots,\lceil \log_2(T/T_0) \rceil$: %\r{Q: How $\hat{F}_0$ is learned?}
    \begin{enumerate}
        \item Conduct greedy pricing for $2^{l-1} T_0$ periods based on $\hat m$ and the current estimate $\hat F_{l-1}$.
        \item Update the estimator of $F$ to obtain $\hat F_l$ using data collected in the preceding stage (we use local polynomial regression ~\cite{fan1996local}).
    \end{enumerate}
\end{enumerate}

\paragraph{Contributions.}
Our main contributions are summarized as follows:
\begin{itemize}[leftmargin=1.5em]
	\item We introduce a stagewise greedy pricing algorithm for semiparametric contextual demand models that leverages endogenous samples through localized nonparametric regression.
	\item For a general estimator of $m$, we establish a master regret bound. We then instantiate in cases when $m$ belongs to linear, nonparametric additive, and sparse linear models. In particular, for the linear case, we obtain an upper bound of order $T^{\max\{1/2,\,3/(2\beta+1)\}}$, improving upon existing results of \citep{fan2022policy} and achieving the parametric $\sqrt{T}$ rate when $\beta \ge 5/2$, where $\beta$ measures the smoothness of $F$. Figure ~\ref{fig:regret-beta-2-5} compares our new optimal rates of regret with those in \cite{fan2022policy}.  Note that when $\beta=2$, the upper bound becomes $T^{3/5}$, which matches the optimal rate derived in \cite{wang2025tight}.

	\item We prove a matching lower bound, thereby pinpointing the minimax regret rate with a linear utility function.
	\item Our analysis highlights how feature diversity enables efficient reuse of exploitation data for nonparametric learning, offering a new perspective on exploration--exploitation trade-offs in semiparametric pricing.
\end{itemize}

%\paragraph{In a nutshell.}  
%For parametric $m$, such as linear or generalized linear models, our method with suitably chosen hyperparameters achieves a minimax regret rate of
%\[
%T^{\max\{1/2,\, 3/(2\beta+1)\}},
%\]
%which improves upon the $T^{(2\beta+1)/(4\beta-1)}$ rate derived in \cite{fan2022policy}. Note that the doubling scheme there does not affect the rate. Our rate matches with \cite{wang2025tight} when $\beta=2$.

\begin{figure}[t]
\centering
\begin{tikzpicture}
\begin{axis}[
    width=0.82\linewidth,
    height=0.45\linewidth,
    xlabel={Smoothness $\beta$},
    ylabel={Regret exponent $e(\beta)$ in $T^{e(\beta)}$},
    xmin=2, xmax=5,
    ymin=0.45, ymax=0.9,
    xtick={2,2.5,3,3.5,4,4.5,5},
    grid=both,
    legend style={at={(0.98,0.98)}, anchor=north east},
    domain=2:5,
    samples=400,
]

% Our rate: max{1/2, 3/(2beta+1)}
\addplot[very thick] {max(0.5, 3/(2*x+1))};
\addlegendentry{Ours: $\max\{\tfrac12,\tfrac{3}{2\beta+1}\}$}

% Fan (2022) rate
\addplot[very thick, dashed] {(2*x+1)/(4*x-1)};
\addlegendentry{\cite{fan2022policy}: $\tfrac{2\beta+1}{4\beta-1}$}

% Crossover point beta = 5/2
\addplot[only marks, mark=*] coordinates {(2.5,0.5)};
\node[anchor=south] at (axis cs:2.5,0.52) {$\beta=\tfrac52$};

\end{axis}
\end{tikzpicture}
\caption{Regret exponent as a function of smoothness $\beta\in[2,5]$.
Lower is better. Our rate saturates at the parametric regime $T^{1/2}$ for
$\beta\ge 2.5$.
\cite{wang2025tight} gives the same rate $T^{3/5}$ only at $\beta=2$.}
\label{fig:regret-beta-2-5}
\end{figure}

\subsection{Notations}

Constants such as $c$, $C_0$, and $C'$ may change from line to line.  
We write $[m]=\{1,2,\ldots,m\}$.  

For two nonnegative sequences $f(n)$ and $g(n)$, we use $f(n)\lesssim g(n)$ or $f(n)=\mathcal{O}(g(n))$ to denote $f(n)\le C g(n)$ for some universal constant $C$; similarly, $f(n)\gtrsim g(n)$ or $f(n)=\Omega(g(n))$ means $f(n)\ge C g(n)$. We write $f(n)\asymp g(n)$ when both relations hold.

We use $a\wedge b=\min\{a,b\}$ and $a\vee b=\max\{a,b\}$. For an event $\mathcal{E}$, let $\indicator{\mathcal{E}}$ denote its indicator function. Let $\mathrm{Unif}[a,b]$ denote the uniform distribution on $[a,b]$. 
Define the relative distance of $x$ in interval $[a,b]$ as $\alpha_{[a,b]}(x)=\min\{\frac{x-a}{b-a},\frac{b-x}{b-a}\}$.

For an interval $[\underline z,\overline z]$ and $v\in(0,1)$, define its $v$-interior by
\[
I_v[\underline z,\overline z] 
:= [\,\underline z + v(\overline z-\underline z),\; \underline z + (1-v)(\overline z-\underline z)\,].
\]
Also define the left and right $v$-shrink points:
\[
l_v[\underline z,\overline z] := \underline z + v(\overline z-\underline z), 
\qquad
r_v[\underline z,\overline z] := \underline z + (1-v)(\overline z-\underline z).
\]
To avoid overburdening notation, we do not mark parameters with $\star$ for ground truth.

\input{known-utility}

\input{methodology}

\input{theory}
\input{simulation}

\section{Discussion}
%\g{Add a one-paragraph discussion.}

This paper studies dynamic pricing under a general semiparametric model. Rather than relying on costly global random exploration, the proposed ILPR policy leverages endogenous data collected during exploitation to iteratively refine the nonparametric estimate of the market noise distribution. This leads to sharp regret guarantees and, in the linear utility setting, a minimax-optimal rate that attains the parametric $\sqrt{T}$ regime when $\beta \ge 5/2$. Our analysis underscores the importance of integrating adaptive pricing with carefully designed nonparametric estimation, including local polynomial regression, post-smoothing, and boundary correction. Several directions remain open. First, while matching lower bounds are established for linear utility, extending them to additive and high-dimensional sparse models would further elucidate the fundamental limits of semiparametric pricing. Second, it would be of interest to relax the feature diversity condition or develop robust variants under weaker support assumptions, especially since effective learning primarily relies on regions of the covariate space that are frequently visited. Finally, extending the framework to richer settings—such as more general demand models, multi-product pricing, inventory constraints, and nonstationary environments—may further reveal the benefits of endogenous data reuse.

\bibliographystyle{apalike2}
\bibliography{ref-main}

\newpage
\appendix
\input{app-Proof_ub}

\input{app-Proof_lb}

\input{app-utility}

\input{app-technical}

\end{document}

%% file: macros.tex
\usepackage{stackrel}

\newcommand{\one}{\ensuremath{\mathds{1}}}

\newcommand{\RKHS}{\ensuremath{\mathds{H}}}

\usepackage{upgreek} \newcommand{\distr}{\ensuremath{\upmu}}

\newcommand{\distrdata}{\ensuremath{\bar{\distr}}}

\newcommand{\bx}{{\ensuremath{\boldsymbol{x}}}}

\DeclarePairedDelimiterX{\anglep}[1]{(}{)}{#1}
\makeatletter
\newcommand{\@spanstar}[1]{{\rm span}\anglep*{#1}}
\newcommand{\@spannostar}[2][]{{\rm span}\anglep[#1]{#2}}
\newcommand{\Span}{\@ifstar\@spanstar\@spannostar}
\makeatother

\DeclarePairedDelimiterX{\dfun}[2]{(}{)}{#1 \;\delimsize\|\; #2}
\makeatletter
\newcommand{\@trunstar}[2]{$\chi$^2\dfun*{#1}{#2}}
\newcommand{\@trunnostar}[3][]{$\chi$^2\dfun[#1]{#2}{#3}}
\newcommand{\chisq}{\@ifstar\@trunstar\@trunnostar}
\makeatother

\DeclarePairedDelimiterX{\inprod}[2]{\langle}{\rangle}{#1, \, #2}

\DeclarePairedDelimiterX{\kulldiv}[2]{(}{)}{#1\;\delimsize\|\;#2}
\makeatletter
\newcommand{\@kullstar}[2]{D_{\text{KL}}\kulldiv*{#1}{#2}}
\newcommand{\@kullnostar}[3][]{D_{\text{KL}}\kulldiv[#1]{#2}{#3}}
\newcommand{\kull}{\@ifstar\@kullstar\@kullnostar}
\makeatother

\makeatletter
\newcommand{\@hilinstar}[2]{\inprod*{#1}{#2}_{\RKHS}}
\newcommand{\@hilinnostar}[3][]{\inprod[#1]{#2}{#3}_{\RKHS}}
\newcommand{\hilin}{\@ifstar\@hilinstar\@hilinnostar}
\makeatother

\makeatletter
\newcommand{\@mudatainstar}[2]{\inprod*{#1}{#2}_{\distrdata}}
\newcommand{\@mudatainnostar}[3][]{\inprod[#1]{#2}{#3}_{\distrdata}}
\newcommand{\mudatain}{\@ifstar\@mudatainstar\@mudatainnostar}
\makeatother

\makeatletter
\newcommand{\@mudatahinstar}[3]{\inprod*{#2}{#3}_{\distrdata}}
\newcommand{\@mudatahinnostar}[4][]{\inprod[#1]{#3}{#4}_{\distrdata}}
\newcommand{\mudatahin}{\@ifstar\@mudatahinstar\@mudatahinnostar}
\makeatother

\DeclarePairedDelimiterX{\defabs}[1]{|}{|}{#1}
\makeatletter
\newcommand{\@absstar}[1]{\defabs*{#1}}
\newcommand{\@absnostar}[2][]{\defabs[#1]{#2}}
\newcommand{\abs}{\@ifstar\@absstar\@absnostar}
\makeatother

\DeclarePairedDelimiterX{\norm}[1]{\|}{\|}{#1}
\makeatletter
\newcommand{\@normstar}[1]{\norm*{#1}_{\RKHS}}
\newcommand{\@normnostar}[2][]{\norm[#1]{#2}_{\RKHS}}
\newcommand{\hilnorm}{\@ifstar\@normstar\@normnostar}
\makeatother

\DeclareFontFamily{U}{matha}{\hyphenchar\font45}
\DeclareFontShape{U}{matha}{m}{n}{
	<-6> matha5 <6-7> matha6 <7-8> matha7
	<8-9> matha8 <9-10> matha9
	<10-12> matha10 <12-> matha12
}{}
\DeclareSymbolFont{matha}{U}{matha}{m}{n}

\DeclareFontFamily{U}{mathx}{\hyphenchar\font45}
\DeclareFontShape{U}{mathx}{m}{n}{
	<-6> mathx5 <6-7> mathx6 <7-8> mathx7
	<8-9> mathx8 <9-10> mathx9
	<10-12> mathx10 <12-> mathx12
}{}
\DeclareSymbolFont{mathx}{U}{mathx}{m}{n}

\DeclareMathDelimiter{\vvvert} {0}{matha}{"7E}{mathx}{"17}%

\DeclarePairedDelimiterX{\opnorm}[1]{\vvvert}{\vvvert}{#1}

\makeatletter
\newcommand{\@hilopnormstar}[1]{\opnorm*{#1}_{\RKHS}}
\newcommand{\@hilopnormnostar}[2][]{\opnorm[#1]{#2}_{\RKHS}}
\newcommand{\hilopnorm}{\@ifstar\@hilopnormstar\@hilopnormnostar}
\makeatother

\makeatletter
\newcommand{\@muopnormstar}[1]{\opnorm*{#1}_{\distr}}
\newcommand{\@muopnormnostar}[2][]{\opnorm[#1]{#2}_{\distr}}
\newcommand{\muopnorm}{\@ifstar\@muopnormstar\@muopnormnostar}
\makeatother

\makeatletter
\newcommand{\@mudataopnormstar}[1]{\opnorm*{#1}_{\distrdata}}
\newcommand{\@mudataopnormnostar}[2][]{\opnorm[#1]{#2}_{\distrdata}}
\newcommand{\mudataopnorm}{\@ifstar\@mudataopnormstar\@mudataopnormnostar}
\makeatother

\makeatletter
\newcommand{\@supnormstar}[1]{\norm*{#1}_{\infty}}
\newcommand{\@supnormnostar}[2][]{\norm[#1]{#2}_{\infty}}
\newcommand{\supnorm}{\@ifstar\@supnormstar\@supnormnostar}
\makeatother

\makeatletter
\newcommand{\@munormstar}[1]{\norm*{#1}_{\distr}}
\newcommand{\@munormnostar}[2][]{\norm[#1]{#2}_{\distr}}
\newcommand{\munorm}{\@ifstar\@munormstar\@munormnostar}
\makeatother

\makeatletter
\newcommand{\@mudatanormstar}[1]{\norm*{#1}_{\distrdata}}
\newcommand{\@mudatanormnostar}[2][]{\norm[#1]{#2}_{\distrdata}}
\newcommand{\mudatanorm}{\@ifstar\@mudatanormstar\@mudatanormnostar}
\makeatother

\makeatletter
\newcommand{\@xidatanormstar}[1]{\norm*{#1}_{\sdistrdata}}
\newcommand{\@xidatanormnostar}[2][]{\norm[#1]{#2}_{\sdistrdata}}
\newcommand{\xidatanorm}{\@ifstar\@xidatanormstar\@xidatanormnostar}
\makeatother

\makeatletter
\newcommand{\@distrnormstar}[2]{\norm*{#1}_{#2}}
\newcommand{\@distrnormnostar}[3][]{\norm[#1]{#2}_{#3}}
\newcommand{\distrnorm}{\@ifstar\@distrnormstar\@distrnormnostar}
\makeatother

\makeatletter
\newcommand{\@psinormstar}[2]{\norm*{#2}_{\psi_{#1}}}
\newcommand{\@psinormnostar}[3][]{\norm[#1]{#3}_{\psi_{#2}}}
\newcommand{\psinorm}{\@ifstar\@psinormstar\@psinormnostar}
\makeatother

\newcommand{\featureh}[1]{\ensuremath{\phi}}

\newcommand{\Lip}{\ensuremath{L}}

\newcommand{\indicator}[1]{\ensuremath{\one_{#1}}}

\newcommand{\Lipf}[1]{\ensuremath{\Lip_{f}}}

\newenvironment{carlist}
{\begin{list}{$\bullet$}
		{\setlength{\topsep}{0.1in} \setlength{\partopsep}{0in}
			\setlength{\parsep}{0.1in} \setlength{\itemsep}{\parskip}
			\setlength{\leftmargin}{0.15in} \setlength{\rightmargin}{0.08in}
			\setlength{\listparindent}{0in} \setlength{\labelwidth}{0.08in}
			\setlength{\labelsep}{0.1in} \setlength{\itemindent}{0in}}}
	{\end{list}}

\newcommand{\bcar}{\begin{carlist}}
	\newcommand{\ecar}{\end{carlist}}

\newcommand{\fullhline}{\vspace*{.1in}\noindent\makebox[\linewidth]{\rule{\linewidth}{0.4pt}}\vspace*{.1in}}

     \def\EE{\mathbb{E}}

     \def\PP{\mathbb{P}}
     
     \def\RR{\mathbb{R}}

 \def\cD{{\cal  D}}

\def\calG{{\cal  G}} 
 \def\cH{{\cal  H}}

\def\calM{{\cal  M}} \def\cM{{\cal  M}}
 
\def\calO{{\cal  O}} 
\def\calP{{\cal  P}} \def\cP{{\cal  P}}

\def\calT{{\cal  T}} \def\cT{{\cal  T}}
\def\calU{{\cal  U}}

\def\calX{{\cal  X}} \def\cX{{\cal  X}}

\newcommand{\bfsym}[1]{\ensuremath{\boldsymbol{#1}}}

 \def\btheta{\bfsym {\theta}}

 \def\htheta{\hat {\theta}}

\DeclareMathOperator{\E}{E}

%% file: known-utility.tex
\section{Stagewise Dynamic Pricing}

This section describes our stagewise dynamic pricing in detail.  Throughout this paper, we make the following assumption on the smoothness of $F$.
\begin{assumption}[Smoothness]
	\label{asp:smoothness}
	The CDF $F$ is $\beta$-smooth for some $\beta\ge 2$, in the sense that letting $q=\lfloor\beta\rfloor$, it holds that
	\[
	|F^{(q)}(x) - F^{(q)}(y)| \le C|x-y|^{\beta-q}, \quad\forall x,y,
	\]
	where $F^{(q)}$ denote the $q$-th derivative of $F$.
\end{assumption}

\subsection{Known Utility Function}
\label{sec:known-utility}

The improvements in the rate of convergence over the method in \cite{fan2022policy} come from the fact that we use the endogenous pricing data from the previous stages. To illustrate the basic idea, let us begin with the simplified setting where the mean utility function $m(\bx)$ is known (e.g., obtained from an external market%\b{or products and customers are homogenous so that there are no covariates}
), whereas the noise distribution $F$ is unknown. The task, therefore, reduces to estimating the noise CDF. The case of an unknown $m$ is treated in the next section. As alluded to before, we will assume diversity on the distribution of $\bx$, see Assumption~\ref{asp:feature-diversity-known-utility}. If the utility $m(\bx)$ concentrates near a single point, the problem effectively reduces to nonparametric dynamic pricing without covariates, for which the minimax regret rate $T^{(\beta+1)/(2\beta+1)}$ is established in \cite{wang2021multimodal}.

\paragraph*{Clairvoyant Pricing Policy}
The optimal (clairvoyant) price is
\[
p_t^\star = \arg\max_{p>0} p\bigl(1 - F(p - m(\bx_t))\bigr).
\]
The first-order condition yields the solution
\[
p_t^\star = m(\bx_t) + \phi^{-1}\bigl(-m(\bx_t)\bigr),
\]
where $\phi(u) = u - \frac{1-F(u)}{F'(u)}$. Define $g(u) = u + \phi^{-1}(-u)$. We have $p_t^\star = g(m(\bx_t))$.

% \begin{figure}[htbp]
%     \centering
%     \includegraphics[width=0.4\textwidth]{pic/function.png}
%     \caption{A visual illustration of $\phi(u)$ for a distribution $F$ supported on $[-1,1]$.}
%     \label{fig:function}
% \end{figure}

%Figure~\ref{fig:function} depicts a visualization example where $F$ is supported on $[-1,1]$.  \r{Q:  How helpful is this?}
The central nonparametric estimand in the pricing rule is $\phi^{-1}(u)$.  
A crucial advantage of the stagewise algorithm is that the pricing policy in stage $l$ (denoted by $\calT_l$) depends only on the information collected in stages $1$ through $l-1$. Consequently, conditional on the history up to stage $l-1$ and the independent contexts $\{\bx_t\}_{t \in \calT_l}$ in stage $l$, the resulting demands
$d(\bx_t, \hat p_{l-1} (\bx_t))$ (see \eqref{demand}) within stage $l$ are independent. This conditional independence allows $\phi^{-1}$ to be estimated nonparametrically using all collected observations.

% Assume the noise $\varepsilon$ is supported on $I_\epsilon=[-b_\epsilon,b_\epsilon]$.\footnote{The support need not be symmetric, and the algorithm does not require knowledge of it.} The function $m(\bx)$ takes values in an interval $I_m\subseteq[b_m,B-b_m]$ with $b_m\ge b_\epsilon$.  
To guarantee the existence and uniqueness of the optimal price, we assume that $\phi$ is invertible. This is ensured by imposing a monotonicity condition on $\phi$~\footnote{See also \cite{fan2022policy}, Assumption~2.1.}. Since $\phi$ depends on the first-order derivative of $F$ and the monotonicity condition effectively enforces a near–first-order smoothness requirement, this condition implicitly encodes our assumption on the smoothness parameter, $\beta \ge 2$. Most of the existing literature considers this range of smoothness; see, e.g., \cite{fan2022policy, luo2024distribution}. In particular, \cite{wang2025tight} focuses exclusively on the case $\beta = 2$.

\begin{assumption}[Monotonicity]
\label{asp:transform-strict-increasing}
There exists $c_\phi>0$ such that $\phi'(u)\ge c_\phi$ for all $u$.
\end{assumption}

\paragraph*{High-Level Algorithm}
Since $m$ is known, we need only to estimate $\phi$. We briefly describe the main steps of each stage.
Consider stage $l$ and $t\in\mathcal{T}_l$.   
Set the price $p_t = \hat g_{l-1}(\bx_t)$ using the current pricing rule with initial $\hat g_0$ being a random pricing policy  and observe $y_t$; for simplicity of notation, we drop the dependence of this pricing policy on the previous stages of the data.    Define the variable $u_t = p_t -m(\bx_t)$. 
A nonparametric estimator is then applied to $\{(u_t,y_t)\}$ in the current stage of the data to refine the estimate of $\phi$.  
This step constitutes the core of the method and is composed of three key
components, each of which is described below.

\paragraph*{Local Polynomial Regression}
We adopt local polynomial regression as our main nonparametric regression tool.  This allows us to estimate simultaneously the regression and its derivative.
At time $t$, since the price is set to $p_t=\hat g_{l-1}(\bx_t)$, % = m(\bx_t) + \hat\phi_{l-1}^{-1}( m(\bx_t))$, 
we have 
\[
\EE[y_t \mid \bx_t,p_t] = 1 - F(u_t), 
\]
which depends on  $u_t = p_t -m(\bx_t)$ and is a  canonical nonparametric regression problem.

Given $\{(u_t,y_t)\}_{t\in\mathcal{T}_l}$, a local polynomial estimator of order $q$ (where $q = \lfloor\beta\rfloor$) with kernel $K$ and bandwidth $h_l$ is defined as 
\begin{align}
\label{equ:local-polynomial}
\hat \mu_l(u)
=\arg\min_{\mu\in\RR^{k+1}}
\sum_{t\in\mathcal{T}_l}
\Bigl[y_t-\mu^\top U\!\left(\frac{u_t-u}{h_l}\right)\Bigr]^2
K\!\left(\frac{u_t-u}{h_l}\right),
\end{align}
where $U(u) = [1,u,\ldots,u^q/q!]^\top$.  
Then, following the local polynomial technique \citep{fan1996local}, we have the estimators for $F$ and its derivative 
\begin{align}
\label{equ:F-F'-estimator}
\hat F_l(u) = 1 - e_1^\top \hat\mu_l(u),\qquad
\hat F_l^{(1)}(u) = - \frac{e_2^\top \hat\mu_l(u)}{h_l},
\end{align}
where $e_1$ and $e_2$ denote the first two standard basis vectors, corresponding to the intercept and the slope.
%\r{Q:   Can notation slightly simpler if I take $1 - y$?}

\paragraph*{Post-Smoothing}

Define the initial estimate with LPR (local polynomial regression) at stage $l$ as $\hat \phi^I_l$: 
%\r{Q: $I$ stands for "iterated"? Why not "stagewise"? or "improved"?}
\begin{align*}
\hat\phi^I_l(u)=u-\frac{1-\hat F_l(u)}{\hat F^{(1)}_l(u)}.
\end{align*}
LPR alone does not ensure sufficient smoothness of $\hat\phi^I_l$.  
While $\|\hat\phi^I_l-\phi\|_\infty$ is well controlled, the derivative error $\|(\hat\phi^I_l)'-\phi'\|_\infty$ may not;  worse still, $\hat\phi^I_l$ can fail to be monotone. That prevents the inversion of $\hat\phi^I_l$ in computing the price.  
%This obstructs estimation in subsequent stages.

We therefore apply a post-processing step to $\hat\phi^I_l$ by a kernel smoothing.  
Let $K$ be a kernel function supported on $[-1,1]$, and let $\delta_u$ be a location-dependent bandwidth specified in Algorithm~\ref{alg}.  
Define the kernel smoothing as the convolution
\[
\hat\phi^S_l(u) = \int_{-\infty}^{\infty} \hat\phi^I_l(u-t) K_{\delta_u}(t)\,dt,
\qquad
K_\delta(t)=\frac{1}{\delta}K\!\left(\frac{t}{\delta}\right).
\]

\paragraph*{Boundary Perturbation}
So far, if we use $\hat\phi^S_l$ in pricing, the random variable  $(\hat\phi^S_l)^{-1}(-m(\bx_t))$ is supported on 

\[
\widehat\cD=
\{(\hat\phi^S_l)^{-1}(-m(\bx)):\bx\in\mathcal{X}\}.
\]
However, constructing the pricing rule requires estimating $\phi$, and hence $F$ and $F'$, on the domain
\[
\cD =
\{\phi^{-1}(-m(\bx)):\bx\in\mathcal{X}\}.
\]
These domains may not align, creating boundary issues well-known in nonparametric estimation \citep{fan1996local}.  
Errors early in the horizon may persist and propagate.

The convolution step introduces further boundary distortion.  
To mitigate these effects, we introduce a perturbation $\eta$ to $\hat\phi^S_l$. %so that the range of the resulting function covers that of $\phi$, 
See Algorithm~\ref{alg} for details.  
For $u$ in the interior $\operatorname{Int}_v([a,b])=[a+v(b-a),\, b-v(b-a)]$, define
\[
\hat\phi_l(u)=\hat\phi^S_l(u)+\eta(u)
\]
with linear extrapolation near the boundary. Here $\eta(u)$ is explicitly given by
\begin{subequations}
            \begin{align}
            \eta(x)=\begin{cases}
            \frac{c_1}{2}(x-v_1),\quad&\text{if}\ x\in (-\infty,v_1],\\
            0, &\text{if}\ x\in [v_1,v_2],\\
            \frac{c_1}{2}(x-v_2),\quad&\text{if}\ x\in [v_2,\infty),
            \end{cases}
            \end{align}
            \end{subequations}
            with $v_1,v_2,c_1$ to be chosen in Algorithm~\ref{alg}. %\r{Q: Is $\eta(u)$ known?  Explain more}
This perturbation guarantees the \emph{range-coverage} condition 
\begin{equation}
\label{equ:range-coverage-known-utility}
\hat\phi_l(0)\ge \phi(0),\qquad
\hat\phi_l(1)\le \phi(1),
\end{equation}
without degrading the regret rate. $\hat\phi_l$ is our final estimate for pricing in the next stage. 
Finally, we impose the following feature diversity assumption on the distribution of $m(\bx)$. This ensures sufficient probability mass of $(\hat\phi_l)^{-1}(-m(\bx_t))$ on the domain $\cD=\{\phi^{-1}(-m(\bx)):\bx\in\mathcal{X}\}$, enabling automatic exploration by our algorithm.

\begin{assumption}[Feature Diversity]
\label{asp:feature-diversity-known-utility}
Let $J_{m}=[\underline{u},\bar{u}]$ be the range of $m(\bx)$. Define
\[
\delta_{[\underline{u},\bar{u}]}(z):=\min\{(z-\underline{u})^{\kappa},(\bar u-z)^{\kappa}\},
\]
for a given $\kappa > 0$
and write $\delta(z)$ when the interval is clear from context. Assume that the density of the utility $m(\bx)$, denoted by $p_{ m}(z)$, exists and that there exist constants $0<c_d\le C_d<\infty$ such that
\[
c_d\,\delta(z)\le p_{m}(z)\le C_d\,\delta(z),
\qquad \text{for all } z\in J_{m}.
\]
\end{assumption}

\begin{theorem}[Known Utility]
Under Assumptions~\ref{asp:smoothness}--\ref{asp:feature-diversity-known-utility}, for the iterative local polynomial regression pricing policy $\pi^{\mathrm{ILPR}}$, it holds that
\[
\operatorname{Regret}(\pi^{\mathrm{ILPR}},T)
\;\lesssim\;
T^{\frac{3}{2\beta+1}}
\]
with probability at least $1-1/T$.
On the other hand, there exists a class $\mathfrak{F}$ satisfying Assumptions~\ref{asp:smoothness},\ref{asp:transform-strict-increasing} and \ref{asp:feature-diversity-known-utility} such that
\[
\inf_\pi\sup_{\,F\in \mathfrak{F}}
\mathbb{E}\bigl[\operatorname{Regret}(\pi, T)\bigr]
\;\ge\;
\tilde c\, T^{\frac{3}{2\beta+1}}.
\]
\end{theorem}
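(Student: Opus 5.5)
For the upper bound I will prove the regret bound stage by stage and then sum over the doubling stages. The regret at time $t$ is quadratic in the pricing error: by the first-order condition and Assumption~\ref{asp:transform-strict-increasing}, $r(\bx,p^\star)-r(\bx,p)\lesssim (p-p^\star)^2$. In stage $l$, $p_t-p_t^\star=\hat\phi_{l-1}^{-1}(-m(\bx_t))-\phi^{-1}(-m(\bx_t))$. Since $\hat\phi_{l-1}$ stays strictly increasing with slope at least $c_\phi/2$ (which the post-smoothing must guarantee), this error is bounded by $(2/c_\phi)\|\hat\phi_{l-1}-\phi\|_{\infty,\cD}$. So the stage-$l$ regret is at most $|\calT_l|\,\|\hat\phi_{l-1}-\phi\|^2_{\infty}$, and I need a uniform rate for $\hat\phi_{l-1}$ over $\cD$.

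\textbf{Step 1: data design.} Using Assumption~\ref{asp:feature-diversity-known-utility}, I show that the design variables $u_t=\hat\phi_{l-1}^{-1}(-m(\bx_t))$ have a density on $\cD$ bounded below by a multiple of $\delta$. The range-coverage property \eqref{equ:range-coverage-known-utility} ensures the induced support covers $\cD$, up to a boundary layer whose density decays polynomially (exponent $\kappa$). The map $z\mapsto\hat\phi_{l-1}^{-1}(-z)$ is bi-Lipschitz, so the density transfers with constants.

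\textbf{Step 2: stage-wise LPR bound.} Conditionally on the past, the data within a stage are independent, as the paper notes. Standard local polynomial analysis of order $q$ with bandwidth $h_l$ then gives the following bounds with probability $1-1/T^2$, uniformly on the interior of $\cD$:
\[
\|\hat F_l-F\|_\infty\lesssim h_l^\beta+\sqrt{\log T/(n_lh_l)},\qquad \|\hat F_l^{(1)}-F'\|_\infty\lesssim h_l^{\beta-1}+\sqrt{\log T/(n_lh_l^3)}.
\]
These use a union bound over a grid and a lower bound on the local design matrix. I take $h_l\asymp n_l^{-1/(2\beta+1)}$, giving $\|\hat\phi^I_l-\phi\|_\infty\lesssim n_l^{-(\beta-1)/(2\beta+1)}$. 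The post-smoothing with $\delta_u$ of order $h_l$ preserves this sup rate and controls the derivative: $\|(\hat\phi^S_l)'-\phi'\|\lesssim \|\hat\phi^I_l-\phi\|_\infty/\delta+\delta^{\beta-2}$, which is small, so monotonicity holds. The perturbation $\eta$, whose slope is small compared with $c_\phi$, costs only order $n_l^{-(\beta-1)/(2\beta+1)}$ near the boundary and enforces range coverage, which closes the induction for Step 1.

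\textbf{Step 3: summing.} The stage-$l$ regret is
\[
2^lT_0\cdot (2^{l}T_0)^{-2(\beta-1)/(2\beta+1)}=(2^lT_0)^{3/(2\beta+1)}.
\]
Summing the geometric series gives $T^{3/(2\beta+1)}$. A union bound over the $O(\log T)$ stages gives probability $1-1/T$, with logs absorbed by the choice of $h_l$, or otherwise suppressed in $\lesssim$.

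\textbf{Main obstacle.} I expect the hardest part to be the boundary. Near the edges of $\cD$ the density vanishes like $\delta$, so the effective sample size there is smaller. I must show that the degraded accuracy in the layer is offset by the small probability of pricing there. Concretely, I would use a location-dependent bandwidth, integrate the error against the density, and check that the contribution still sums to $T^{3/(2\beta+1)}$.

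\textbf{Lower bound.} I will use a standard perturbation (Assouad/Fano) construction. Start from a base $F_0$ satisfying the assumptions and add $\Theta(T^{1/(2\beta+1)})$ bumps of height $h^\beta$ and width $h\asymp T^{-1/(2\beta+1)}$ to $F_0$ over the interior of $\cD$, chosen to shift $\phi$ by about $h^{\beta-1}$ locally. Each bump moves the optimal price by about $h^{\beta-1}$. Because $m(\bx)$ is spread out with the diversity density, each bump region receives about $Th$ customers. Distinguishing a bump from its negation requires KL divergence about $Th\cdot h^{2\beta}\lesssim 1$, which fixes $h$. Misidentified regions incur regret $Th\cdot h^{2(\beta-1)}$ per bump, and summing over $1/h$ bumps gives $T h^{2\beta-2}=T^{3/(2\beta+1)}$.
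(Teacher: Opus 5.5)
\emph{Verdict: genuine gap.} Your architecture matches the paper's. That covers the density transfer through the bi-Lipschitz map $\hat\phi_{l-1}^{-1}$, stagewise LPR with grid union bounds and a Gram-matrix eigenvalue bound, post-smoothing for monotonicity, and slope-reduced extrapolation for range coverage. It also covers quadratic regret summed over doubling stages and a bump construction for the lower bound. But the upper bound as written does not go through. The step you defer as the ``main obstacle'' is where the paper's argument actually does its work.

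\textbf{Sup-norm reduction.} You reduce stage-$l$ regret to $|\calT_l|\,\|\hat\phi_{l-1}-\phi\|_{\infty,\cD}^2$ and claim a uniform rate $n^{-(\beta-1)/(2\beta+1)}$. For $\kappa>0$ this is false. The local design mass is $\mu_h(x)\asymp\alpha(x)^\kappa$, so the LPR variance is of order $\log T/(nh^3\mu_h(x))$. With $h\asymp n^{-1/(2\beta+1)}$, the attainable pointwise bound is therefore $b(x)\asymp\epsilon_n\,\alpha(x)^{-\kappa/2}$, where $\epsilon_n:=n^{-(\beta-1)/(2\beta+1)}\sqrt{\log T}$. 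Range coverage then forces an extrapolation layer of width $v$ with $\tfrac{c_1}{2}v\ge b(v)$, i.e.\ $v\asymp\epsilon_n^{2/(\kappa+2)}$. Inside that layer the error is genuinely of order $v\gg\epsilon_n$, since that is how coverage is achieved. So the perturbation does not ``cost only $n_l^{-(\beta-1)/(2\beta+1)}$'', and any sup-norm reduction loses the exponent.

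The missing step is to replace the sup by an average over $z_t=\phi^{-1}(-m(\bx_t))$. These points have density $\asymp\alpha^\kappa$, and by range coverage their support lies inside the previous design domain. Then $\int_v^{1-v}b(x)^2\alpha(x)^\kappa\,dx\asymp\epsilon_n^2$ because the powers of $\alpha$ cancel exactly. The two layers contribute $v^{\kappa+1}\cdot v^2\asymp v\,\epsilon_n^2$. A conditional Bernstein bound then controls $\sum_{t\in\calT_l}(z_t-\hat z_t)^2$ with high probability.

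\textbf{Post-smoothing bandwidth.} The choice $\delta_u\asymp h_l$ also fails to close the induction. Your derivative control needs $\sup_{|z-x|\le\delta}|e_l(z)|/\delta$ to be a small constant, where $e_l=\hat\phi_l^I-\phi$. The available bound gives $b(x)/h\asymp\sqrt{\log T}\,n^{-(\beta-2)/(2\beta+1)}\alpha(x)^{-\kappa/2}$, which grows toward the boundary and is $\gtrsim\sqrt{\log T}$ everywhere when $\beta=2$. Your second term $\delta^{\beta-2}$ is also $O(1)$ rather than small at $\beta=2$, where $\phi'$ is merely bounded. For $\beta>2$ one has $h\gg\epsilon_n$, so keeping the sup rate would need a higher-order kernel.

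The paper instead takes $\delta_x=C\,b(x)$ with $C$ large, which buys four things. The ratio above is at most $1/C$. The smoothing bias is $O(b(x))$ by Lipschitzness alone. The derivative of the smoothed true part is the average $\int\phi'(x-t)K_{\delta_x}(t)\,dt\in[c_\phi,C]$, so no comparison with $\phi'$ is needed. Finally, the extra term from the $x$-dependence of $\delta_x$ is harmless: $|d\delta_x/dx|\lesssim C\epsilon_n\alpha^{-(\kappa+2)/2}$ is small exactly on $[v,1-v]$, for the same $v$ as above.

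You also do not address the base case: an initial $\hat\phi_0$ with range coverage and derivative in $[c,C]$, at cost $T_0\lesssim T^{3/(2\beta+1)}$.

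\textbf{Lower bound.} The claim ``each bump region receives about $Th$ customers, so the KL is about $Th\cdot h^{2\beta}$'' conflates two things. One is where regret is incurred: customers whose optimal $u^\star$ falls in bump $j$. The other is where information is collected: rounds with $u_t=p_t-m(\bx_t)$ in bump $j$'s support. The policy controls the latter and may concentrate it on a few bumps. Your conclusion survives by averaging in Assouad, but this must be argued. Since $\sum_j\EE n_j\le T$, Cauchy--Schwarz gives $\sum_j\sqrt{h^{2\beta}\EE n_j}\le\sqrt{NTh^{2\beta}}$, which is $\lesssim N$ iff $Th^{2\beta+1}\lesssim 1$.

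The paper avoids this bookkeeping with Fano. Every observation has KL $\lesssim\rho^2h^{2\beta}$ wherever $u_t$ falls, so the total $Th^{2\beta}$ is below $c\log|\Theta|\asymp 1/h$ for a Gilbert--Varshamov packing. The pairwise $L^2$ separation $h^{2(\beta-1)}$ of the optimal price maps then yields regret $\gtrsim Th^{2(\beta-1)}$. You should also take $\rho$ small so that $\phi_\iota'\ge c_\phi$ survives: at $\beta=2$ the perturbation of $F''$ is of order $\rho$, not vanishing.
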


Since the proof is a special case of the general Theorem~\ref{thm:upper_bound} and the nonparametric part of Theorem~\ref{thm:lower_bound}, we omit it.

\begin{remark}
We give some high-level intuition of this result.
It is similar to the idea of how feature diversity helps improve the regret rates in bandit literature~\citep{bastani2021mostly}.  With our feature diversity Assumption~\ref{asp:feature-diversity-known-utility}, we can continue exploring (estimation) while at the same time exploiting (regret minimization). 
To illustrate the ideas, suppose for simplicity that each stage only consists of one time step.
At time $t+1$, the $t$ previous samples can all be used in estimation, incurring the $t^{- \frac{\beta-1}{2\beta+1}}$ estimation error of $\phi$, as it involves the derivative of $F$. Hence, the estimation error of the optimal pricing function $m+\phi^{-1}(-m)$ also has the same rate. Moreover, the per-time regret nearly the optimal pricing is a quadratic function of the pricing gap. Therefore, the total regret can be roughly bounded by 
\begin{align*}
\operatorname{Regret}(T)\lesssim\sum_{t=1}^T t^{- \frac{2 (\beta-1)}{2\beta+1}}\lesssim T^{\frac{3}{2\beta+1}}.
\end{align*}  In particular, when $\beta$ goes to infinity, the exponent of our bound effectively goes to zero.
This indicates that the regret grows slower than any polynomial rate, which is consistent with the logarithmic bound in \cite{bastani2021mostly}.
\end{remark}

%% file: methodology.tex
\subsection{Unknown Utility Function}
\label{sec:methodology}

We start with our general methodology when $m$ is unknown, in which case we require an additional exploration stage, resulting an estimator $\hat m$. We will detail how $\hat{m}$ can be obtained via regression in Section~\ref{sec:upperbound} and instantiate it for three commonly used examples.

\paragraph*{High-Level Algorithm}
We briefly describe the main steps of each stage; the complete procedure is given in Algorithm~\ref{alg}.  The stage schedule $\cT_l$ is usually taking a doubling strategy:  $\cT_l = \{T_02^{l-1}, \cdots, T_0^{l}\}$, for $l = 1,2 \cdots$. 

\begin{itemize}
\item 
\emph{Exploration.}  
Collect $T_0$ samples $\{(\bx_t,p_t,y_t)\}_{t=1}^{T_0}$ by the random pricing in the price range  $\cP = (0, B)$, say, with $p_t \sim \mathsf{Unif}(0,B)$   and regress $B y_t$ on $\bx_t$ using the data $\{(\bx_t,By_t)\}_{t=1}^{T_0}$ to obtain $\hat m$ using the model structure ~\citep{fan2022policy,luo2024distribution,bracale2025dynamic}; see Section~\ref{sec:upperbound} for further details.  The regression relationship follows from the fact 
$$\EE( By | \bx_t) = B\EE I(p_t \leq m(\bx_t) + \varepsilon_t  | \bx_t ) = \EE [m(\bx_t) + \varepsilon_t | \bx_t] = m(\bx_t).$$
%.~\footnote{A typical choice is to use uniform sampling~ .}.  
% Fit a regression of $(\bx_t, B y_t)$ to obtain an estimate $\hat m$ of $m$, where $B$ is a sufficiently large constant.  
% This yields an $L_\infty$ guarantee on $\|\hat m - m\|_\infty$.

\item 
\emph{Exploitation.}  
Consider stage $l$ and $t\in\mathcal{T}_l$.  
Set the price $p_t = \hat g_{l-1}(\bx_t)$ using the current pricing rule, and observe demand $y_t$.  
Define the observed variable $u_t = p_t - \hat m(\bx_t)$. 
A nonparametric estimator is then applied to $\{(u_t,y_t)\}_{t \in \cT_l}$ to refine the estimate of $\phi$.  
This step constitutes the core of the method and is composed of three key
components, each of which is described below.
\end{itemize}

\paragraph*{Local Quadratic Regression}
As in the known utility case, we adopt local polynomial regression as our main nonparametric tool. As we show later, estimating the unknown utility necessarily incurs at least $\sqrt{T}$ regret standard in online literature~\citep{lattimore2020bandit}. In the unknown-utility setting, when the smoothness parameter $\beta > 5/2$, the regret contribution from estimating the link function, of order $T^{\frac{3}{2\beta+1}}$, is always dominated by the $\sqrt{T}$ term arising from utility estimation. Consequently, it suffices to restrict attention to smoothness levels up to $5/2$ and to employ local quadratic regression under the assumption $\beta \geq 2$.
At time $t$, since the price is set to $p_t=\hat g_{l-1}(\bx_t)=\hat m(\bx_t)+\hat\phi_{l-1}^{-1}(-\hat m(\bx_t))$, 
we have,
\[
\EE[y_t \mid \bx_t,p_t] = 1 - F(\tilde u_t).
\]
where the true variable driving demand is given by
\[
\tilde u_t = \hat\phi_{l-1}^{-1}(-\hat m(\bx_t)) + \hat m(\bx_t) - m(\bx_t).
\]
However, the function $m$ is unknown and can only be estimated by $\hat m$.
Consequently, we use the observed variable
\[
u_t := p_t - \hat m(\bx_t)
= \hat\phi_{l-1}^{-1}\!\bigl(-\hat m(\bx_t)\bigr)
\]
as a proxy for the unobserved variable $\tilde u_t$.
Note that $u_t$ differs from $\tilde u_t$ only through the estimation error
$\hat m(\bx_t)-m(\bx_t)$.
Given $\{(u_t,y_t)\}_{t\in\mathcal{T}_l}$, the local quadratic estimate with kernel $K$ and bandwidth $h_l$ is defined via \eqref{equ:local-polynomial} and \eqref{equ:F-F'-estimator} with $q=2$.

% \[
% \hat \mu_l(u)
% =\arg\min_{\mu\in\RR^{k+1}}
% \sum_{t\in\mathcal{T}_l}
% \Bigl[y_t-\mu^\top U\!\left(\frac{u_t-u}{h_l}\right)\Bigr]^2
% K\!\left(\frac{u_t-u}{h_l}\right),
% \]
% where $U(u) = [1,u,u^2/2!]^\top$.  
% Then
% \[
% \hat F_l(u) = 1 - e_1^\top \hat\mu_l(u),\qquad
% \hat F_l^{(1)}(u) = - \frac{e_2^\top \hat\mu_l(u)}{h_l},
% \]
% where $e_1$ and $e_2$ denote the first two standard basis vectors.

\paragraph*{Post-Smoothing}
This step is identical to the known utility case and is thus omitted.

% Define the preliminary estimate with LPR at stage $l$ as $\hat \phi^I_l$,
% \begin{align*}
% \hat\phi_{l}^I(u)=u-\frac{1-\hat F_l(u)}{\hat F^{(1)}_l(u)}.
% \end{align*}
% LPR alone does not ensure sufficient smoothness of $\hat\phi^I$.  
% While $\|\hat\phi^I-\phi\|_\infty$ is well controlled, the derivative error $\|(\hat\phi^I)'-\phi'\|_\infty$ may not;  worse still, $\hat\phi^I$ can fail to be monotone, preventing $\hat g(\hat m(\bx))$ from having a proper density.  \r{Q: density?}
% This obstructs estimation in subsequent stages.

% We therefore apply a post-smoothing step to $\hat\phi^I$.  
% Let $K$ be a kernel supported on $[-1,1]$, and let $\delta_u$ be a location-dependent bandwidth specified in Algorithm~\ref{alg}.  
% Define the convolution
% \[
% \hat\phi^S(u) = \int_{-\infty}^{\infty} \hat\phi^I(u-t) K_{\delta_u}(t)\,dt,
% \qquad
% K_\delta(t)=\frac{1}{\delta}K\!\left(\frac{t}{\delta}\right).
% \]

\paragraph*{Boundary Perturbation}
This step follows nearly identically with the known utility case, except that the nonparametric design  $(\hat\phi^S_l)^{-1}(-\hat m(\bx_t))$ involves $\hat m$ instead of $m$, and we have to take into account the perturbation $\eta$.

% So far, if we use $\hat \phi=\hat\phi^S$ in pricing, the sampling variable would follow $(\hat\phi^S)^{-1}(-\hat m(\bx_t))$, supported on 
% \[
% \mathcal{D}=\{(\hat\phi^S)^{-1}(-\hat m(\bx)):\bx\in\mathcal{X}\}.
% \]
% However, constructing the pricing rule requires estimating $\phi$, and hence $F$ and $F'$, on the domain
% \[
% \{\phi^{-1}(-m(\bx)):\bx\in\mathcal{X}\}.
% \]
% These domains may not align, creating boundary issues well-known in nonparametric estimation \citep{fan1996local}.  
% Errors early in the horizon may persist and propagate.

% The convolution step introduces further boundary distortion.  
% To mitigate these effects, we introduce a perturbation $\eta$ so that $\hat\phi$ covers the true range of $\phi$, see Algorithm~\ref{alg} for details.  
% For $u$ in the interior $\operatorname{Int}_v([a,b])=[a+v(b-a),\, b-v(b-a)]$, define
% \[
% \hat\phi(u)=\hat\phi^S(u)+\eta(u)
% \]
% with linear extrapolation near the boundary.  \r{Q: Is $\eta(u)$ known, to be specified?}
% The perturbation is chosen to satisfy the \emph{range-coverage} condition
% \begin{equation}
% \label{equ:range-coverage}
% \hat\phi(0)\ge \phi(0),\qquad
% \hat\phi(1)\le \phi(1),
% \end{equation}
% without degrading the regret rate. $\hat\phi$ is our final estimate for pricing in the next stage.  

\newcommand{\HRule}{\vspace{-2em} \\ \rule{\linewidth}{0.1mm}}

\begin{algorithm}[htbp]
    \caption{Iterative Local Polynomial Regression (ILPR)}
    \label{alg}

\noindent\begin{minipage}{\linewidth}
\textbf{Input:}
initial valuation estimator $\hat m$; initial exploration length $T_0$; bandwidths $h_l$; number of stages $L=\lceil\log_2(T/T_0)\rceil$; other hyperparameters $a_1\neq 0$, $b_1$, $v$ as in \eqref{def:v-explicit} and $\delta_x$ as in \eqref{def:delta_x}.\\
\textbf{Output:} sequence of prices $\{p_t\}_{t=1}^T$.
\end{minipage}

    \fullhline
    \begin{algorithmic}[0]
        \State \emph{Initial stage:}
        % \For{$t=1,\ldots,T_0$}
        %     \State Draw $p_t \sim \mathsf{Unif}(0,B)$ and observe demand $y_t$.
        % \EndFor
        % \State Regress $y_t$ on $\bx_t$ using $\{(\bx_t,y_t)\}_{t=1}^{T_0}$ to obtain $\hat m$.
        \For{$t=1,\ldots,T_0$}
            \State Set $p_t = \hat m(\bx_t) + a_1 \hat m(\bx_t) + b_1$ and observe demand $y_t$.
        \EndFor
        \State Apply LPR to $\{(a_1 \hat m(\bx_t)+b_1, y_t)\}_{t=1}^{T_0}$ to obtain $\hat F_0$ and $\hat F_0^{(1)}$.
        \State Compute the initial transform estimator
        $
        \hat\phi_{0}^I(u)=u-\frac{1-\hat F_0(u)}{\hat F^{(1)}_0(u)}.
        $
        \State Compute the smoothed version
        $
        \hat\phi_0^S(u)=\int \hat\phi_0^I(u-t)K_{\delta}(t)\,dt,
        $
        and construct $\hat\phi_0$ as $\hat\phi_0(x)=\hat\phi_0^S(x)+b_0 (1-2x)$.
        Set $\hat g_0(u)=u+\hat\phi_0^{-1}(-u)$.
			
        \vspace{.5em}
        \State \emph{Refinement stage:}
        \State Set $t=T_0$.
        \For{$l=1,2,\ldots,L$}
            \State Set $t_l=0$.
            \While{$t_l < 2^l T_0$ and $t < T$}
                \State $t_l \gets t_l+1$, \quad $t \gets t+1$.
                \State Set $p_t=\hat g_{l-1}(\hat m(\bx_t))$ and observe $y_t$. 
                \algorithmiccomment{Greedy pricing given $\hat m$ and $\hat F_{l-1}$.}
            \EndWhile
            \State Let $\mathcal{T}_l$ be the set of time indices used in stage $l$.
            \State Apply local quadratic regression to $\{(p_t-\hat m(\bx_t),y_t)\}_{t\in \calT_l}$ to obtain $\hat F_l$ and $\hat F^{(1)}_l$.
            \State Compute
            $
            \hat\phi_{l}^I(u)=u-\frac{1-\hat F_l(u)}{\hat F^{(1)}_l(u)}.
            $
            \algorithmiccomment{Initial transform estimator.}
            \State Compute
            $
            \hat\phi_l^S(u)=\int \hat\phi_l^I(u-t)K_{\delta_u}(t)\,dt
            $
            \algorithmiccomment{Post-smoothing.}
            \State Construct $\hat\phi_l$ as follows,
            \begin{subequations}
            \label{def:perturbation}
            \begin{align}
            \hat\phi_l(x)=\begin{cases}
            \hat\phi_l^S(v_1)+\frac{c_1}{2}(x-v_1),\quad&\text{if}\ x\in (-\infty,v_1],\\
            \hat\phi_l^S(x), &\text{if}\ x\in [v_1,v_2],\\
            \hat\phi_l^S(v_2)+\frac{c_1}{2}(x-v_2),\quad&\text{if}\ x\in [v_2,\infty),
            \end{cases}
            \end{align}
            \end{subequations}
where $v_1=l_v[\underline{\hat z_{l-1}},\overline{\hat z_{l-1}}]$, $v_2=r_v [\underline{\hat z_{l-1}},\overline{\hat z_{l-1}}]$. And set $\hat g_l(u)=u+\hat\phi_l^{-1}(-u)$.
            \algorithmiccomment{Perturbation as in to ensure range coverage  $\hat g_l$ covers the range of $g$.} 
        \EndFor
        \vspace{.5em}
    \end{algorithmic}  
\end{algorithm}

%% file: theory.tex
\section{Theory}
\label{sec:theory}

In this section, we present our main theory, as well as the limitations of prior work.
We first list a few assumptions to deal with the feature density and mismatch issue. Note that, up to our main theorem, we treat $\hat m$ as a generic given estimator. All probabilistic results are conditional on $\hat{m}$. Later in Corollary~\ref{cor:upper-bound}, we present an end-to-end result where $\hat m$ is estimated via uniform sampling \citep{fan2022policy}.

\subsection{General requirements for the initial estimate}

In \cite{fan2022policy}, data from earlier stages are not fully exploited. To improve statistical efficiency, we impose Assumption~\ref{asp:feature-diversity}, which allows us to leverage all collected data and attain the minimax rate. 
It is a modified version of Assumption \ref{asp:feature-diversity-known-utility}, with the utility function $m$ replaced by the initial estimate $\hat{m}$.

\begin{assumption}[Feature Diversity]
\label{asp:feature-diversity}
Let $J_{\hat m}=[\underline{u},\bar{u}]$ be the range of $\hat m(\bx)$. Define
\[
\delta_{[\underline{u},\bar{u}]}(z):=\min\{(z-\underline{u})^{\kappa},(\bar u-z)^{\kappa}\},
\]
and write $\delta(z)$ when the interval is clear from context. Assume that the density of the utility estimator $\hat m(\bx)$, denoted by $p_{\hat m}(z)$, exists and that there exist constants $0<c_d\le C_d<\infty$ such that
\[
c_d\,\delta(z)\le p_{\hat m}(z)\le C_d\,\delta(z),
\qquad \text{for all } z\in J_{\hat m}.
\]
\end{assumption}

While boundary decay complicates algorithm design, it is partially mitigated in the regret analysis: regions where data are rarely observed correspond to states that contribute less to the regret. 

We also require mild regularity of the covariate distribution, in the same spirit of Assumption~4.3 (Lipschitzness) in \cite{fan2022policy}.  
Our condition is imposed on the conditional mean.  
Let $\hat\Delta:=\hat m-m$ and define
\[
\mathfrak{m}(t)
:= \mathbb{E}\bigl[\hat\Delta(\bx)/\|\hat\Delta\|_\infty \,\big|\, \hat m(\bx)=t\bigr],
\]
where the expectation is taken with respect to the randomness of $\bx$ and regarding $\hat m$ as a fixed function.

\begin{assumption}[Conditional Mean Regularity]
\label{asp:conditional-mean-Lip}
The function $\mathfrak{m}(t)$ is $\bar L$-Lipschitz on its domain for some constant $\bar L$.
\end{assumption}

% \begin{assumption}[Utility Estimation Accuracy]
% \label{asp:conditional-mean-Lip}
% %There exists a function $\epsilon_m(T_0)$ and a probability bound $\delta_m(T_0)$ such that, with probability at least $1-\delta_m(T_0)$, 
% The initial utility estimator $\hat m$ satisfies
% \[
% \|\hat m-m\|_{\infty}\le c\,\epsilon_m
% \]
% for some universal constant $c>0$.  \r{Q:  This is a deterministic bound and is unappealing.  Probably, replace it in the result in Theorem 2}
% \end{assumption}

After stating the main result in Section~\ref{sec:upperbound}, we will give three examples for modeling the mean utility function, from a low-dimensional linear model and nonparametric additive model to a high-dimensional sparse model, to illustrate that Assumptions~\ref{asp:feature-diversity} and \ref{asp:conditional-mean-Lip} hold.  See Corollary~\ref{cor:upper-bound} for the summary of the results.

\subsection{Upper and Lower Bounds on the Regret}\label{sec:upperbound}

In this section, we first state an upper bound on the regret and outline the main proof ideas. Then, we justify the sharpness of our results via a lower bound.
%, where a major part follows the spirit of Section~1.6 in \cite{Tsybakov2009}. 

\begin{theorem}
\label{thm:upper_bound}
Suppose that Assumptions~\ref{asp:smoothness}, \ref{asp:transform-strict-increasing}, \ref{asp:feature-diversity}, and \ref{asp:conditional-mean-Lip} hold. Then, up to logarithmic factors, the regret of the aforementioned 
{\em ILPR} policy satisfies
\[
\operatorname{Regret}(\pi^{\mathrm{ILPR}},T)
\;\lesssim\;
\max\Bigl\{T_0,\; T\,\epsilon_m^2,\; T^{\frac{3}{2\beta+1}}\Bigr\}
\]
with probability at least $1-1/T$, where $\epsilon_m=\|\hat m-m\|_{\infty}$ is the initial utility estimator error.

\end{theorem}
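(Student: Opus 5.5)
The plan is to split the regret into the initial stage and the refinement stages, and to bound each refinement stage by a quadratic-loss argument fed by a uniform estimation guarantee for $\phi$. The initial stage contributes at most $O(T_0)$, since revenue is bounded by $B$. For $t\in\cT_l$ with $l\ge1$, the maps $p\mapsto r(\bx,p)$ are smooth with a nondegenerate maximum at $p^\star=g(m(\bx))$; this follows from Assumption~\ref{asp:transform-strict-increasing}, because $\partial_p r = F'(u)\,(-\phi(u)-m)$ with $u=p-m$. Hence
\[
r(\bx_t,p_t^\star)-r(\bx_t,\hat p_t)\lesssim \bigl(\hat g_{l-1}(\hat m(\bx_t))-g(m(\bx_t))\bigr)^2 .
\]
We decompose the price gap as
\[
\bigl|\hat g_{l-1}(\hat m)-g(\hat m)\bigr| + \bigl|g(\hat m)-g(m)\bigr| .
\]
The second term is $O(\epsilon_m)$ because $g$ is Lipschitz, since $(\phi^{-1})'\le 1/c_\phi$. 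Summed over $t$, it produces the $T\epsilon_m^2$ term. The first term is at most $c_\phi^{-1}\sup_{u\in \widehat\cD_{l-1}}|\hat\phi_{l-1}(u)-\phi(u)|$ on the relevant domain, so everything reduces to a uniform (weighted) bound on $\hat\phi_l-\phi$.

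The core step is an induction over stages. We would show that, with probability at least $1-1/(TL)$ per stage, the range-coverage condition holds, $\hat\phi_l$ is strictly increasing with $\hat\phi_l'\ge c_\phi/2$, and
\[
|\hat\phi_l(u)-\phi(u)|\lesssim \epsilon_m + \rho_l(u),\qquad \rho_l(u)\approx (2^lT_0\, w(u))^{-\frac{\beta-1}{2\beta+1}},
\]
where $w(u)$ is the local design density. To get this, condition on the history up to stage $l-1$. The design points $u_t=\hat\phi_{l-1}^{-1}(-\hat m(\bx_t))$ are then i.i.d., and their density equals $p_{\hat m}(-\hat\phi_{l-1}(u))\,\hat\phi_{l-1}'(u)$. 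By the inductive monotonicity and Assumption~\ref{asp:feature-diversity}, this density is comparable to $\delta(\cdot)$, i.e. bounded below in the interior and decaying polynomially at the ends; range coverage guarantees that the design covers $\cD$. The responses satisfy $\EE[y_t\mid u_t]=1-F(u_t+\hat\Delta(\bx_t))$. Taking the conditional expectation given $u_t$, i.e. given $\hat m(\bx_t)$, gives $1-F(u_t)-F'(u_t)\|\hat\Delta\|_\infty\mathfrak m(\cdot)+O(\epsilon_m^2)$. By Assumption~\ref{asp:conditional-mean-Lip}, this bias is a Lipschitz function of $u_t$, so it perturbs $F$ and $F'$ by $O(\epsilon_m)$ uniformly; the derivative bias is the reason the Lipschitz condition is needed. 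Standard local quadratic regression bounds then give $\hat F_l-F=O(h^\beta+(nh)^{-1/2})$ and $\hat F^{(1)}_l-F'=O(h^{\beta-1}+(nh^3)^{-1/2})$, uniformly via a union bound over a grid with $\log T$ factors. Choosing $h_l\asymp n_l^{-1/(2\beta+1)}$, with $n_l=2^{l-1}T_0$, yields $\rho_l$. The post-smoothing step preserves the sup error while controlling $(\hat\phi_l^S)'-\phi'$, which restores monotonicity. The perturbation $\eta$ costs at most $O(\rho_l)$ and enforces range coverage for the next stage, which closes the induction.

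Summing over stages, stage $l+1$ has regret
\[
\sum_{t\in\cT_{l+1}}\rho_l(u_t)^2+2^l T_0\,\epsilon_m^2 .
\]
Integrating $\rho_l^2$ against the design density, the boundary regions with low density contribute little mass, by the remark after Assumption~\ref{asp:feature-diversity}, so the integral is still $\lesssim n_l\cdot n_l^{-2(\beta-1)/(2\beta+1)}$ up to logs. Summing the geometric series gives $T^{3/(2\beta+1)}$, and a union bound over the $L\lesssim\log T$ stages gives probability $1-1/T$. The main obstacle is the induction near the boundary of $\cD$, where the design density degenerates like $\delta$: the local regression error blows up there, and we must show that the post-smoothing with location-dependent $\delta_u$ and the perturbation $\eta$ keep $\hat\phi_l$ monotone and range-covering without the boundary error propagating across stages. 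I would first try to handle this with a weighted sup-norm, weighting by $\delta(-\phi(u))^{1/(2\beta+1)}$, and verify that the $\kappa$-decay cancels in the regret integral.
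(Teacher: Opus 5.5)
Your architecture matches the paper's. You condition on the previous stage so the designs $u_t=\hat\phi_{l-1}^{-1}(-\hat m(\bx_t))$ are i.i.d.\ with density comparable to $\delta$. You run local quadratic regression with bias, variance and mismatch control; absorbing $\EE[\,\cdot\mid\hat m(\bx_t)]$ into the regression function is equivalent to the paper's split $T^{\textrm{m}}_1+T^{\textrm{m}}_2$ through the Lipschitz $\mathfrak m$. Then you post-smooth, perturb for range coverage, and integrate the pointwise squared error against the $\kappa$-decaying density.

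The local rate you carry through the induction, however, is not what the algorithm's estimator delivers. With the fixed bandwidth $h_l\asymp n_l^{-1/(2\beta+1)}$, the variance term of $\hat F_l^{(1)}$ is $\sqrt{\log T/(nh^3\mu_h(u))}\asymp n^{-\frac{\beta-1}{2\beta+1}}\sqrt{\log T/w(u)}$. So the error blows up like $w^{-1/2}\asymp\alpha(u)^{-\kappa/2}$. It does not blow up like $(nw)^{-\frac{\beta-1}{2\beta+1}}$, which would need a locally adaptive bandwidth, nor like $\delta^{-1/(2\beta+1)}$. A weighted sup-norm with your weights therefore asserts a bound this estimator violates near the edges, and that induction would not close. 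The right weight is the one in the paper's $b(x)=Cn^{-\frac{\beta-1}{2\beta+1}}\sqrt{\log T/\alpha(x)^{\kappa}}+\epsilon_m$, which is also the post-smoothing bandwidth $\delta_x$. With it the cancellation in the regret integral is exact: $\int_{I_v}b^2\,w\lesssim n^{-\frac{2(\beta-1)}{2\beta+1}}\log T+\epsilon_m^2$. One caveat, shared with the paper: the $O(\epsilon_m^2)$ Taylor remainder is not Lipschitz, so it moves $\hat F_l^{(1)}$ by $O(\epsilon_m^2/h)$, which is $O(\epsilon_m)$ only when $\epsilon_m\lesssim h$.

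The step you flag as the main obstacle is where the paper's argument does its work, and your sketch does not yet contain the mechanism. Because the bound above degenerates at the edges, the LPR output is used only on $I_v[\underline{\hat z_{l-1}},\overline{\hat z_{l-1}}]$. Here $v$ is large enough that the uniform-scattering and $\lambda_{\min}(B_n)$ lemmas and the variance bound hold ($nh^3\delta_\kappa(v)\gtrsim\log T$). On the two edge strips, $\hat\phi_l$ is linearly extrapolated with slope $c_1/2$, strictly below the lower bound $c_1$ on $\phi'$. Range coverage for the next stage then follows from $b(v_1)\le\frac{c_1}{2}(v_1-\underline{\hat z_{l-1}})$, i.e.\ $v\gtrsim\bar\delta_v$ (Claim 2). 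Lipschitzness bounds the strip error by $O(b(v_1)+v)$ (Claim 3). Choosing $v\asymp(n^{-\frac{\beta-1}{2\beta+1}}\sqrt{\log T})^{2/(\kappa+2)}+\epsilon_m$ gives $v\asymp\bar\delta_v$, so the strips add only $v^{\kappa+1}\bar\delta_v^2\lesssim n^{-\frac{2(\beta-1)}{2\beta+1}}\log T+\epsilon_m^2$ per round. This balance is what makes the perturbation both range-covering and cheap.

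In the regret step, evaluate the error at the true points $z_t=\phi^{-1}(-\hat m(\bx_t))$, using $\hat\phi_l'\ge c_1/2$, rather than taking a sup over the design domain. The density of these points is fixed, and range coverage of $\hat\phi_{l-1}$ makes the stage design density dominate it on the true domain; this is what justifies integrating $b^2$ against it.

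Two smaller omissions. First, your induction hypothesis also needs an upper bound on $\hat\phi_l'$. That bound makes the design density comparable to $\delta$ and makes $u\mapsto F'(u)\,\mathfrak m(-\hat\phi_{l-1}(u))$ Lipschitz. Second, the base case needs the initial-stage design $a_1\hat m(\bx_t)+b_1$, whose range covers $[\underline z-c_1,\bar z+c_1]$ so the target domain sees no boundary decay. It also needs the shift $b_0(1-2x)$, which gives $\hat\phi_0$ range coverage.
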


The proof is deferred to Appendix~\ref{sec:proof_ub}.  Theorem \ref{thm:upper_bound} holds conditionally on any given initial estimator $\hat{m}$ that satisfies the assumptions.
One common technique for estimating the mean utility is the random pricing during the exploration stage with a period of
$T_0$ and use a regression technique to obtain $\hat m$.  We denote the estimator error $\epsilon_m$ also as $\epsilon_m(T_0)$.  We now verify the conditions using the three useful examples.

\paragraph*{Example 1 (Linear Model)}
Consider the linear utility model 
\begin{align*}m(\bx)=\btheta_0^\top\bx\quad \text{with}\ \mathcal{X} = B^{d}(0,1).
\end{align*}  
%where $\|\btheta_0\|_2$ is a constant.
The shape of the domain is not essential, provided mild smoothness and regularity conditions. For example, a similar result holds with $\mathcal{X}=[0,1]^d$.  We say that a random vector $X$ is \emph{quasi-uniform} on $\mathcal{X}$ if its density $p_X$ is bounded from above and below, namely there exist some constants $0<c\le C<\infty$, such that  $c \le p_X(x)\le C$ for all $x\in\mathcal{X}$.

Since the model is linear, it is natural to estimate it within the class of linear functions using the least-square based on the exploiration data.  Accordingly, we take $\hat m \in \{f:f(\bx)=\btheta^\top \bx\}$. We verify Assumption~\ref{asp:feature-diversity} and \ref{asp:conditional-mean-Lip} for every linear function $f(\bx)=\btheta^\top \bx$.
If $X$ is quasi-uniform on the $d$-dimensional unit ball and $U = \btheta^\top X \in[-\|\btheta\|_2,\|\btheta\|_2]$, then the density of $U$ at $u \in [-\|\btheta\|_2,\|\btheta\|_2]$ is both upper and lower bounded by
\[
\Bigl(1-(u/\|\btheta\|_2)^2\Bigr)^{(d-1)/2}
\]
up to constant factors. Hence, the density $g_Z$ of
\[
Z = \frac{U/\|\btheta\|_2 + 1}{2} \in [0,1],
\]
satisfies,  for some constants $0<c\le C<\infty$,
\[
c\,\bigl(\min\{z,1-z\}\bigr)^{(d-1)/2}
\ \le\ 
g_Z(z)
\ \le\ 
C\,\bigl(\min\{z,1-z\}\bigr)^{(d-1)/2},
\qquad z\in[0,1].
\]
Thus the boundary profile in Assumption~\ref{asp:feature-diversity} holds with $\kappa=(d-1)/2$. 

%Assumption~\ref{asp:conditional-mean-Lip} is satisfied whenever the conditional expectation $\mathbb{E}(\bx \mid \btheta^\top \bx = u)$ is Lipschitz in $u$ for all $\|\btheta\|_2 = 1$.
% \g{
% Is it correct to rephrase the previous statement as the following:
% ``Assumption~\ref{asp:conditional-mean-Lip} is satisfied if for any $\|\btheta\|_2 = 1$, the function $u \mapsto \mathbb{E}(\bx \mid \btheta^\top \bx = u)$ is Lipschitz?''
% Do we need $L$-Lipschitz (as in Assumption 5) or simply Lipschitz? Also, we should provide a proof (put it here if short; otherwise, defer to the appendix).
% }
Denote $\hat m(\bx)=\btheta_1^\top \bx$. Without loss of generality, assume that $\|\btheta_1\|$ is bounded away from zero and infinity, i.e., $1/\tilde C<\|\btheta_1\|<\tilde C$ for some constant $C$. Indeed, one may rescale the norm without increasing the error $\epsilon_m(T_0)$ if $1/\tilde C<\|\btheta_0\|<\tilde C$. Denote $\hat\Delta(\bx_i)/\|\hat\Delta\|_\infty =\btheta_2^\top \bx$. 
We claim that Assumption~\ref{asp:conditional-mean-Lip} is satisfied if 
\begin{equation}
\label{equ:asp:linear-case-lip}\text{for any } \|\btheta\|_2 = 1\text{, the function }  u \mapsto h_{\btheta}(u):=\mathbb{E}(\bx \mid \btheta^\top \bx = u) \text{ is } \bar L/\tilde C\text{-Lipschitz}.\end{equation}
To see this, note that
\begin{align*}
\mathfrak{m}(t)
=\mathbb{E}\Bigl[\btheta_2^\top \bx\,\Big|\, \frac{\btheta_1^\top\bx}{\|\btheta_1\|}=\frac{t}{\|\btheta_1\|}\Bigr]=h\left(\frac{t}{\|\btheta_1\|}\right)
\end{align*}
is $\tilde C\cdot \bar L/\tilde C=\bar L$-Lipschitz.

Condition \eqref{equ:asp:linear-case-lip} can be further reduced to a Lipschitz condition on the density function; see Appendix~\ref{sec:linear-model} for details.

% \g{Assumptions 4 and 5 are imposed on $\hat{m}$, which does not appear above. It is implicitly assumed that $\hat{m}$ is linear.}\b{Revised}

Moreover, with probability at least $1-T_0^{-2}$, the standard least-squares estimator achieves
\[ 
\epsilon_m(T_0) \lesssim \sqrt{d/T_0}.
\]

\paragraph*{Example 2 (Additive Model)}
More generally, $m$ may be nonparametric, for instance, an additive model
\[
m(\bx) = \sum_{j=1}^d m_j(x_j),
\]
with $\mathcal{X}=[0,1]^d$.  
Consider a nonparametric least-squares estimator over a sieve function class $\mathcal{M}:=\left\{
m(x)=\sum_{j=1}^d m_j(x_j): m_j(x_j)=\sum_{k=1}^K \theta_{jk}\phi_k(x_j),
\theta_{jk}\in\RR,\ \gamma_1\le m_j'(x_j)\le\Gamma_1
\right\}$ for orthogonal basis $\{\phi_k\}_{k\ge 1}$,
\[
\hat m = \arg\min_{m \in \mathcal{M}} \frac{1}{T_0}\sum_{i=1}^{T_0} \bigl(Y_i - m(x_i)\bigr)^2,
\]
where $(x_i,Y_i)$ are i.i.d.\ samples.  

Under suitable regularity conditions on the covariate distribution, Assumptions~\ref{asp:feature-diversity}--\ref{asp:conditional-mean-Lip} are satisfied. When \{$m_j\}$ are $\gamma$-smooth, with probability at least $1-T_0^{-2}$, this estimator achieve
\[ 
\epsilon_m(T_0) \lesssim \sqrt{d}T_0^{-\frac{\gamma}{2\gamma+1}}.
\]
Further conditions and details are given in Appendix~\ref{sec:additive-model}.

\paragraph*{Example 3 (Sparse Linear Model)}
Consider the sparse linear model
\[
m(\bx)=\btheta^\top \bx,\qquad \|\btheta\|_0\le s,
\]
with covariates $\bx\sim\mathrm{Unif}([0,1]^d)$.
Let
\[
\hat\btheta^{\mathrm{lasso}}
=\arg\min_{\beta\in\RR^d}
\Big\{
\frac{1}{2T_0}\sum_{i=1}^{T_0} (By_i-\beta^\top \bx_i)^2
+\lambda\|\beta\|_1
\Big\},
\qquad
\lambda=C \sqrt{\frac{\log (dT_0)}{T_0}}.
\]

Since the covariate is sub-Gaussian, it satisfies the restricted eigenvalue condition~\citep{bickel2009simultaneous,negahban2012unified} on the true
support $S=\mathrm{supp}(\btheta)$. 
%Theorem 7.13 in \cite{wainwright2019high} gives
%~\footnote{One may obtain better dependence on $s$ by using irrepresentable conditions, but we omit it.}
%\[
%\|\hat\btheta^{\mathrm{lasso}}-\btheta\|_\infty
%\le \|\hat\btheta^{\mathrm{lasso}}-\btheta\|_2\le C_\infty\lambda\sqrt{s}
%\]
%with probability at least $1-T_0^{-2}$ for some constant $C_\infty>0$.

Theorem 5.5 of \cite{fan2020statistical} reveal that $\|\hat\btheta-\btheta\|_1\lesssim
s \sqrt{\frac{\log d}{T_0}}$ with probability tending to one exponentially fast.
Consequently, the initial utility error is bounded by using
\[
\|\hat m-m\|_\infty
=
\sup_{\bx\in[0,1]^d}
|(\hat\btheta-\btheta)^\top\bx|
\le
\|\hat\btheta-\btheta\|_1
\lesssim
s \sqrt{\frac{\log d}{T_0}}.
\]

Define the hard-thresholded estimator
\[
\hat\btheta_j
:=\hat\btheta^{\mathrm{lasso}}_j\,
\mathbf 1\{|\hat\btheta^{\mathrm{lasso}}_j|\ge C_\infty\lambda\sqrt{s}\},
\qquad j\in[d].
\]
If the beta-min condition $\min_{j\in S}|\theta_j|>3C_\infty\lambda$ holds, then
$\mathrm{supp}(\hat\btheta)\subseteq S$ and
\[
\|\hat\btheta-\btheta\|_\infty\lesssim\lambda\sqrt{s}.
\]

Finally, conditional on the recovered support, the model reduces to a
low-dimensional linear regression of dimension at most $s$. In particular,
Assumptions~\ref{asp:feature-diversity} and~\ref{asp:conditional-mean-Lip} are
satisfied with constants depending only on $s$ and not on $d$.

\vspace{2em}
To summarize, we obtain the following corollary of Theorem~\ref{thm:upper_bound}.
\begin{corollary}
\label{cor:upper-bound}
Using the aforementioned uniform pricing and the associated statistical learning, we have
\begin{enumerate}
	\item For linear model utility in Example 1, we have $\epsilon_m\lesssim (d/T_0)^{1/2}$ with probability at least $1-T_0^{-2}$. Choosing $T_0=\lceil\sqrt{dT}\rceil$ yields, with probability at least $1-2/T$,
\[
\operatorname{Regret}(\pi^{\mathrm{ILPR}},T)
\;\lesssim\;
\max\{\sqrt{dT},T^{\frac{3}{2\beta+1}}\}.
\]

\item For additive model utility in Example 2, we have $\epsilon_m\lesssim T_0^{-\frac{\gamma}{2\gamma+1}}$ with probability at least $1-T_0^{-2}$. Choosing $T_0=\lceil (Td)^{\frac{2\gamma+1}{4\gamma+1}}\rceil$ yields, with probability at least $1-2/T$,
\[
\operatorname{Regret}(\pi^{\mathrm{ILPR}},T)
\;\lesssim\;
\max\{(Td)^\frac{2\gamma+1}{4\gamma+1},T^{\frac{3}{2\beta+1}}\}.
\]

\item For sparse linear model utility in Example 3, we have $\epsilon_m\lesssim T_0^{-1/2}$ with probability at least $1-T_0^{-2}$. Choosing $T_0=\lceil\sqrt{s^3T\log d}\rceil$ yields, with probability at least $1-2/T$,
\[
\operatorname{Regret}(\pi^{\mathrm{ILPR}},T)
\;\lesssim\;
\max\{\sqrt{s^3 T\log d},T^{\frac{3}{2\beta+1}}\}.
\]
\end{enumerate}
\end{corollary}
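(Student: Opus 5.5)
The corollary follows by combining three ingredients for each example: the high-probability bound on $\epsilon_m(T_0)$, the verification of Assumptions~\ref{asp:feature-diversity} and \ref{asp:conditional-mean-Lip} for the resulting $\hat m$, and the master bound of Theorem~\ref{thm:upper_bound}. We then choose $T_0$ to balance $T_0$ against $T\epsilon_m^2$. Since Theorem~\ref{thm:upper_bound} is stated conditionally on $\hat m$, we intersect two events: the estimation event, of probability at least $1-T_0^{-2}$, on which the error bound and the structural properties of $\hat m$ hold; and the event of Theorem~\ref{thm:upper_bound} given $\hat m$, of probability at least $1-1/T$. Because $T_0\ge\sqrt T$ in every case, we have $T_0^{-2}\le 1/T$, and a union bound gives probability at least $1-2/T$.

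\emph{Example 1.} The least-squares estimator obtained from uniform pricing satisfies $\epsilon_m\lesssim\sqrt{d/T_0}$. This holds because $\EE[By\mid\bx]=m(\bx)$ and $By$ is bounded, so standard sub-Gaussian OLS bounds apply, together with $\sup_{\bx\in B^d(0,1)}|(\hat\btheta-\btheta)^\top\bx|=\|\hat\btheta-\btheta\|_2$. The feature diversity profile with $\kappa=(d-1)/2$ and the Lipschitz property \eqref{equ:asp:linear-case-lip} were verified above for every linear $\hat m$ whose norm is bounded away from $0$ and $\infty$; on the estimation event this is ensured by $\|\btheta_0\|$ being bounded. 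Plugging in, $T\epsilon_m^2\lesssim dT/T_0$. Taking $T_0=\lceil\sqrt{dT}\rceil$ makes both terms $\asymp\sqrt{dT}$, which yields $\max\{\sqrt{dT},T^{3/(2\beta+1)}\}$.

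\emph{Examples 2 and 3.} For the additive model we use the sieve rate $\epsilon_m\lesssim\sqrt d\,T_0^{-\gamma/(2\gamma+1)}$ from Appendix~\ref{sec:additive-model}. This gives $T\epsilon_m^2\lesssim dT\,T_0^{-2\gamma/(2\gamma+1)}$. Setting this equal to $T_0$ gives $T_0^{(4\gamma+1)/(2\gamma+1)}=dT$, so $T_0\asymp(dT)^{(2\gamma+1)/(4\gamma+1)}$. The monotonicity constraints $\gamma_1\le m_j'\le\Gamma_1$ in the sieve class are what deliver the density bounds and conditional-mean Lipschitzness for $\hat m$; we cite the appendix for this. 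For the sparse model, on the support-recovery event (lasso rate plus beta-min) we have $\|\hat m-m\|_\infty\le\|\hat\btheta-\btheta\|_1\le s\|\hat\btheta-\btheta\|_\infty\lesssim s^{3/2}\sqrt{\log(dT_0)/T_0}$. Hence $T\epsilon_m^2\lesssim s^3T\log d/T_0$, and $T_0=\lceil\sqrt{s^3T\log d}\rceil$ balances the two terms, up to the logarithmic factors already absorbed by Theorem~\ref{thm:upper_bound}. Conditionally on the recovered support, $\hat m$ is an $s$-dimensional linear function, so the Example 1 verification applies with $d$ replaced by $s$.

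The main obstacle is that Assumptions~\ref{asp:feature-diversity}--\ref{asp:conditional-mean-Lip} concern the \emph{random} estimator $\hat m$. They must therefore hold uniformly over all estimators that can arise on the good event, with constants that do not depend on $T_0$. For the linear and sparse cases, uniformity over directions $\btheta$ handles this. For the additive case, I would rely on the derivative constraints and the appendix argument. A minor point is confirming that the exploration-phase regret is $O(T_0)$; this is immediate because prices are bounded in $\cP$.
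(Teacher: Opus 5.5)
Your proposal is correct and follows the paper's route: the corollary is obtained by plugging the high-probability rates for $\epsilon_m$ from Examples~1--3 into Theorem~\ref{thm:upper_bound}, choosing $T_0$ to balance $T_0$ against $T\epsilon_m^2$, and union-bounding the estimation event with the theorem's conditional event. Your bookkeeping ($T_0\ge\sqrt{T}$ hence $T_0^{-2}\le 1/T$, the $\sqrt{d}$ factor in the additive rate, and $\|\hat\btheta-\btheta\|_1\le s\|\hat\btheta-\btheta\|_\infty\lesssim s^{3/2}\lambda$ on the recovered support, which is what produces the $s^3$) just makes explicit what the paper leaves implicit; the one soft spot, which the paper shares, is the sparse case, where the design is uniform on a cube rather than a ball, so the direction-uniform ball argument of Example~1 does not transfer verbatim (e.g.\ $\hat\btheta_S=(1,\eta)$ with $\eta\to 0$ breaks the constants) and the appropriate tool is the pushforward-density proposition of Appendix~\ref{sec:additive-model}, giving $\kappa=s-1$ with uniform constants only when the recovered coefficients are bounded away from zero.
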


\begin{remark}
In the above bounds, the first term captures the error from estimating the mean
utility, while the second term reflects to the error from estimating the
nonparametric link function. When $\beta \ge 5/2$, the nonparametric estimation
error saturates and is always dominated by the utility estimation error.
\end{remark}

%\subsection{Lower Bound}

For the linear utility class, we obtain the first minimax regret bound for every smoothness parameter $\beta\ge 2$. The proof is deferred to Appendix~\ref{sec:proof_lb}. To give an overview, we consider two scenarios, both with dimension $d=1$.
 
In the \textit{first} scenario, define the function classes $\mathfrak{F}_1$ and $\mathfrak{M}_1$ as
\begin{align*}
\mathfrak{F}_1
&=\Big\{F_{a,b}:\RR\to[0,1]\;\Big|\;
F_{a,b}(u):=\calT_{[0,1]}(au+b),\ a>0\Big\},\\
\mathfrak{M}_1
&=\Big\{m_\theta:[0,1]\to\RR\;\Big|\;m_\theta(x):=\theta x\Big\},
\end{align*}
where $\calT_{[0,1]}(x)$ clips $x \in \RR$ to $[0,1]$.

In the \textit{second} scenario, fix $m(x)=x$ so that $\mathfrak{M}_2=\{\operatorname{Id}\}$. Define the baseline CDF on $[-1/4,1/4]$ as 
$
F_0(u)=u+\frac{1}{2}, \quad \forall u\in [-1/4,1/4]$
{and extend it smoothly to a CDF on $\RR$ so that $F_0$ is $\beta$-smooth on $\RR$ and coincides with $u\mapsto u+1/2$ on $[-1/4,1/4]$.}
We then construct a family of CDFs by adding small localized bumps onto the baseline. Specifically, pick an integer parameter $N$ and set $h=1/N$. Choose an infinitely smooth function $V$ with support $[-1,1]$. Discretize $[-1/4,1/4]$ into $N$ evenly spaced intervals $I_1,\cdots, I_N$ and denote by $\mu_j$ the midpoint of $I_j$. For each $\iota=(\iota_1,\ldots,\iota_N)\in \{0,1\}^{N}$, define
\begin{align*}
F_{\iota}(u)
=
F_0(u)+\sum_{j=1}^N \iota_j\, \rho\, h^\beta \,
V\!\left(\frac{u-\mu_j}{h}\right),\qquad u\in I_j,
\end{align*}
Let $\mathfrak{F}_2$ denote the collection of all such functions.

\begin{theorem}
\label{thm:lower_bound}
Let $\mathfrak{F}_{l}$ and $\mathfrak{M}_{l}$ ($l=1,2$) be function classes for $F$ and $m$ (linear model) defined above. They satisfy Assumption~\ref{asp:smoothness} and \ref{asp:transform-strict-increasing}; in addition, Assumptions~\ref{asp:feature-diversity} and \ref{asp:conditional-mean-Lip} holds for every function in $\mathfrak{M}_{l}$ ($l=1,2$). 
%\g{Assumption 3 also needs to be mentioned somewhere, as it was used for developing the upper bound.}
%\g{Move their definitions from the appendix to here. Also, Assumptions 4 and 5 are imposed on $\hat{m}$ rather than $m$.}
There exists a constant $\tilde c>0$ such that
\[
\inf_\pi\sup_{(m,F)\in (\mathfrak{M}_1,\mathfrak{F}_1)\cup (\mathfrak{M}_2,\mathfrak{F}_2)}
\mathbb{E}\bigl[\operatorname{Regret}(\pi, T)\bigr]
\;\ge\;
\tilde c\, T^{\max\{\frac{3}{2\beta+1},\frac{1}{2}\}}.
\]
\end{theorem}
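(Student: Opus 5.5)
The lower bound is the maximum of two separate bounds, $\sqrt{T}$ and $T^{3/(2\beta+1)}$, because the supremum runs over the union of the two scenarios. I would prove the bound for each scenario separately, starting with the verification of the assumptions.

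\emph{Verification of the assumptions.} In both scenarios $d=1$ and $m$ is linear in $x\in[0,1]$. Take $\bx$ uniform on $[0,1]$. Then $\hat m(\bx)$ has a density that is constant on its range, so Assumption~\ref{asp:feature-diversity} holds with $\kappa=0$. The conditional mean $\mathfrak{m}(t)$ is affine in $t$, so Assumption~\ref{asp:conditional-mean-Lip} holds. For Assumption~\ref{asp:smoothness} in $\mathfrak{F}_2$, each bump has $\beta$-Hölder seminorm of order $\rho$, since $\rho h^\beta V((u-\mu_j)/h)$ is a rescaling; the bumps have disjoint supports, so the bound is uniform over $\iota$. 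In $\mathfrak{F}_1$, $F_{a,b}$ is linear on the relevant range. For Assumption~\ref{asp:transform-strict-increasing}:
\begin{itemize}
\item For a linear CDF, $\phi(u)=2u+b/a-1/a$ is strictly increasing.
\item For $F_\iota$, $\phi'=2+(1-F)F''/F'^2$. The perturbation of $F''$ is $O(\rho h^{\beta-2})$, which is at most $O(\rho)$ since $\beta\ge2$. So $\phi'\ge 1$ once $\rho$ is small.
\end{itemize}
Prices must be restricted so that the optimal $u^*=\phi^{-1}(-m)$ stays inside $[-1/4,1/4]$, where the bumps live. This may require shifting $m$ by a constant; I will adjust the construction accordingly.

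\emph{Scenario 1: the $\sqrt{T}$ bound.} I would use a two-point argument in the style of Le Cam or Bayesian Cramér--Rao, as in parametric pricing lower bounds. Take $F$ linear, so $1-F(p-\theta x)=\alpha+\gamma\theta x-\gamma p$ is a linear demand model. Perturb $\theta$ by $\Delta$; this moves the optimal price by order $\Delta$. The key feature is the uninformative-price phenomenon: there is a price curve at which the two models give the same purchase probability. A policy can only distinguish $\theta$ from $\theta+\Delta$ by pricing away from that curve, which costs regret.
\begin{enumerate}
\item Per-round KL divergence is $\lesssim \Delta^2(p-p_0(x))^2$, where $p_0$ is the confounding price.
\item Per-round regret is $\gtrsim (p-p^*_\theta(x))^2$.
\item Choose $a,b$ so that $p_0$ coincides with the optimal price of the midpoint model, as in the Broder--Rusmevichientong construction.
\item Then total KL is $\lesssim \Delta^2\cdot\mathrm{Regret}$. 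Either regret is $\gtrsim T\Delta^2$, or KL is $\gtrsim 1$, which forces regret $\gtrsim \Delta^{-2}$.
\item Setting $\Delta=T^{-1/4}$ gives $\sqrt{T}$.
\end{enumerate}
Since $(a,b)$ can vary in $\mathfrak{F}_1$, I can use $a,b$ to tune the confounding price. Alternatively, I can perturb $(\theta,b)$ jointly so that the two demand lines cross at the optimal price.

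\emph{Scenario 2: the $T^{3/(2\beta+1)}$ bound.} I would use an Assouad argument over the hypercube $\iota\in\{0,1\}^N$. Since $m(x)=x$ is known, the nuisance is only $F$. With $\bx$ uniform, the optimal $u^*$ sweeps $[-1/4,1/4]$ with bounded density, so each bump interval $I_j$ is visited with probability about $h$ per round.
\begin{enumerate}
\item Flipping $\iota_j$ changes $F'$ by $\rho h^{\beta-1}$ on $I_j$. This shifts $\phi$, and hence the optimal price for $x$ with $u^*\in I_j$, by order $h^{\beta-1}$.
\item The regret from mispricing on bump $j$ is then $\gtrsim T h\cdot h^{2\beta-2}$, unless the policy can identify $\iota_j$.
\item The information about $\iota_j$ comes only from prices with $u\in I_j$. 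Each such observation contributes KL $\lesssim h^{2\beta}$, since the purchase probabilities differ by $\rho h^\beta$ and are bounded away from $0$ and $1$.
\item The number of rounds with $u\in I_j$ is at most about $T$, so the total KL over all rounds is $\lesssim T h^{2\beta}$. Choosing $h\asymp T^{-1/(2\beta+1)}$ keeps this $O(1)$ even with zero exploration cost.
\item Assouad then yields regret $\gtrsim N\cdot T h^{2\beta-1}=T h^{2\beta-2}=T^{3/(2\beta+1)}$.
\end{enumerate}

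\emph{Main obstacle.} The delicate step is the Assouad reduction: showing that the per-coordinate regret is lower bounded by a test error for $\iota_j$. A price decision on bump $j$ must be close to one of two separated optima $p^*_{\iota_j=0},p^*_{\iota_j=1}$, separated by order $h^{\beta-1}$. The mispricing loss can then be charged to a misclassification of $\iota_j$, with the KL for that coordinate controlled uniformly over the other coordinates.
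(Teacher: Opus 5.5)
There is a genuine gap in Step~4 of your Scenario~2. With $h\asymp T^{-1/(2\beta+1)}$ you get $Th^{2\beta}\asymp T^{1/(2\beta+1)}=N\to\infty$, not $O(1)$. So your per-coordinate bound $\mathrm{KL}_j\lesssim Th^{2\beta}$, which comes from ``at most about $T$ rounds land in $I_j$'', gives Assouad nothing to work with. This is not loose bookkeeping. A policy really can identify any \emph{single} $\iota_j$ by pricing so that $u_t\in I_j$ most of the time. Hence no KL bound that is uniform over coordinates and policies can be $O(1)$ at this $h$. The same misconception appears in your ``main obstacle'' paragraph, where you ask for the KL of each coordinate to be controlled uniformly. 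If you instead enforce $Th^{2\beta}\lesssim 1$, you are forced to take $h\asymp T^{-1/(2\beta)}$. That only yields $Th^{2\beta-2}=T^{1/\beta}$, which is strictly smaller than $T^{3/(2\beta+1)}$ for $\beta>1$.

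The missing idea is that the $T$ observations are shared among the $N$ bumps.
\begin{itemize}
\item At round $t$, bound the KL between the pair $(\iota_{-j},0)$ and $(\iota_{-j},1)$ by $C\rho^2h^{2\beta}\,\EE_{(\iota_{-j},0)}[n_j]$. Here $n_j$ counts the earlier rounds with $u_s$ in the support of bump $j$.
\item Average over $\iota_{-j}$ and use $\sum_j n_j\le T$ (the supports are disjoint). The average per-coordinate KL is then of order $\rho^2Th^{2\beta}/N=\rho^2Th^{2\beta+1}=O(\rho^2)$.
\item Pinsker's inequality and concavity of $\sqrt{\cdot}$ give $\frac1N\sum_j(1-\mathrm{TV}_j)\ge 1-C\rho$. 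Summed over $t$, your Step~5 then goes through.
\end{itemize}
The paper avoids this accounting by using Fano instead of Assouad. It takes a Gilbert--Varshamov packing $\Theta$ with $\log|\Theta|\gtrsim N$ and pairwise Hamming distance $\gtrsim N$. It bounds the full-trajectory KL crudely by $T\sup_u\mathrm{KL}\lesssim\rho^2Th^{2\beta}$. It then only needs this to be $\lesssim\log|\Theta|\asymp 1/h$, which is exactly the condition $Th^{2\beta+1}\lesssim 1$. The nearest-candidate test built from the round-$t$ price function errs with constant probability. The $L^2$ separation $\gtrsim\rho^2h^{2\beta-2}$ between packing elements then gives $T^{3/(2\beta+1)}$.

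Your Scenario~1 is the paper's argument: a Broder--Rusmevichientong two-point construction with $\mathrm{KL}\lesssim\tilde\epsilon^2\,\mathrm{Regret}_0$, $\mathrm{Regret}_0+\mathrm{Regret}_1\gtrsim T\tilde\epsilon^2e^{-\mathrm{KL}}$, and $\tilde\epsilon=T^{-1/4}$. One caution applies. An uninformative price valid for every $x$ requires the two instances to have different slopes $a$. With a common $a$, changing $(\theta,b)$ shifts the purchase probability by $a(\theta_0-\theta_1)x-(b_0-b_1)$, which does not depend on $p$. So your stated alternative has no crossing price. The paper perturbs $(a,b,\theta)$ jointly so that the demand lines cross exactly at $p_0^*(x)$. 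Your remark that $u^*=1/4-m/2$ must land in the bump region is well taken. Your choice $\bx\sim\mathrm{Unif}[0,1]$ makes it cover $[-1/4,1/4]$.
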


\subsection{Comparisons with prior work}
The algorithm in \cite{fan2022policy} essentially employs an initial uniform pricing phase to estimate $m$ and $F$, followed by the static greedy pricing using these estimates for the remaining periods. Besides the algorithm design, the nonparametric estimator used therein also has several limitations, as detailed below.

\paragraph*{Limitations of the Nadaraya--Watson Estimator}
\cite{fan2022policy} employ the Nadaraya--Watson (NW) estimator. However, it exhibits two structural limitations, as detailed below.  
They start by estimating the nonparametric link (noise CDF)
\[
\hat F_l(u) = 1 - \frac{h_l(u)}{f_l(u)},
\qquad
h_l(u) = \frac{1}{n_l b_l}\!\sum_{t\in\mathcal{T}_l} K\!\left(\frac{w_t-u}{b_l}\right)y_t,\quad
f_l(u) = \frac{1}{n_l b_l}\!\sum_{t\in\mathcal{T}_l} K\!\left(\frac{w_t-u}{b_l}\right).
\]
The derivative estimator is
\[
\hat F_l^{(1)}(u)
= -\frac{h_l^{(1)}(u)f_l(u) - f_l^{(1)}(u)h_l(u)}{f_l^2(u)}.
\]
Then they obtain the plug-in estimator
\[
\hat\phi_l(u)=u-\frac{1-\hat F_l(u)}{\hat F_l^{(1)}(u)},\qquad
\hat g_{l-1}(\bx_t)=\hat m(\bx_t)+\hat\phi_{l-1}^{-1}\!\bigl(-\hat m(\bx_t)\bigr),
\]

The NW estimator exhibits two severe issues as detailed below.
\begin{itemize}
	\item\emph{Error-in-variables.}  
As discussed before, the observed covariates are given by
\[
u_t=p_t-\hat m(\bx_t) = \hat\phi_{l-1}^{-1}\!\bigl(-\hat m(\bx_t)\bigr),
\]
while the true regression target corresponds to  
$\tilde u_t=p_t - m(\bx_t)$, leading to a discrepancy of $\hat m(\bx_t)-m(\bx_t)$.  
Classical methods for the error-in-variables problem—e.g., deconvolution \citep{fan1993nonparametric} or instrumental variables \citep{adusumilli2018nonparametric}—do not directly apply because the errors here depend on the covariates. %In particular, expanding and analyzing the error-in-variable problem is much more complex than the local polynomial estimator.

\item  \emph{Smoothness of the covariate distribution.}  
The NW estimator requires the covariate density to be as smooth as the regression
function. 
In our setting, $\phi$ depends on $F'$ and may not inherit this smoothness.  
Attempting to relate the observed covariate distribution to that of the oracle covariate $\phi^{-1}(-m(\bx_t))$ via Taylor expansion produces leading error terms that cannot be controlled at the optimal rate.  
Hence, NW estimators are fundamentally limited in this setting.  
Moreover, \cite{fan2022policy} assumes a smooth covariate density, which cannot hold for the covariates induced by our algorithm.
\end{itemize}

%% file: simulation.tex
\section{Numerical experiments}

We conduct extensive numerical experiments to back up our theory, including simulations and semi-real data~\footnote{The code can be found in \href{https://github.com/jinhangc/ILPR}{https://github.com/jinhangc/ILPR}.}.
%\g{Prepare a GitHub repository and provide the link here. Also, make sure to describe all the experimental details in the main body or the appendix.}

\subsection{Simulations}
\label{sec:simulation}

We first evaluate our proposed pricing algorithm in a synthetic environment, considering both cases of known and unknown utility.

\paragraph*{Data-generating process}
At each round $t=1,\dots,T$, a context $x_t \sim \mathrm{Unif}([x_{\min},x_{\max}])$ is observed and the learner posts a price $p_t \in [P_{\min},P_{\max}]$. The buyer purchases with probability
\[
\Pr(D_t=1 \mid x_t,p_t) \;=\; 1 - F\!\big(p_t - m(x_t)\big),
\]
where $F$ is an unknown CDF supported on $[-0.3,0.3]$ and $m(\cdot)$ is a mean-utility function. We choose $P_{\min}=0, P_{\max}=1, x_{\min}=0.35, x_{\max}=0.65$.
We generate $F$ from a smooth baseline CDF on $[-0.25,0.25]$ and add alternating-sign smooth ``bumps'' so that $F$ remains $C^4$-smooth while exhibiting local non-monotone features in its density. Figure~\ref{fig:truth} plots the ground truth functions, with further details given in Appendix~\ref{sec:experiment-detail}. We consider linear utility $m(x)=\theta x$ with $\theta=1$ throughout.

\begin{figure}
    \centering
    \includegraphics[width=\linewidth]{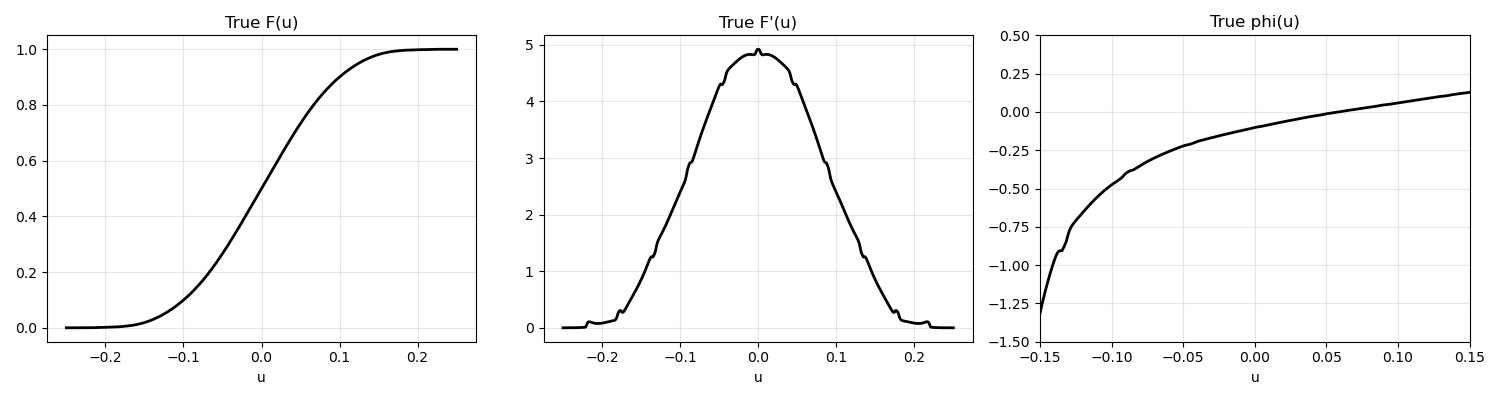}
    \caption{Ground truth functions of the noise CDF, with smoothness $\beta=2$, and $10$ knots}
    \label{fig:truth}
\end{figure}

\paragraph*{Evaluation}
For each horizon $T$ and smoothness parameter $\beta$, we report the average cumulative regret over $N_{\mathrm{trials}}=200$ independent runs.
% We additionally visualize the estimated $(\hat F,\hat F',\hat\phi)$ along with the induced pricing rule $x\mapsto \hat p(x)$.
We plot the curves on a log–log scale and estimate the slope using linear regression. In addition, we apply the cluster bootstrap method~\citep{cameron2008bootstrap}, performing 2,000 bootstrap refits to construct confidence intervals for the slope estimator.

\paragraph*{Known utility}
We consider first the case with known utility (Section~\ref{sec:known-utility}).
The learner proceeds in stages:
(i) an exploration phase of length $T_0=100$ using uniformly random prices to obtain an initial estimate of $F$ and $F'$ mainly via local polynomial regression with Epanechnikov kernel on $\E[ Y \mid u]$ with $u=p-\hat m(x)$;
(ii) exploitation phase, where the price is chosen by the estimated optimal pricing policy,
\[
p_t \;=\; \Pi_{[P_{\min},P_{\max}]}\Big(\hat\phi^{-1}\!\big(-\hat m(x_t)\big) + \hat m(x_t)\Big),
\]
with stagewise refits of $\hat\phi$ at doubling stage time windows.  When using the local polynomial regression, we calculate both the function value and derivative on a grid of $300$ with bandwidth $h=0.5\cdot n^{-1/(2\beta+1)}$.  To improve stability, we then conduct padding on the estimated CDF: $\hat F(u)=0$ for $u\le -0.3$ and $\hat F(u)=1$ for $u\ge 0.3$, and we floor the density estimate inside the support with $10^{-3}$. Then we apply post-smoothing with the specified variable bandwidth, with details in Appendix~\ref{sec:experiment-detail}. After that, we add additional monotonic regression on the estimated $\hat\phi$ to further enforce stability. Figure \ref{fig:known} depicts the growth of total regret over the time horizon, and Table~\ref{tab:expo_known} reports the least-square estimate of the regret exponent and corresponding confidence intervals. As shown in the figure and table, our proposed method achieves an error exponent that closely matches the theoretical benchmark.

\begin{figure}[htbp]
    \centering
    \includegraphics[width=0.7\textwidth]{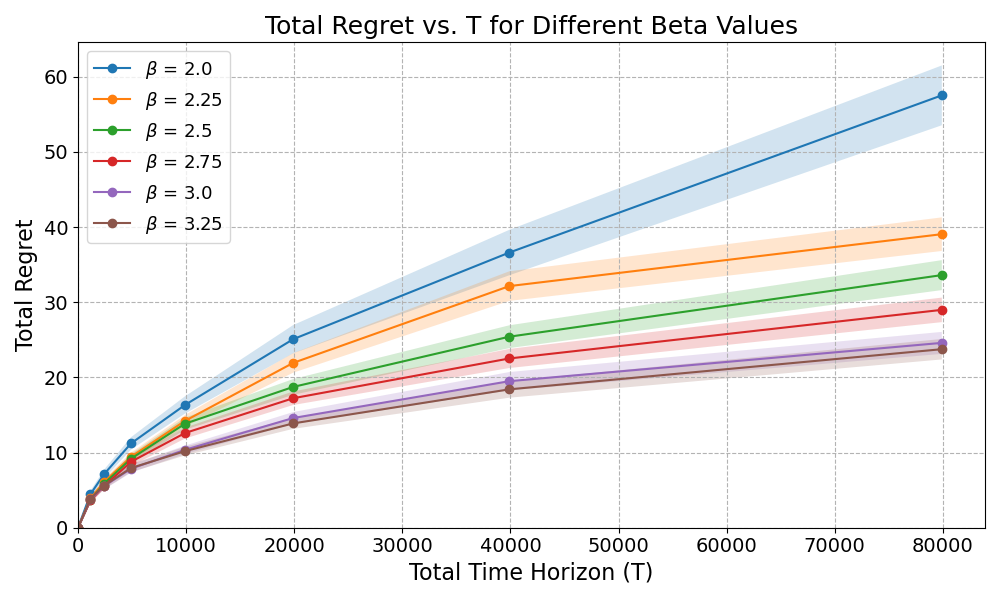}
    \caption{\textbf{Known utility} 
    The x-axis stands for total time horizon, and the y-axis stands for total regret. 
    The solid lines represent average regrets, and the shaded areas reflect the standard deviations across trials. Here, blue, orange, blue, red, purple, and brown components correspond to the cases with $\beta=2,2.25,2.5,2.75,3,3.25$ respectively. }
    \label{fig:known}
\end{figure}

\begin{table}[]
    \centering
    \begin{tabular}{c|ccc}
        $\beta$ & $\widehat{\text{slope}}$ & Theory & CI \\
        \hline
        2.00 & 0.595 & 0.600 & [0.540, 0.649] \\
        2.25 & 0.552 & 0.545 & [0.510, 0.595] \\
        2.50 & 0.499 & 0.500 & [0.454, 0.544] \\
        2.75 & 0.470 & 0.462 & [0.425, 0.515] \\
        3.00 & 0.434 & 0.429 & [0.382, 0.485] \\
        3.25 & 0.415 & 0.400 & [0.363, 0.466] \\
    \end{tabular}
    \caption{Least-squares estimates of the regret exponent (so that regret scales as $T^{\widehat{\text{slope}}}$) for each $\beta$ with known utility. Brackets indicate confidence intervals obtained via cluster bootstrap. Our proposed method does not differ significantly from the theoretical one in terms of the regret exponent.}
    \label{tab:expo_known}
\end{table}

\paragraph*{Unknown utility}
We then consider unknown utility. The only difference is that we insert an initial utility-estimation phase of length $T_{0m}=\lceil\sqrt{4T}\rceil$ using uniformly random prices to collect the data and to fit $\hat m$ by OLS from binary outcomes. The results are shown in Figure \ref{fig:unknown}, while Table~\ref{tab:expo_unknown} reports the least-square estimate of the regret exponent and corresponding confidence intervals. As shown in the figure and table, our proposed method does not differ significantly from the theoretical one in terms of the error exponent.

\begin{figure}[htbp]
    \centering
    \includegraphics[width=0.7\textwidth]{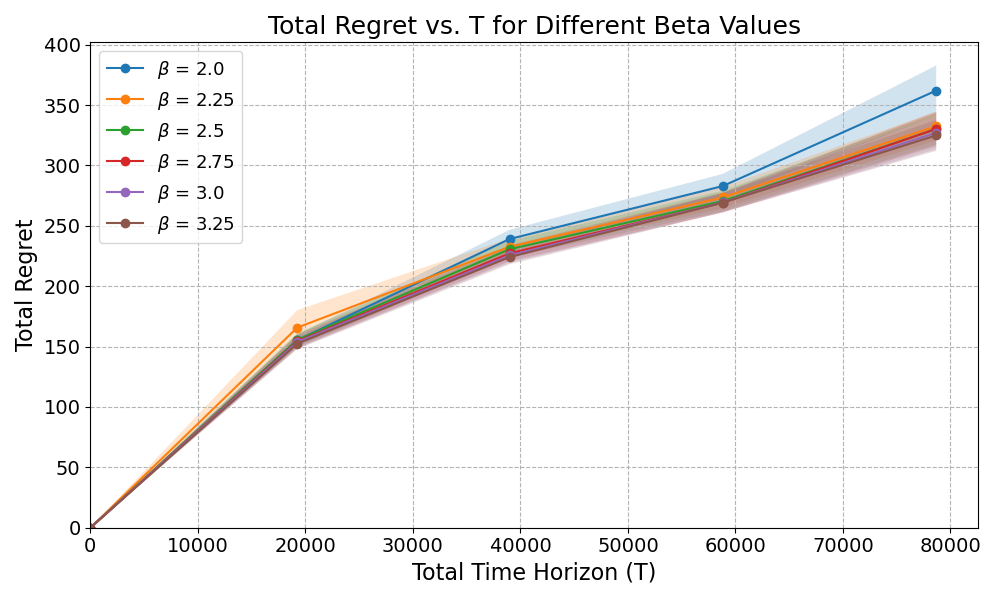}
    \caption{\textbf{Unknown utility.} The solid lines represent average regrets, and the shaded areas reflect the standard deviations across trials. }
    \label{fig:unknown}
\end{figure}

\begin{table}[htbp]
    \centering
    \begin{tabular}{c|ccc}
        $\beta$ & $\widehat{\text{slope}}$ & Theory & CI \\
        \hline
        2.00 & 0.582 & 0.600 & [0.555, 0.609] \\
        2.25 & 0.533 & 0.545 & [0.499, 0.568] \\
        2.50 & 0.512 & 0.500 & [0.488, 0.536] \\
        2.75 & 0.518 & 0.500 & [0.494, 0.542] \\
        3.00 & 0.516 & 0.500 & [0.492, 0.540] \\
        3.25 & 0.509 & 0.500 & [0.486, 0.532] \\
\end{tabular}
    \caption{Least-squares estimates of the regret exponent (so that regret scales as $T^{\widehat{\text{slope}}}$) for each $\beta$ with unknown utility. Brackets indicate confidence intervals obtained via cluster bootstrap. Our proposed method does not differ significantly from the theoretical one in terms of the regret exponent.}
    \label{tab:expo_unknown}
\end{table}

\paragraph*{Comparison with other methods}
We conduct additional experiments to compare our method with the kernel-based policy proposed in \cite{fan2022policy} and the DIP policy proposed by \cite{luo2024distribution}. We use the same ground-truth demand function as before with $\beta=2$.  Figure~\ref{fig:regret-comparison} shows the regret growth averaged over $50$ trials. The shaded area represents the standard error. We moderately tune the hyperparameters (bandwidth, exploration length, etc.) for all three policies. More implementation details are left in the Appendix~\ref{sec:experiment-detail}. 
%\g{Need to provide all details, at least in the appendix.}
The result shows that our method significantly outperforms both baselines.

\begin{figure}[htbp]
    \centering
    \includegraphics[width=0.7\textwidth]{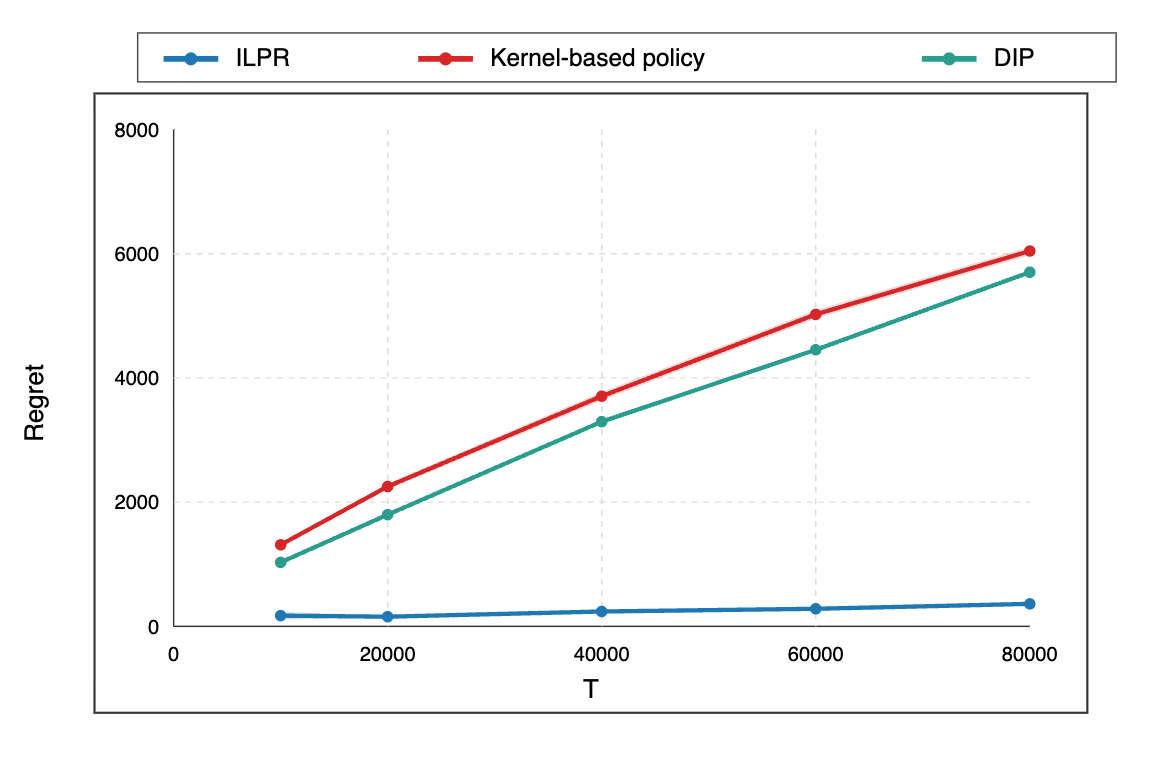}
    \caption{\textbf{Unknown utility.} Regret comparison of ILPR and kernel-based policy~\citep{fan2022policy}in simulation. Our ILPR achieves tremendous benefit.}
    \label{fig:regret-comparison}
\end{figure}

\subsection{Semi-real data}

%\g{Questions below}

We next construct a semi-real online pricing environment using the real data available at the \href{https://sites.google.com/view/dmdaworkshop2023/data-challenge}{INFORMS
2023 BSS Data Challenge Competition}. We define a binary demand indicator that is equal to one if the daily units ordered are non-zero. To ensure reliable estimation, we retain only products with sufficient observations and meaningful variation in purchase behavior.
After an initial screening step, we obtain 161 usable products. For each product $i$, we require that it contains more than 300 observations ($n_i>300$), and has nondegenerate purchase probability:
\[0.05 < \frac{1}{n_i}\sum_{t=1}^{n_i}\mathds{1}(D_t=1) < 0.95.\]
%\g{Would it make more sense to perform the screening before fitting the single-index demand model? Can we re-run the experiment using the new protocol?} 
We then fit a semiparametric single-index demand model of the form $\PP(D=1|\bx,p)=1-F(p-m(\bx))$, where $\bx$ includes six weekday dummy variables, competitor maximum and minimum prices, and stock level. %\g{Why is it a single-index model? Is $m$ linear?} 
The CDF $F$ is constrained to be monotone decreasing. From the data, we first estimate $m(\bx)$ in a linear function class using ridge logistic regression, following the practice of \cite{luo2024distribution} and \cite{bartlett2006convexity}.
%with ridge parameter $1.0$. 
% That is, 
% \[\log
% \bigg(
% \frac{\PP(D=1|\bx,p)}{\PP(D=0|\bx,p)}
% \bigg)=p-m(\bx).\]
% \g{Is this only true when $F$ is logistic?}
After that, we estimate the CDF function $F$ nonparametrically using isotonic regression followed by a Gaussian smoothing step to make $F \in C^2$ on the implied utility residuals $u=p-m(\bx)$.

To evaluate different pricing policies, we create a calibrated online environment for each usable product by resampling observed covariate vectors and feasible price bounds from that product’s empirical history.  We again compare the proposed ILPR policy and the kernel-based baseline and DIP baseline. Performance is measured by average regret relative to the oracle policy induced by the fitted semi-real environment. We report in Figure~\ref{fig:regret-comparison-real} the cumulative regrets averaged over all products at time $T=50,100,150,...,700$. %\g{Are they averaged over all products?}
We run for $50$ trials, and the standard errors are again shown in the shaded regions. ILPR already stands out significantly over this short horizon (less than 2 years). %\g{The results look nice and convincing! You may comment that ILPR already stands out significantly in such a short horizon (less than 2 years).}

\begin{figure}[htbp]
    \centering
    \includegraphics[width=0.7\textwidth]{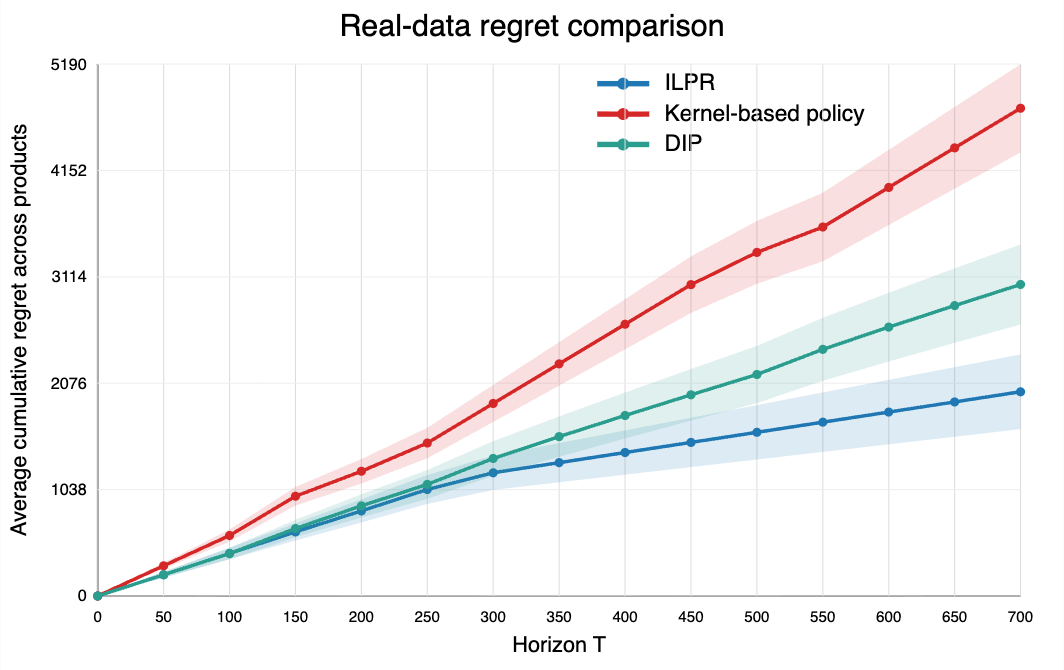}
    \caption{\textbf{Unknown utility.} Regret comparison of ILPR and kernel-based policy~\citep{fan2022policy} in a semi-real environment with real data. Our ILPR outperforms the kernel-based policy and DIP policy.}
    \label{fig:regret-comparison-real}
\end{figure}

For product-level performance comparison, Figure~\ref{fig:histogram} presents the histogram of regret improvement ratios,  $
1 - \operatorname{Regret}(\text{ILPR}) / \operatorname{Regret}(\text{Baseline})$,
%\g{define this quantity}
of our policy against the two baselines. The red bars represent the DIP baseline, and the blue bars represent the kernel-based policy. The histogram clearly shows that our policy achieves substantial improvement over both baselines across all products. 
The mean and median improvements are $66.6\%$ and $74.7\%$ for the kernel-based baseline, and $43.6\%$ and $47.6\%$ for the DIP baseline.
%\g{Provide summary statistics here, rather than in the figure or caption (that would be hard to read).}
\begin{figure}[htbp]
    \centering
    \includegraphics[width=0.7\textwidth]{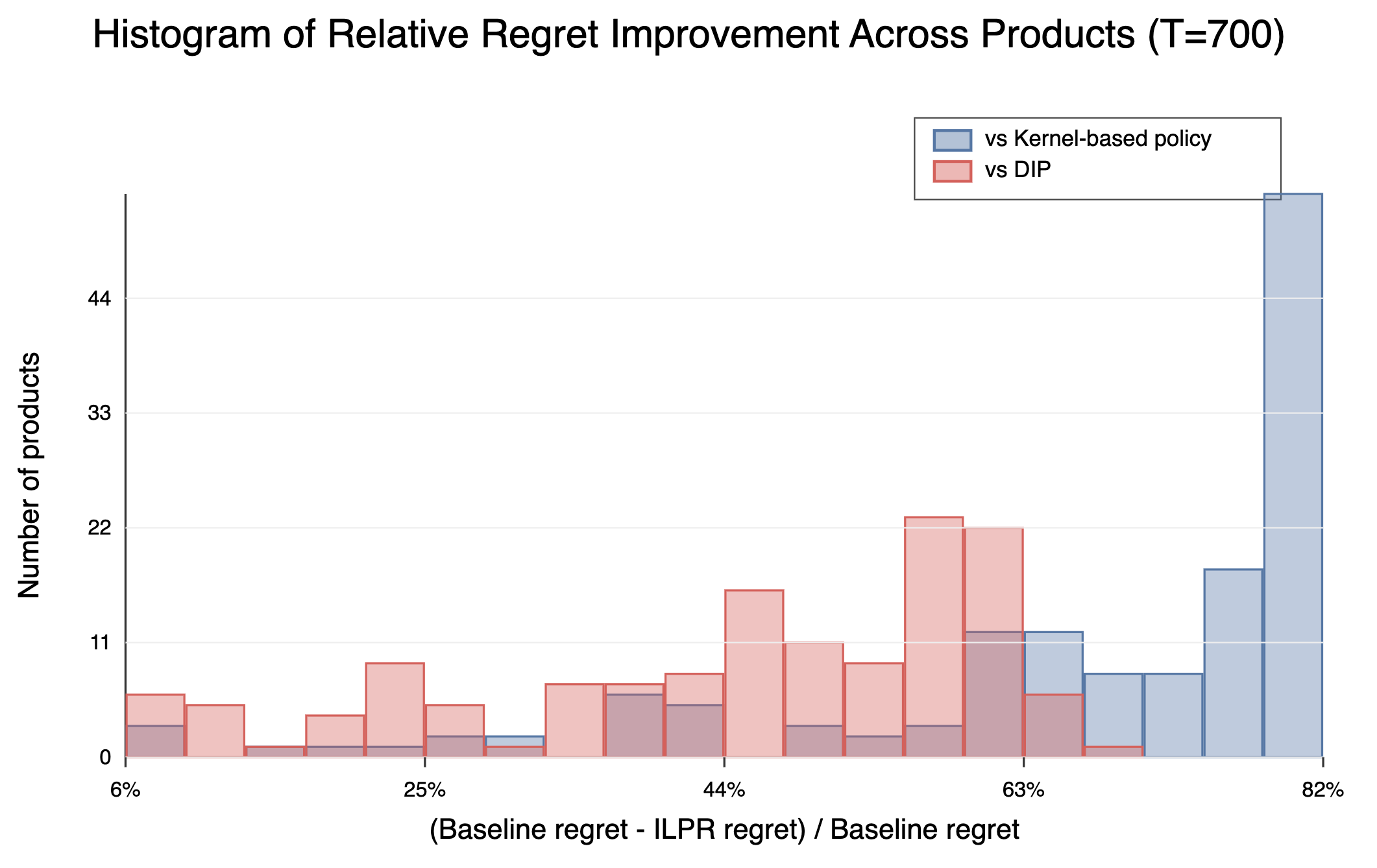}
    \caption{Histogram of regret improvement at $T=700$. The improvement is huge across products. %\g{Change SKU in the figure to product.}
    }
    \label{fig:histogram}
\end{figure}

%\paragraph*{Other implementation details.}
%We choose $x_{\min}=0.35$, $x_{\max}=0.65$, $P_{\min}=0$, $P_{\max}=1$. The LPR bandwidth scales as $h=n^{-1/(2\beta+1)}/2$. For both known and unknown utility cases, the refitting schedule follows a doubling rule with initial block size $T_0=100$. For unknown utility case, we choose $T_{0m}=\lceil\sqrt{4T}\rceil$ for horizon $T$.

%% file: app-Proof_ub.tex
\section{Proof of Upper bound: Theorem~\ref{thm:upper_bound}}
\label{sec:proof_ub}

\subsection{Proof Sketch}
We first give a proof sketch.
\begin{enumerate}
\item Section~\ref{sec:upper_bound_initial_stage}: Initial stage (no boundary decay, so analysis is simpler).
\item Section~\ref{sec:upper_bound_induction_1}: Realized design distribution behaves uniformly locally (we apply concentration, and there exists a loss of a logarithmic factor). 

We also establish that an essential quantity for LPR---the second moment matrix---has a minimum eigenvalue lower bounded by a positive constant.
\item Section~\ref{sec:upper_bound_induction_2}: LPR analysis. We decompose the estimation error into bias, variance, and error from design mismatch.
\item Section~\ref{sec:upper_bound_induction_3}: 
We analyze post-smoothing and perturbation.
\item Section~\ref{sec:upper_bound_regret_analysis}: Final regret analysis by discretizing the design domain.
\end{enumerate}

Before digging into each subsection, we first set up the important notations. Let $[\underline u,\bar u]$ denote the range set of $-\hat m$. Also denote $[\phi^{-1}(\underline u), \phi^{-1}(\bar u)] = [\underline z,\bar z]$, which is the true design domain. Similarly, we define $[\hat\phi_l^{-1}(\underline u), \hat\phi_l^{-1}(\bar u)] = [\underline z_l,\bar z_l]$ as the design domain in stage $l$. However, this is not known to the algorithm. We use an estimated interval $[\underline{\hat z_l},\overline{\hat z_l}
]$, where $\underline{\hat z_l}=\min_{t\in \calT_l}\hat\phi_l^{-1}(-\hat m(\bx_t))$ and $\overline{\hat z_l}=\max_{t\in \calT_l}\hat\phi_l^{-1}(-\hat m(\bx_t))$.

Using that $\bx_t$ are independent and that the density of $-\hat m(\bx)$ satisfies the polynomial boundary decay with exponent $\kappa$ (cf.\ Assumption~\ref{asp:feature-diversity}), one checks that there exists a constant $c>0$ such that, for any $h>0$ and each stage $l$ with $n=|\calT_l|$,
\[
\PP\big(\underline{\hat z_l}-\underline{z_l}\ge h\big)
\;\le\; (1-c\,h^{\kappa+1})^n.
\]
{Choosing $h$ such that $c\,n\,h^{\kappa+1}\ge 3\log T$, we obtain}
\[
{\PP\big(\underline{\hat z_l}-\underline{z_l}\ge h\big)\le \exp(-c n h^{\kappa+1})\le T^{-3}.}
\]
{Equivalently, with probability at least $1-T^{-3}$,}
\begin{align}
{\underline{\hat z_l}-\underline{z_l}\;\le\; C\Big(\frac{\log T}{n}\Big)^{\frac{1}{\kappa+1}},}
\end{align}
{for some constant $C>0$ depending only on the boundary decay parameters.}
A similar inequality holds for $\overline{z_l}-\overline{\hat z_l}$. Hence the boundary error is negligible; more precisely, when $T$ is greater than a large constant~\footnote{When $T$ is smaller than a constant the regret bound is trivial.}, we have
\begin{align}
\label{equ:interval-boundary-error}
{\underline{\hat z_l}-\underline{z_l}\le \min(\delta_v,v)/2,\qquad
\overline{z_l}-\overline{\hat z_l}\le \min(\delta_v,v)/2,}
\end{align}
{and in all subsequent arguments we tacitly replace the theoretical interval $[\underline z_l,\bar z_l]$ by its empirical counterpart $[\underline{\hat z_l},\overline{\hat z_l}]$, at the cost of at most changing constants in the bounds.}
{We summarize this by introducing $\hat v$ and $\hat\delta_x$ as empirical analogues of $v$ and $\delta_x$, respectively, with $\hat v$ and $\hat\delta_x$ linear in $v$ and $\delta_x$.}

In some sections, we rescale the domain as $[0,1]$, as the extension to a general interval is straightforward. {This can always be achieved by the affine map $T(z)=(z-a)/(b-a)$ when the support is $[a,b]$; the boundary decay condition and the quantities $\mu_h(x)$ transform in a way that preserves all inequalities up to multiplicative constants, so we do not distinguish these cases in the sequel.} For notational simplicity, we denote $n=|\calT_l|$ when there is no confusion. Also, we omit the stage index $l$ when there is no confusion.

\subsection{Initial Stage}
\label{sec:upper_bound_initial_stage}
%Link: Algorithm~\ref{alg}.

We start by focusing on the estimation of $\hat F_0$.
The constants $a_1,b_1$ are chosen so that $a_1[\bar u,\underline u]+b_1$ covers the range of $[\underline z-c_1,\bar z+c_1]$. Therefore, 

There is no density decay at the boundary in the initial stage. We set the initial estimator as
\begin{align*}
\hat \phi_0^I(u)=u-\frac{1-\hat F_0(u)}{\hat F_0^{(1)}(u)}.
\end{align*}
And the post-smoothed version as
\begin{align*}
\hat \phi_0^S(u)=\int_{-\infty}^\infty \hat\phi_0^I(x-t)K_\delta(t)dt
\end{align*}
for some fixed smoothing bandwidth $\delta=Cn^{-\frac{\beta-1}{2\beta+1}}$.

Following similar but simpler lines as in Sections~\ref{sec:upper_bound_induction_2} and~\ref{sec:upper_bound_induction_3}, we can establish that with the optimal bandwidth, on the entire range $[\underline z,\bar z]$, 
\begin{align*}
|\hat F_0^{(1)}(u)-F'(u)|&\lesssim |T_0|^{-\frac{\beta-1}{2\beta+1}},\\
|\hat F_0(u)-F(u)|&\lesssim  |T_0|^{-\frac{\beta}{2\beta+1}}.
\end{align*}

In particular, the same error decomposition holds. Analysis of the bias and variance terms is the same as in the general stage. For the design mismatch term, a similar proof goes through since $a_1\hat m(\bx)+b_1$ is Lipschitz with respect to $\hat m(\bx)$.

Hence for $x\in [\underline u,\bar u]$, we have
\begin{align*}
|\hat \phi_0^I(x)-\phi(x)|&\le C|T_0|^{-\frac{\beta-1}{2\beta+1}}:=b_0,
\end{align*}
and
\begin{align*}
|\hat \phi_0^S(x)-\phi(x)|&\le C|T_0|^{-\frac{\beta-1}{2\beta+1}},\\
|(\hat \phi_0^S)^{(1)}(x)-\phi'(x)|&\le c.
\end{align*}

To guarantee the range coverage in the next stage, we again perturb at the boundary and set
\begin{align}
\label{def:perturbation_initial_stage}
\hat\phi_0(x)=\hat\phi_0^S(x)+b_0 (1-2x).
\end{align}
It follows that 
\begin{align*}
\hat\phi_0(0)\ge \phi(0),\qquad \hat\phi_0(1)\le \phi(1),
\end{align*}
facilitating the next stage.

In the following induction steps, all statements are conditioned on $\hat\phi_{l-1}$, which is assumed to satisfy the inductive hypotheses.
The distribution in stage $l$ is supported on $[\underline z_{l-1},\bar z_{l-1}]$ as defined.
Consider stage \(l\). For theoretical generality, we use a uniform notation where the index \(i\in [n]\) represents the time point \(t\in \mathcal{T}_{l}\), with \(n=T_{l}=2^{l}T_{0}\).

\subsection{Induction: Step 1}
\label{sec:upper_bound_induction_1}
%Link: Algorithm~\ref{alg}.

This section develops two prerequisite lemmas for analyzing local polynomial regression and is self-contained.
The main message is that the realized designs are essentially uniformly scattered. 

For simplicity of exposition, we consider the distribution to be supported on $[0,1]$. {As noted above, the case of a general interval $[a,b]$ can be reduced to $[0,1]$ by an affine rescaling of the design; all the bounds below remain valid up to multiplicative constants depending only on the fixed parameters.}

\textbf{Design.} Let $X_1,\ldots,X_n$ be i.i.d. on $[0,1]$ with density $g$.
There exists $c_b>0$ such that for all $t\in[0,1]$,
\[
c_b\,\delta_\kappa(t)\le g(t) \le M,
\]
where $\delta_\kappa(t):=\min\{t,1-t\}^{\kappa}$ and $\kappa\ge 0$ is fixed.
(This single condition implies an interior lower bound:
if $t\in[h,1-h]$ then $\delta_\kappa(t)\ge h^{\kappa}$, hence $g(t)\ge c_b h^{\kappa}$;
no separate “interior” assumption is required.)

\medskip
\textbf{Kernel.} Let $K:\RR\to[0,\infty)$ be bounded, Lipschitz on $[-1,1]$, supported on $[-1,1]$, with
\[
\int_{-1}^{1} K(u)\,du =1.
\]

\medskip
\textbf{Bandwidth.} Fix $h\asymp n^{-\frac{1}{2\beta+1}}$ (independent of $x$).

\medskip
\textbf{Sliding-window count and mass:}
\[
N_h(x):=\sum_{i=1}^n \mathbf{1}\{|X_i-x|\le h\}, \qquad
m_h(x):=\int_{x-h}^{x+h} g(t)\,dt.
\]

\medskip
\textbf{Local minimum density:}
\[
\mu_h(x):=\inf_{u\in [-1,1]} g(x+hu).
\]
We make the regularity condition to prevent the density from changing abruptly, that is, for $x\in [2h,1-2h]$,
\begin{align}
\label{equ:density-regularity}
{\sup_{u\in [-1,1]} g(x+hu)\le C\,\mu_h(x).}
\end{align}
Hence we have $2h\mu_h(x)\le m_h(x)\le 2Ch\mu_h(x)$. {In our application, this regularity follows from the polynomial boundary decay and the smoothness of the pushforward density of $-\hat m(\bx)$; specifically, the density is sandwiched between two multiples of $\delta_\kappa(\cdot)$, which yields~\eqref{equ:density-regularity} for some $C<\infty$.}

\medskip
\textbf{Local polynomial matrices.} For $U(u)=(1,u,u^2/2!,\ldots,u^q/q!)^\top$, define
\[
S_0 := \int_{-1}^{1} U(u)U(u)^\top K(u)\,du,
\]
\[
B_n(x):=\frac{1}{nh}\sum_{i=1}^n K\!\Big(\tfrac{X_i-x}{h}\Big)\,
U\!\Big(\tfrac{X_i-x}{h}\Big)U\!\Big(\tfrac{X_i-x}{h}\Big)^{\!\top}, 
\quad
B(x):=\EE[B_n(x)]=\int_{-1}^{1} U(u)U(u)^\top K(u)\,g(x+hu)\,du.
\]

\medskip

\begin{lemma}[Uniform scattering in sliding windows]
\label{lemma:uniform_scattering}
Under the conditions above, there exist constants $C$ such that if for any $\delta\in(0,1)$ and $v\ge 2h$ one has
\begin{align}
\label{equ:condition_uniform_scattering}
{n\,m_{h/2}(v-h/2)\ \ge\ C\big(\log(1/h)+\log(1/\delta)\big),}
\end{align}
then with probability at least $1-\delta$, it holds that
\[
N_h(x)\ \ge\ c\,n\,m_h(x)
\qquad\text{for all }x\in[v,1-v].
\]

Note that \eqref{equ:condition_uniform_scattering} can be satisfied when we have
\begin{align*}
n\delta_\kappa(v)\ge C\big(\log(1/h)+\log(1/\delta)\big).
\end{align*}
\end{lemma}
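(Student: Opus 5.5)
The argument is a discretization plus a union bound, using multiplicative Chernoff bounds on a finite grid of short windows. First cover $[v,1-v]$ by a grid $x_1<\dots<x_K$ with spacing $h/2$, so that $K\lesssim 1/h$. For any $x\in[v,1-v]$ pick the nearest grid point $x_k$, with $|x-x_k|\le h/4$. Then the window $[x_k-h/2,x_k+h/2]$ lies inside $[x-h,x+h]$, and hence
\[
N_h(x)\ \ge\ N_{h/2}(x_k).
\]
So it suffices to lower bound the half-width counts at the grid points and then compare $m_{h/2}(x_k)$ with $m_h(x)$.

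For a fixed grid point, $N_{h/2}(x_k)$ is $\mathrm{Binomial}(n,m_{h/2}(x_k))$. The multiplicative Chernoff bound gives
\[
\PP\big(N_{h/2}(x_k)\le \tfrac12 n\,m_{h/2}(x_k)\big)\le \exp\big(-n\,m_{h/2}(x_k)/8\big).
\]
We need $n\,m_{h/2}(x_k)\ge 8(\log K+\log(1/\delta))$ at every grid point; a union bound over the $K$ points then gives total failure probability at most $\delta$.

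To get this from \eqref{equ:condition_uniform_scattering}, I claim that $m_{h/2}(x)$ is, up to a constant, minimized near the endpoints $v-h/2$ and $1-v+h/2$ of the region. This follows because $g$ is sandwiched by $\delta_\kappa$. For $x\ge v$, every point of $[x-h/2,x+h/2]$ has $\min\{t,1-t\}\gtrsim \min\{x,1-x\}-h/2\gtrsim v-h$, and since $v\ge 2h$ this is comparable to $v$. I expect this step to be the main obstacle: only the lower bound $g\ge c_b\delta_\kappa$ holds globally, so the comparison must go through $\delta_\kappa$. Concretely,
\[
m_{h/2}(x)\ \ge\ c_b h\,\delta_\kappa(x-h/2\ \text{or}\ \text{nearest boundary point})\ \gtrsim\ h\,\delta_\kappa(v).
\]
On the other side, $m_{h/2}(v-h/2)\le C' h\,\delta_\kappa(v)$ using the regularity \eqref{equ:density-regularity} near the region, or the upper bound $g\le C_d\delta_\kappa$ available in our application. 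So the condition at $v-h/2$ propagates to all grid points at the cost of a constant, which is absorbed into $C$. The same computation gives the stated sufficient condition $n\delta_\kappa(v)\ge C(\log(1/h)+\log(1/\delta))$: in that case one simply uses $m_{h/2}\ge c_b h\,\delta_\kappa(v)/2^\kappa$, and the factor $h$ is absorbed since the stated condition is in terms of $n\delta_\kappa(v)$ with the constants adjusted.

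Finally, the regularity \eqref{equ:density-regularity} gives
\[
m_h(x)\le 2Ch\,\mu_h(x)\le 2C\,h\sup_{|t-x_k|\le h/2} g(t)\cdot O(1)\lesssim m_{h/2}(x_k).
\]
Here the densities over $[x-h,x+h]$ and over $[x_k-h/2,x_k+h/2]$ are comparable within constants by \eqref{equ:density-regularity} applied at $x$, which is valid since $x\ge v\ge 2h$. On the good event this yields
\[
N_h(x)\ \ge\ \tfrac12 n\,m_{h/2}(x_k)\ \ge\ c\,n\,m_h(x)
\]
uniformly over $x\in[v,1-v]$, which is the claim.
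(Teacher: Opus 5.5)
Your core argument matches the paper's: a union bound over $O(1/h)$ half-width windows, the containment $N_h(x)\ge N_{h/2}(x_k)$, and \eqref{equ:density-regularity} to compare $m_{h/2}(x_k)$ with $m_h(x)$. The only differences are that the paper uses Bernstein's inequality on a grid of spacing $h$, while you use the multiplicative Chernoff bound on a grid of spacing $h/2$. That part is correct. You also make explicit a step the paper skips: that the hypothesis at the single point $v-h/2$ controls $n\,m_{h/2}(x_k)$ at every grid point.

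Only your second justification of that step works. \eqref{equ:density-regularity} compares $g$ with itself inside one $h$-window, so it cannot give $m_{h/2}(v-h/2)\lesssim h\,\delta_\kappa(v)$. With only $g\le M$, the density may be of order $M$ near $v$ and decay gradually, over many windows, down to $c_b\delta_\kappa$ further inside. In that case the hypothesis at $v-h/2$ says nothing about interior windows. You therefore need the upper envelope $g\lesssim\delta_\kappa$. It holds in the application (Assumption~\ref{asp:feature-diversity}) but is not among the lemma's stated ``Design'' conditions, so you should list it as a hypothesis.

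The step that genuinely fails is your derivation of the last sentence of the statement. From $g\ge c_b\delta_\kappa$ you correctly get $n\,m_{h/2}(v-h/2)\ge c\,nh\,\delta_\kappa(v)$. But $h\asymp n^{-1/(2\beta+1)}\to 0$ is not a constant and cannot be absorbed by adjusting $C$. So $n\delta_\kappa(v)\ge C(\log(1/h)+\log(1/\delta))$ does not imply \eqref{equ:condition_uniform_scattering}.

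The implication is in fact false in general. Suppose $\kappa$ is large enough that some $v\ge 2h$ has $n\delta_\kappa(v)$ only of order $\log(1/(h\delta))$, and $g\asymp\delta_\kappa$. Then the expected number of design points in an $h$-window near $v$ is of order $h\log(1/(h\delta))\to 0$. Such windows are empty with high probability, so even the conclusion $N_h(x)\ge c\,n\,m_h(x)$ fails.

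The correct sufficient condition is $nh\,\delta_\kappa(v)\ge C(\log(1/h)+\log(1/\delta))$; the paper's note drops the factor $h$ as well. Nothing downstream breaks, because the choice of $v$ in \eqref{def:v} gives the stronger $nh^3\delta_\kappa(v)\ge C\log T$. You should state and prove the $h$-version rather than claim the literal one.
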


\begin{proof}
Fix $x\in[v,1-v]$ and set $Z_i(x):=\mathbf{1}\{|X_i-x|\le h/2\}$. Then $N_{h/2}(x)=\sum_{i=1}^n Z_i(x)$ and $\EE[Z_i(x)]=m_{h/2}(x)$. By Bernstein's inequality,
\[
\PP\!\Big(\big|N_{h/2}(x)-n m_{h/2}(x)\big|\ge t\Big)
\;\le\; 2\exp\!\left(-\frac{t^2}{2n\,\mathrm{Var}(Z_1(x))+\tfrac{2}{3}t}\right)
\;\le\; 2\exp\!\left(-\frac{t^2}{2n\,m_{h/2}(x)+\tfrac{2}{3}t}\right).
\]

Hence applying a union bound and by~\eqref{equ:condition_uniform_scattering}, we have with probability at least $1-\delta$, for all $v-h/2,v+h/2,\cdots,1-v+h/2$ (we assume $v$ can be divided by $h$ as this does not make much difference to the proof),
\[
\Big|\,N_{h/2}(x) - n\,m_{h/2}(x)\,\Big|
\le
C_1\sqrt{n\,m_{h/2}(x)\,(\log(1/h\delta))}+C_2(\log(1/h\delta))\le \frac{1}{2}nm_{h/2}(x).
\]

Now consider a general $x\in [v,1-v]$. Define the integer $k$ to be such that $v+kh\le x\le v+(k+1)h$. It holds that
\begin{align*}
N_{h/2}(v+kh+h/2)\le N_h(x)\le N_{h/2}(v+kh-h/2)+N_{h/2}(v+kh+h/2)+N_{h/2}(v+kh+3h/2).
\end{align*}
Hence we have, by~\eqref{equ:density-regularity},
\begin{align*}
N_h(x)\ge \frac{1}{2}nm_{h/2}(v+kh+h/2)\ge cn m_h(x).
\end{align*}
\end{proof}

\begin{lemma}[Uniform lower bound on $\lambda_{\min}$ of the local-polynomial Gram matrix]
\label{lemma:matrix_min_eigen}
Under the conditions above, there exist constants $c_*,C_1,C_2>0$ (depending only on $p$, $c_b$, and $\kappa$) such that for any $\delta\in(0,1)$ and $v\ge 2h$, with probability at least $1-\delta$, for all $x\in [v,1-v]$,
\[
\ \lambda_{\min}\big(B_n(x)\big)
\ \ge\
c_*\,\mu_h(x)-
C_1\,\sqrt{\frac{\big(\log(1/h)+\log(1/\delta)\big)\mu_h(x)}{nh}}-
C_2\,\frac{\log(1/h)+\log(1/\delta)}{nh}.
\]
In particular, if
\[
nh\mu_h(x)\ge C\,\big(\log(1/h)+\log(1/\delta)\big)\qquad\text{for all }x\in [v,1-v],
\]
then, with probability at least $1-\delta$,
\[
\lambda_{\min}\!\big(B_n(x)\big)\ \ge\ \tfrac{1}{2}\,c_*\,\mu_h(x).
\]
\end{lemma}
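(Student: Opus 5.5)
We split the claim into a deterministic lower bound on the population matrix $B(x)$ and a uniform concentration bound for $B_n(x)-B(x)$, then combine them via Weyl's inequality,
\[
\lambda_{\min}(B_n(x))\ge \lambda_{\min}(B(x))-\|B_n(x)-B(x)\|_{\mathrm{op}}.
\]

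\textbf{Population part.} For any unit vector $a$ we have
\[
a^\top B(x)a=\int_{-1}^1 (a^\top U(u))^2K(u)g(x+hu)\,du\ \ge\ \mu_h(x)\,a^\top S_0 a .
\]
Since $K\ge 0$ integrates to one and is not a.e.\ zero, the functions $1,u,\ldots,u^q$ are linearly independent in $L^2(K)$. Hence $S_0$ is positive definite, and we take $c_*=\lambda_{\min}(S_0)$. This constant depends only on $K$ and $q$.

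\textbf{Concentration at a fixed $x$.} Each entry of $B_n(x)$ is an average of the i.i.d.\ variables
\[
\xi_i=\frac{1}{h}K\!\Big(\tfrac{X_i-x}{h}\Big)\frac{(\tfrac{X_i-x}{h})^{j+k}}{j!\,k!}.
\]
These are bounded by $C/h$ and satisfy $\mathrm{Var}(\xi_i)\le \EE[\xi_i^2]\le C h^{-2}m_h(x)\le C' h^{-1}\sup_{|u|\le1}g(x+hu)$. By the regularity condition \eqref{equ:density-regularity}, the variance is therefore at most $C''\mu_h(x)/h$.

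Bernstein's inequality then gives, with probability $1-\delta'$,
\[
|(B_n-B)_{jk}(x)|\le C\sqrt{\frac{\mu_h(x)\log(1/\delta')}{nh}}+C\frac{\log(1/\delta')}{nh}.
\]
Because the matrix has fixed dimension $q+1$, a union bound over its entries passes this to the operator norm.

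\textbf{Uniformity over $x\in[v,1-v]$.} We use a grid of mesh $\epsilon=h\cdot(nh)^{-1}$. This grid has polynomially many points in $1/h$, so taking $\delta'=\delta/\mathrm{poly}(1/h)$ produces exactly the $\log(1/h)+\log(1/\delta)$ factor.

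Between grid points we need a Lipschitz bound. The map $x\mapsto K(\tfrac{X_i-x}{h})U U^\top/h$ is Lipschitz with constant $O(h^{-2})$, since $K$ is Lipschitz on $[-1,1]$. We must also handle the jump of $K$ at $\pm1$ if $K$ is discontinuous there; if $K$ is continuous on $\RR$ this is automatic, and otherwise we bound the boundary contribution by counting points in thin shells, using the same Bernstein argument. Moving $x$ by $\epsilon$ therefore changes both $B_n$ and $B$ by $O(\epsilon/h^2)=O(1/(nh))$, which is absorbed into the $C_2$ term. We also need $\mu_h$ to be comparable at neighbouring points, which again follows from \eqref{equ:density-regularity}.

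\textbf{Combination.} Plugging the population and concentration bounds into Weyl's inequality yields the first display of the lemma. For the second claim, assume $nh\mu_h\ge C\log(\cdot)$ with $C$ large. Then both error terms are at most $c_*\mu_h/4$, and the bound $\lambda_{\min}(B_n(x))\ge c_*\mu_h(x)/2$ follows.

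The main obstacle is the uniformity step. The variance proxy must stay local, scaling with $\mu_h(x)$ rather than $\sup g$, so that the bound remains meaningful near the boundary decay region. The discretization must also not lose more than a logarithmic factor, and the kernel edge must be handled as described above.
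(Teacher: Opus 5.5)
Your proposal is correct and follows the paper's proof essentially step for step. The steps are: $B(x)\succeq\mu_h(x)S_0$; a pointwise Bernstein bound with variance proxy $\mu_h(x)/h$ obtained from the regularity condition (you apply it entrywise with a union bound over the $(q+1)^2$ entries where the paper uses matrix Bernstein, which is equivalent in fixed dimension); a grid with a union bound and Lipschitz interpolation; and Weyl's inequality. There is one arithmetic slip: a mesh $\epsilon=h(nh)^{-1}=1/n$ gives $\epsilon/h^2=1/(nh^2)$, not $1/(nh)$, so take a mesh of order $h/n$ as the paper does (spacing $h/(4n)$, still polynomially many points). Your treatment of a possible jump of $K$ at $\pm1$ and of the comparability of $\mu_h$ at neighbouring grid points makes explicit details the paper leaves implicit.
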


\begin{proof}
We first show that $S_0$ is positive definite. For any nonzero vector $v\in\mathbb{R}^d$,
\[
v^\top S_0 v
= \int_{-1}^1 \bigl(U(u)^\top v\bigr)^2 K(u)\,du.
\]
Since $U(u)^\top v$ is a nonzero polynomial in $u$, it has only finitely many zeros.
Moreover, $K(u)\ge 0$ and is strictly positive on a set of positive Lebesgue measure.
Therefore, $(U(u)^\top v)^2 K(u) > 0$ for almost every $u\in[-1,1]$, which implies
\[
v^\top S_0 v > 0.
\]
Hence, $S_0$ is positive definite.

Also by $K\ge 0$,
\[
B(x)=\int U(u)U(u)^\top K(u)\,g(x+hu)\,du
\ \succeq\
\big(\inf_{u\in[-1,1]} g(x+hu)\big)\,S_0.
\]
Hence there exists $c_*>0$ such that
\[
\lambda_{\min}\!\big(B(x)\big)\ \ge\ c_*\,\mu_h(x)\qquad\text{for all }x\in[0,1].
\]

\emph{Step 1.}
We can apply Matrix Bernstein.
Let
\[
W_i(x):=\frac{1}{h}\,K\!\Big(\frac{X_i-x}{h}\Big)\,U\!\Big(\frac{X_i-x}{h}\Big)U\!\Big(\frac{X_i-x}{h}\Big)^{\!\top}.
\]
Then $B_n(x)-B(x)=\frac{1}{n}\sum_{i=1}^n\big(W_i(x)-\EE W_i(x)\big)$, and
\[
\|W_i(x)\|_{2}\ \le\ \frac{\|K\|_\infty}{h}\,\sup_{|u|\le 1}\|U(u)\|_2^2:= \frac{L_U}{h}.
\]
Using $W_i(x)^2\preceq \|W_i(x)\|_{2}\,W_i(x)$,
\[
\Big\|\EE\big[(W_i(x)-\EE W_i(x))^2\big]\Big\|_{2}
\ \le\ \Big\|\EE W_i(x)^2\Big\|_{2}
\ \le\ \frac{L_U}{h}\,\Big\|\EE W_i(x)\Big\|_{2}
\ \le\ \frac{C}{h}\,\mu_h(x),
\]
for a constant $C$ depending only on $p$ and $K$. The matrix Bernstein inequality (e.g. \cite{tropp2015introduction}) yields, for any $t>0$,
\[
\PP\!\left(\big\|B_n(x)-B(x)\big\|_{2}\ge t\right)
\ \le\ (p+1)\,\exp\!\left(-\frac{n\,t^2}{C\,h^{-1}\mu_h(x) + (L_U/h)\,t}\right).
\]
Choosing appropriate $t$ gives, with probability at least $1-\delta$,
\[
\big\|B_n(x)-B(x)\big\|_{2}
\ \le\
C_1\,\sqrt{\frac{\log(1/\delta)\,\mu_h(x)}{nh}}
\;+\;
C_2\,\frac{\log(1/\delta)}{nh}.
\]

\emph{Step 2.}
To obtain a uniform-type result,
cover $[v,1-v]$ by a grid $\mathcal{G}$ of spacing $h/(4n)$. Apply the pointwise bound on $\mathcal{G}$ and take a union bound, we get that for any $x\in \cal{G}$,
\[
\big\|B_n(x)-B(x)\big\|_{2}
\ \le\
C_1\,\sqrt{\frac{\log\big(1/(h\delta)\big)\,\mu_h(x)}{nh}}
\;+\;
C_2\,\frac{\log\big(1/(h\delta)\big)}{nh}.
\]
For an arbitrary $x$, choose $x_0\in\mathcal{G}$ with $|x-x_0|\le h/(8n)$. Since $K$ is Lipschitz on $[-1,1]$ with constant $L_K$,
\[
\|W_i(x)-W_i(x_0)\|_{2}
\ \le\ \frac{C'}{h^2}\,|x-x_0|\,\mathbf{1}\{|X_i-x_0|\le \tfrac{3h}{2}\},
\]
for a constant $C'$ depending on $p$ and $K$. Summing and dividing by $n$, the off-grid increment is of order at most
\[
\big\|B_n(x)-B_n(x_0)\big\|_{2}
\ \le\ \frac{C'}{n h^2}\,|x-x_0|\,N_{3h/2}(x_0)\lesssim \frac{1}{nh},
\]
and is absorbed by enlarging $C_1,C_2$ in the Bernstein bound.

\emph{Step 3.}
We can now apply Weyl's inequality.
For all $x\in [v,1-v]$,
\begin{align*}
\lambda_{\min}\!\big(B_n(x)\big)
\ &\ge\ \lambda_{\min}\!\big(B(x_0)\big) - \big\|B_n(x_0)-B(x_0)\big\|_{2}
- \big\|B_n(x)-B(x_0)\big\|_{2}\\
&\ge\ c_*\,\mu_h(x) - \Big\{ C_1\sqrt{\tfrac{(\log(1/h)+\log(1/\delta))\,\mu_h(x)}{n h}} + C_2\tfrac{\log(1/h)+\log(1/\delta)}{n h}\Big\}.
\end{align*}
If in addition, $n h\,\mu_h(x)\ge C(\log(1/h)+\log(1/\delta))$ for all $x\in [v,1-v]$, the bracketed term is at most $\tfrac{1}{2}c_*\,\mu_h(x)$, proving the claim. 
\end{proof}

\paragraph{Choice of $v$}
Take $v$ as in \eqref{def:v-explicit} below. It satisfies for some large constant $C$,
\begin{align}
\label{def:v}
nh^3\delta_\kappa(v)\ge C\log T.
\end{align}
This definition of $v$ satisfies the precondition of both lemmas when we take $\delta=1/T^2$ in both lemmas.

\subsection{Induction: Step 2}
\label{sec:upper_bound_induction_2}
%Link: Algorithm~\ref{alg}.

As usual, consider a fixed stage $l$. We aim to deploy classical results in \cite{Tsybakov2009} to analyze $\hat\phi^I_l$. Recall the observed design and true design $u_i=\hat \phi_{l-1}^{-1}(-\hat m(\bx_i))$ and $\tilde u_i=p_t-m(\bx_t)=\hat \phi_{l-1}^{-1}(-\hat m(\bx_i))+\hat m(\bx_i)-m(\bx_i)$.

Recall the density of $u_i$ is defined as $\mathrm{d}_{l-1}$. Since the derivative of $\hat\phi_{l-1}$ is bounded, and by Assumption~\ref{asp:feature-diversity}, we have $\mu(x)\asymp \delta_{[\underline{ z_{l-1}},\overline{ z_{l-1}}]}(z)$.

By \eqref{equ:interval-boundary-error}, the same density decay rate holds with the estimated interval $[\underline{\hat z_{l-1}},\overline{\hat z_{l-1}}]$, for $x\in I_v[\underline{\hat z_{l-1}},\overline{\hat z_{l-1}}]$. Here recall that $I_v[\underline z,\overline z]:=[\underline z+v(\overline z-\underline z),\underline z+(1-v)(\overline z-\underline z)]$, and $\delta_{[\underline{u},\bar{u}]}(z):=\min\{(z-\underline{u})^{\kappa},(\bar u-z)^{\kappa}\}$. That is to say,  $\mathrm{d}_{l-1}(x)\asymp \delta_{[\underline{\hat z_{l-1}},\overline{\hat z_{l-1}}]}(z)$. In other words, the boundary error can be neglected.

Denote the linear smoothing weights for estimating the first-order derivative as $w^1(x,u_i)$. To be specific,
\begin{align*}
B_n(x)&=\frac{1}{nh}\sum_{i=1}^n K\left(\frac{u_i-x}{h}\right)U\left(\frac{u_i-x}{h}\right) U\left(\frac{u_i-x}{h}\right)^{\top},\\
q_n(x)&=\frac{1}{nh}\sum_{i=1}^n D_i K\left(\frac{u_i-x}{h}\right)U\left(\frac{u_i-x}{h}\right).
\end{align*}

The complete local solution of LPR is given by
\begin{align*}
\hat\theta_n(x)&=\arg\min \sum_{i=1}^n \left[D_i-\theta^{\top} U\left(\frac{u_i-x}{h}\right)\right]^2 K\left(\frac{u_i-x}{h}\right)\\&=B_n(x)^{-1}q_n(x),
\end{align*}
as long as $B_n(x)\succ 0$.

Then we have 
\begin{align*}
\hat F_l(x)=1-e_1^{\top} \hat\theta_n(x),\qquad 
\hat F_l^{(1)}(x)=-e_2^{\top} \hat\theta_n(x)/h.
\end{align*}

Define the linear smoothing weights for the first-order derivative as follows:
\begin{align*}
w^1(x,u_i)=\frac{1}{nh^2}e_2^{\top} B_n^{-1}(x)K\left(\frac{u_i-x}{h}\right)U\left(\frac{u_i-x}{h}\right).
\end{align*}
Then
\begin{align*}
\hat F_l^{(1)}(x)=-\sum_{i=1}^n D_i w^1(x,u_i).
\end{align*}

We only consider the estimation error of the derivative for now, which has a slower rate.
We have the following key error decomposition:
\begin{align*}
\hat F_l^{(1)}(x)-F'(x)
&=\overbrace{\sum_{i=1}^n F(u_i)w^1(x,u_i)-F'(x)}^{T^\textrm{b}(x)}
-\overbrace{\sum_{i=1}^n \tilde e_i w^1(x,u_i)}^{T^\textrm{v}(x)}
+\overbrace{\sum_{i=1}^n \left(F(\tilde u_i)-F(u_i)\right)w^1(x,u_i)}^{T^\textrm{m}(x)}.
\end{align*}
Here we define the error $\tilde e_i=D_i-\big(1-F(\tilde u_i)\big)$. {Conditional on the designs $\{u_i\}_{i=1}^n$, the variables $\tilde e_i$ are independent, mean-zero, and uniformly bounded; all applications of Bernstein's inequality below are to such conditional sums.}

We proceed to present a lemma on the weights. It heavily hinges on Lemmas~\ref{lemma:uniform_scattering} and~\ref{lemma:matrix_min_eigen}. Built upon it, we can analyze the bias term and the variance term. The proof is in Appendix~\ref{sec:proof_lemma_weights}. For simplicity of exposition, we consider $[\underline{\hat z_{l-1}},\overline{\hat z_{l-1}}]=[0,1]$, and $I_v[\underline{\hat z_{l-1}},\overline{\hat z_{l-1}}]=[v,1-v]$.

\begin{lemma}
\label{lemma:weights}
Under the conditions of Lemmas~\ref{lemma:uniform_scattering} and~\ref{lemma:matrix_min_eigen}, the weights satisfy the following for all $x\in [v,1-v]$:
\begin{enumerate}
\item[(i)] $\sup_{i}|w^1(x,u_i)|\le \frac{C_*}{nh^2\mu_h(x)}$, $\sup_{i}|w^0(x,u_i)|\le \frac{C_*}{nh\mu_h(x)}$.
\item[(ii)] $\sum_{i=1}^n |w^1(x,u_i)|\le \frac{C_*}{h}$, $\sum_{i=1}^n |w^0(x,u_i)|\le C_*$.
\item[(iii)] $w^1(x,u_i)=w^0(x,u_i)=0\ \text{if}\ |u_i-x|\ge h$.
\item[(iv)] $w^1(x,u_i)$ is Lipschitz in $x$ with constant at most $C n$.~\footnote{More precisely, one obtains a bound of order $C/(h^2)$ from differentiating the kernel and $B_n^{-1}(x)$, and under our bandwidth choice $h\asymp n^{-1/(2\beta+1)}$ this implies $1/h^2\lesssim n$, so stating an $O(n)$ bound keeps the notation simpler without affecting subsequent rates.}
\end{enumerate}
\end{lemma}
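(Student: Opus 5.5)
The plan is to read off all four properties from the explicit form
\[
w^j(x,u_i)=\frac{1}{nh^{1+j}}\,e_{j+1}^\top B_n^{-1}(x)\,K\!\Big(\tfrac{u_i-x}{h}\Big)\,U\!\Big(\tfrac{u_i-x}{h}\Big),\qquad j=0,1,
\]
where $w^0$ is the analogous weight for $\hat F_l$ with $e_1$ and one fewer factor of $h$. Throughout we work on the intersection of the high-probability events of Lemmas~\ref{lemma:uniform_scattering} and~\ref{lemma:matrix_min_eigen} with $\delta=1/T^2$; the choice of $v$ in \eqref{def:v} guarantees both preconditions. On this event, for every $x\in[v,1-v]$ we have $\lambda_{\min}(B_n(x))\ge \tfrac12 c_*\mu_h(x)$, hence $\|B_n^{-1}(x)\|_2\le 2/(c_*\mu_h(x))$.

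\emph{Property (iii)} is immediate, since $K$ is supported on $[-1,1]$.

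\emph{Property (i)} follows from the operator-norm bound, using that $\|U(s)\|_2$ and $K$ are bounded for $|s|\le1$. This gives
\[
|w^1(x,u_i)|\le \frac{1}{nh^2}\cdot\frac{2}{c_*\mu_h(x)}\cdot\|K\|_\infty\sup_{|s|\le1}\|U(s)\|_2 ,
\]
and similarly for $w^0$ with $nh$ in place of $nh^2$.

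\emph{Property (ii)} uses (i) and (iii). The only nonzero terms are those with $|u_i-x|\le h$, and there are $N_h(x)$ of them. So
\[
\sum_i|w^1(x,u_i)|\le N_h(x)\,\frac{C}{nh^2\mu_h(x)}.
\]
Bounding $N_h(x)$ requires an upper bound, whereas Lemma~\ref{lemma:uniform_scattering} as stated gives only the lower bound. However, its Bernstein step is two-sided: the displayed deviation bound gives $|N_{h/2}-nm_{h/2}|\le\frac12 nm_{h/2}$ on the grid. Covering $[x-h,x+h]$ by three half-windows then yields $N_h(x)\le C n m_h(x)$. For $x$ near $v$ with $v\ge 2h$, the neighbouring windows are still inside the region where the concentration applies, possibly after enlarging the grid slightly. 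Finally, \eqref{equ:density-regularity} gives $m_h(x)\le 2Ch\mu_h(x)$. Combining these, $\sum_i|w^1|\le C n h\mu_h(x)\cdot C/(nh^2\mu_h(x))=C_*/h$, and the same computation gives $C_*$ for $w^0$. An alternative route avoids counting: write
\[
\sum_i |w^1|\le \frac{1}{nh^2}\|B_n^{-1}\|\sum_i K\|U\|,
\]
and bound $\frac{1}{nh}\sum_i K\|U\|$ by $C\,\mathrm{tr}$-type quantities of $B_n(x)$ (via $\|U\|\le C(1+\|U\|^2)$ and the fact that the $(1,1)$ entry and trace of $B_n$ are controlled). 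This still needs an upper bound on $B_n$ of order $\mu_h$, which comes from the matrix Bernstein step in Lemma~\ref{lemma:matrix_min_eigen} together with $\|B(x)\|\lesssim\sup g(x+hu)\lesssim\mu_h(x)$.

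\emph{Property (iv)} comes from differentiating in $x$. The map $x\mapsto K((u_i-x)/h)U((u_i-x)/h)$ is Lipschitz with constant $C/h$, since $K$ is Lipschitz on $[-1,1]$ and $U$ is polynomial. The discontinuity of $K$ at the boundary of its support is harmless if $K$ vanishes at $\pm1$, which I would assume, for example for the Epanechnikov kernel. For the matrix factor, $B_n^{-1}(x)-B_n^{-1}(x')=B_n^{-1}(x)(B_n(x')-B_n(x))B_n^{-1}(x')$. The off-grid computation in Step 2 of Lemma~\ref{lemma:matrix_min_eigen} gives $\|B_n(x)-B_n(x')\|\le C|x-x'|\,N_{3h/2}/(nh^2)\lesssim |x-x'|\mu_h/h$. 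Hence $B_n^{-1}$ is Lipschitz with constant $\lesssim 1/(h\mu_h)$. Multiplying these bounds by the prefactor $1/(nh^2)$ gives a Lipschitz constant $\lesssim 1/(nh^3\mu_h(x))$. By \eqref{def:v}, $nh^3\mu_h(x)\gtrsim nh^3\delta_\kappa(v)\ge C\log T$, so this is $O(1)$, which is far below $Cn$.

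The main obstacle is the uniform upper control on $N_h(x)$ (equivalently, on $\|B_n(x)\|$) needed for (ii). The stated lemmas give only lower bounds, so I would extract the two-sided version from their proofs, and keep careful track of the density regularity near $v$, where $\mu_h$ degenerates like $v^\kappa$.
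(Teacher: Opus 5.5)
Your proposal is correct and follows essentially the same route as the paper: (i) comes from the $\lambda_{\min}$ bound of Lemma~\ref{lemma:matrix_min_eigen}; (ii) multiplies (i) by the window count $N_h(x)\lesssim nh\mu_h(x)$; (iii) uses the kernel support; and (iv) differentiates through $B_n^{-1}B_n'B_n^{-1}$ with $\|B_n'\|\lesssim \mu_h/h$, giving a Lipschitz constant of order $1/(nh^3\mu_h(x))$. Your two cautions are legitimate refinements that the paper's proof glosses over: the count in (ii) (and in $\|B_n'\|$) needs the upper half of the two-sided Bernstein bound, whereas the paper simply cites Lemma~\ref{lemma:uniform_scattering}, which as stated gives only a lower bound; and (iv) tacitly requires $K(\pm1)=0$.
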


\paragraph{Bias term}
We analyze the bias term. By the $\beta$-smooth condition,
\begin{align*}
|T^\textrm{b}(x)|
&=\left|\sum_{i=1}^n F(u_i)w^1(x,u_i)-F'(x)\right|\\
&=\left|\sum_{i=1}^n \left(F(u_i)-F(x)\right)w^1(x,u_i)\right|\\
&\le\sum_{i=1}^n \frac{|F^{(q)}(x+\tau_i(u_i-x))-F^{(q)}(x)|}{q!}(u_i-x)^q w^1(x,u_i)\quad \text{where} \ \tau_i\in [0,1]\\
&\le \sum_{i=1}^n L\frac{|u_i-x|^\alpha}{q!}|u_i-x|^q |w^1(x,u_i)|\\
&\le L\frac{h^\beta}{q!}\sum_{i=1}^n |w^1(x,u_i)|\\
&\lesssim h^{\beta-1}.
\end{align*}

\paragraph{Variance term}
We analyze the variance term. We aim for a uniform-type result. 
By Lemma~\ref{lemma:weights}, it holds that
\begin{align*}
\frac{1}{n}\sum_{i=1}^n\EE\left[(\tilde e_i w^1(x,u_i))^2\mid\{u_j\}_{j=1}^n\right]
\le \frac{1}{n}\sup_{i}|w^1(x,u_i)|\sum_{i=1}^n|w^1(x,u_i)|
\lesssim \frac{1}{n^2h^3\mu_h(x)}.
\end{align*}

Using Bernstein's inequality (conditionally on $\{u_i\}_{i=1}^n$), for fixed $x\in I_v[\underline{\hat z_{l-1}},\overline{\hat z_{l-1}}]$, we have with probability at least $1-\delta$,
\begin{align*}
|T^\textrm{v}(x)|=\left|\sum_{i=1}^n \tilde e_i w^1(x,u_i)\right|
\lesssim \sqrt{n\cdot \frac{\log(1/\delta)}{n^2h^3\mu_h(x)}}+\frac{\log(1/\delta)}{nh^2\mu_h(x)}.
\end{align*}

To get a uniform-type result, consider discretizing $I_v[\underline{\hat z_{l-1}},\overline{\hat z_{l-1}}]$ into $T^3$ equidistant grid points $\calG$, and apply a union bound together with the Lipschitzness of $w^1(x,u_i)$.
Specifically, for any $x\in I_v[\underline{\hat z_{l-1}},\overline{\hat z_{l-1}}]$, consider the nearest $x_0\in \calG$ to $x$. As $\sqrt{T}\le n\le T$ and by Lemma~\ref{lemma:weights}(iv),
\begin{align*}
\left|T^\textrm{v}(x_0)-T^\textrm{v}(x)\right| &\le \sum_{i=1}^n\left|w^1(x_0,u_i)-w^1(x,u_i)\right|\\
&\le C n |x-x_0|\le \frac{1}{T}
\end{align*}
for an appropriate choice of the grid size.

Therefore, with probability at least $1-\delta$, it holds that for any $x\in I_v[\underline{\hat z_{l-1}},\overline{\hat z_{l-1}}]$,
\begin{align*}
|T^\textrm{v}(x)|
&\lesssim \sqrt{n\cdot \frac{\log(T^3/\delta)}{n^2h^3\mu_h(x_0)}}+\frac{\log(T^3/\delta)}{nh^2\mu_h(x_0)}+\frac{1}{T}\\
&\lesssim \sqrt{n\cdot \frac{\log(T^3/\delta)}{n^2h^3\mu_h(x)}}+\frac{\log(T^3/\delta)}{nh^2\mu_h(x)}+\frac{1}{T}.
\end{align*}
Taking $\delta=T^{-3}$, the right-hand side becomes $\sqrt{\frac{\log T}{nh^3\mu_h(x)}}$ up to constants.

\paragraph{Mismatch term}
We analyze the design-misalignment error at a fixed $x$ first.
\[
|T^\textrm{m}(x)|=\sum_{i=1}^n w^1(x,u_i)\left[F(\tilde u_i)-F(u_i)\right],
\]
where we recall $w^1(x,u_i)$ are the weights of linear smoothing for first-derivative estimation. Although we can only bound $\sum_{i=1}^n |w^1(x,u_i)|\lesssim \frac{1}{h}$, tighter control is available using Taylor expansion when mild smoothness is satisfied, since we have $\sum_{i=1}^n w^1(x,u_i)(x-u_i)^k=\mathbf{1}(k=1)$ for $k\le p$.

Specifically, recall $\hat\Delta:=\hat m-m$, and note that $\beta\ge 2$ and weights are zero outside the local band $[x-h,x+h]$. We can rewrite $T^\textrm{m}(x)$ as 
\begin{align*}
T^\textrm{m}(x)
&=\sum_{i=1}^n w^1(x,\hat \phi_{l-1}^{-1}\circ (-\hat m)(\bx_i))
\Big[F\big(\hat \phi_{l-1}^{-1}\circ (-\hat m)(\bx_i)\big)-F\big(\hat \phi_{l-1}^{-1}\circ (-\hat m)(\bx_i)+\hat m(\bx_i)-m(\bx_i)\big)\Big]\\
&=\sum_{i=1}^n w^1(x,\hat \phi_{l-1}^{-1}\circ (-\hat m)(\bx_i))
F'\big(\hat \phi_{l-1}^{-1}\circ (-\hat m)(\bx_i)\big)\Big(\hat m(\bx_i)-m(\bx_i)\Big)+R_2\\
&=\sum_{i=1}^n w^1(x,\hat \phi_{l-1}^{-1}\circ (-\hat m)(\bx_i))
F'\big(\hat \phi_{l-1}^{-1}\circ (-\hat m)(\bx_i)\big)\hat \Delta(\bx_i)\,\mathds{1}\big(x-h\le \hat \phi_{l-1}^{-1}\circ (-\hat m)(\bx_i)\le x+h\big)+R_2.
\end{align*}
Note that we have defined $\mathfrak{m}(t)$ as the conditional mean of $\hat \Delta(\bx_i)/\|\hat\Delta\|_\infty$ given $\hat m(\bx_i)=t$. The misalignment error can be further written as 
\begin{align*}
T^\textrm{m}(x)
&=\overbrace{\|\hat\Delta\|_\infty\sum_{i=1}^n w^1(x,u_i)F'(u_i)\mathfrak{m}(\hat m(\bx_i))\mathds{1}\big(x-h\le u_i\le x+h\big)}^{T_1^\textrm{m}(x)}
+\overbrace{\|\hat\Delta\|_\infty\sum_{i=1}^n \delta_i(x)}^{T_2^\textrm{m}(x)}+R_2,
\end{align*}
where $\delta_i(x):=w^1(x,u_i)F'(u_i)\left(\hat \Delta(\bx_i)/\|\hat\Delta\|_\infty-\mathfrak{m}(\hat m(\bx_i))\right)$. The second-order remainder term can be bounded by
\begin{align*}
|R_2|\lesssim \frac{\|\hat m-m\|^2_\infty}{h}\le \|\hat\Delta\|_\infty.
\end{align*} 
We proceed to bound $T_1^\textrm{m}(x)$ and $T_2^\textrm{m}(x)$ separately.

Denote $q=F'\cdot(\mathfrak{m}\circ \hat\phi_{l-1})$. By Assumption~\ref{asp:conditional-mean-Lip}, $q$ is $\tilde L$-Lipschitz for some constant $\tilde L$. It follows that for any $x\in I_v[\underline{\hat z_{l-1}},\overline{\hat z_{l-1}}]$,
\begin{align*}
|T_1^\textrm{m}(x)|
&=\|\hat\Delta\|_\infty\Big|\sum_{i=1}^n w^1(x,u_i)q(u_i)\mathds{1}\big(x-h\le u_i\le x+h\big)\Big|\\
&=\|\hat\Delta\|_\infty\Big|\sum_{i=1}^n w^1(x,u_i)\big(q(u_i)-q(x)\big)\mathds{1}\big(x-h\le u_i\le x+h\big)\Big|
\\
&\le \|\hat\Delta\|_\infty\tilde L\sum_{i=1}^n \big|w^1(x,u_i)\big|\cdot \big|u_i-x\big|\mathds{1}\big(x-h\le u_i\le x+h\big)\\
&\lesssim  \|\hat\Delta\|_\infty.
\end{align*}

Note that for fixed $x$ and considering only the randomness of $\bx_i$, the $\delta_i(x)$ are independent and have zero mean. We can show
\begin{align*}
\sum_{i=1}^n\EE\big[\delta_i(x)^2\mid\{u_j\}_{j=1}^n\big]
&\lesssim \sup_i|w^1(x,u_i)|\cdot\sum_{i=1}^n |w^1(x,u_i)|
\lesssim \frac{1}{nh^3\mu_h(x)},\\
|\delta_i(x)|&\le \sup_i |w^1(x,u_i)|\lesssim \frac{1}{nh^2\mu_h(x)}.
\end{align*}

By Bernstein's inequality, it holds with probability at least $1-\delta$,
\begin{align*}
\sum_{i=1}^n\delta_i(x)
&\lesssim \sqrt{\log (1/\delta)\cdot \sum_{i=1}^n\EE\delta_i(x)^2}+\frac{\log(1/\delta)}{nh^2\mu_h(x)}\\
&\lesssim \sqrt{\frac{\log (1/\delta)}{nh^3\mu_h(x)}}+\frac{\log(1/\delta)}{nh^2\mu_h(x)}.
\end{align*}

To get a uniform-type result, we take the same discretization approach as analyzing the variance term. Consider discretizing $I_v[\underline{\hat z_{l-1}},\overline{\hat z_{l-1}}]$ into $T^3$ equidistant grid points $\calG$.
For any $x\in I_v[\underline{\hat z_{l-1}},\overline{\hat z_{l-1}}]$, let $x_0\in \calG$ be the closest grid point to $x$.
It holds that
\begin{align*}
|T_2^\textrm{m}(x)-T_2^\textrm{m}(x_0)|
&\lesssim \|\hat\Delta\|_\infty\sum_{i=1}^n \big|w^1(x,u_i)-w^1(x_0,u_i)\big|\\
&\le C n \|\hat\Delta\|_\infty |x-x_0|
\le \|\hat\Delta\|_\infty,
\end{align*}
for a suitable grid size and constant $C$.

Hence, taking a union bound, we have with probability at least $1-\delta$,
\begin{align*}
|T^{\textrm{m}}(x)|
\lesssim\|\hat\Delta\|_\infty\left(1+\sqrt{\frac{\log (T^3/\delta)}{nh^3\mu_h(x)}}+\frac{\log(T^3/\delta)}{nh^2\mu_h(x)}\right).
\end{align*}

Taking $\delta=T^{-3}$, from~\eqref{def:v}, we obtain with probability at least $1-T^{-3}$, for any $x\in I_v[\underline{\hat z_{l-1}},\overline{\hat z_{l-1}}]$,
\begin{align*}
|T^\textrm{m}(x)|\lesssim \|\hat\Delta\|_\infty.
\end{align*}

Combining the three sources of error, we obtain with probability at least $1-2T^{-3}$, for all $x\in I_v[\underline{\hat z_{l-1}},\overline{\hat z_{l-1}}]$,
\begin{align*}
|\hat F_l^{(1)}(x)-F'(x)|\lesssim \epsilon_m+\sqrt{\frac{\log T}{nh^3\mu_h(x)}}+h^{\beta-1}.
\end{align*}

\subsection{Induction: Step 3}
\label{sec:upper_bound_induction_3}
%Link: Algorithm~\ref{alg}.

Note that 
\begin{align*}
|\hat\phi_{l}^I(x)-\phi(x)|&=
\left|\frac{1-\hat F_l(x)}{\hat F^{(1)}_l(x)}-\frac{1-F(x)}{F'(x)}\right|\\
&\lesssim |\hat F_l(x)-F(x)|+ |\hat F_l^{(1)}(x)- F'(x)|.
\end{align*}
We have controlled the estimation error of $\hat F_l$ and $\hat F^{(1)}_l$, hence of $\hat\phi_{l}^I$. Now we are ready to analyze the post-smoothing convolution and perturbation to obtain~\eqref{equ:induction_regularity} at stage $l$. Meanwhile, we prove the same estimation error rate is preserved for $\hat\phi_l$.

Denote $e_l(x):=\hat \phi^I_l(x)-\phi(x)$. Recall the relative distance of $x$ in interval $[a,b]$ as $\alpha_{[a,b]}(x)=\min\{\frac{x-a}{b-a},\frac{b-x}{b-a}\}$. With no ambiguity, we also abbreviate $\alpha(x):=\alpha_{[\underline{\hat z_{l-1}},\overline{\hat z_{l-1}}]}(x)$ here.
For any $x\in I_v[\underline{\hat z_{l-1}},\overline{\hat z_{l-1}}]$, we have
\begin{align*}
|e_l(x)|\le Cn^{-\frac{\beta-1}{2\beta+1}}\sqrt{\frac{\log T}{\alpha(x)^\kappa}}+\epsilon_m:=b(x).
\end{align*}

In this subsection, we aim to prove that, after post-smoothing, $\hat \phi_{l}$ satisfies 
\begin{subequations}
\label{equ:induction_regularity}
\begin{align}
\hat \phi_{l}'\in [c,C] \quad\text{for some constants } \ C>c>0,\\
\hat\phi^{-1}_{l}(\underline u)\le \phi^{-1}(\underline u),\\
\hat\phi^{-1}_{l}(\bar u)\ge \phi^{-1}(\bar u),
\end{align}
\end{subequations}
where we recall that $[\underline u,\bar u]$ is the range of $-\hat m(\bx)$ and $[\phi^{-1}(\underline u),\phi^{-1}(\bar u)]=[\underline z,\bar z]$.

\paragraph{Post-smoothing}
The post-smoothing step is
\begin{align*}
\hat\phi_l^S(x)=\int_{-\infty}^\infty \hat \phi_l^I(x-t)K_{\delta_x}(t)dt.
\end{align*}
Note that here the bandwidth $\delta_x$ as a function of $x$ is location-dependent. We will decide later how to choose it adaptively from data.

We can decompose the estimation error as follows:
\begin{align*}
\hat\phi_l^S(x)-\phi(x)
&=\overbrace{\int_{-\infty}^\infty \left(\hat \phi_l^I(x-t)-\phi(x-t)\right)K_{\delta_x}(t)dt}^{T_1} 
+ \overbrace{\int_{-\infty}^\infty \Big(\phi(x-t)- \phi(x)\Big)K_{\delta_x}(t)dt}^{T_2}.
\end{align*}

We bound the two terms separately as
\begin{align*}
|T_1|\le \int_{-\infty}^\infty |e_l(x-t)| K_{\delta_x}(t)dt
\le \sup_{z\in [x-\delta_x,x+\delta_x]}|e_l(z)|
\end{align*}
and
\begin{align*}
|T_2|\lesssim \int_{-\infty}^\infty |t|K_{\delta_x}(t)dt\le\delta_x.
\end{align*}

As for the derivative of $\hat\phi_l^S(x)$, it holds that $(\hat\phi_l^S)'(x)=T_3+T_4+T_5$, where
\begin{align*}
T_3&=\int_{-\infty}^\infty \left((\hat\phi_l^I)'(x-t)-\phi'(x-t)\right)K_{\delta_x}(t)dt, \\
T_4&=\int_{-\infty}^\infty \phi'(x-t)K_{\delta_x}(t)dt,\\
T_5&=\int_{-\infty}^{\infty} \left(\hat \phi_l^I(x-t)-\phi(x-t)+\phi(x-t)-\phi(x)\right)\frac{dK_{\delta_x}(t)}{dx}dt.
\end{align*}

It holds that
\begin{align*}
|T_3|
&=\left|\int_{-\infty}^{\infty} e_l(x-t)K_{\delta_x}'(t)dt \right|\\
&\le \left|\int_{-\infty}^{\infty} e_l(x-t)\frac{1}{\delta_x^2}K'\left(\frac{t}{\delta_x}\right)dt \right|\\
&\le \sup_{z\in [x-\delta_x,x+\delta_x]}|e_l(z)|\cdot \left|\int_{-\infty}^{\infty} \frac{1}{\delta_x}\Big|K'\left(\frac{t}{\delta_x}\right)\Big|d\left(\frac{t}{\delta_x}\right) \right|\\
&\lesssim \frac{\sup_{z\in [x-\delta_x,x+\delta_x]}|e_l(z)|}{\delta_x}.
\end{align*}
And
\begin{align*}
c\le\inf_{z\in [x-\delta_x,x+\delta_x]}\phi'(z)\le T_4\le \sup_{z\in [x-\delta_x,x+\delta_x]}\phi'(z)\le C.
\end{align*}

Note that we have
\begin{align*}
\frac{dK_{\delta_x}(t)}{dx}
&=\frac{d\delta_x}{dx}\cdot\left[-\frac{1}{\delta_x^2}K\left(\frac{t}{\delta_x}\right)-\frac{t}{\delta_x^3}K'\left(\frac{t}{\delta_x}\right)\right].
\end{align*}
It follows that as long as, for some small constant $c$, it holds that $|\frac{d\delta_x}{dx}|\le c$ for every $x\in I_v[\underline{\hat z_{l-1}},\overline{\hat z_{l-1}}]$, we have
\begin{align*}
|T_5|
&\le\int_{-\infty}^{\infty} \left(|e_l(x-t)|+|t|\right)\left|\frac{dK_{\delta_x}(t)}{dx}\right|dt\\
&\le c\left(\sup_{z\in [x-\delta_x,x+\delta_x]}|e_l(z)|+\delta_x\right)\cdot \int_{-\infty}^{\infty}\left|\frac{1}{\delta_x^2}K\left(\frac{t}{\delta_x}\right)+\frac{t}{\delta_x^3}K'\left(\frac{t}{\delta_x}\right)\right|dt\\
&\le
c\left(\frac{\sup_{z\in [x-\delta_x,x+\delta_x]}|e_l(z)|}{\delta_x}+1\right).
\end{align*}

For every $x\in I_v[\underline{\hat z_{l-1}},\overline{\hat z_{l-1}}]$, take
\begin{align}
\label{def:delta_x}
\delta_x:=\bar\delta_{\alpha(x)}
=Cn^{-\frac{\beta-1}{2\beta+1}}\sqrt{\frac{\log T}{\alpha(x)^\kappa}}+\epsilon_m
\end{align}
for a constant $C$ large enough, enabling the first term in the above formula to be bounded by a constant. Here, $v$ is defined by
\begin{align}
\label{def:v-explicit}
\quad v=(C^2n^{-\frac{\beta-1}{2\beta+1}}\log^{1/2}T)^{\frac{2}{\kappa+2}}+C\epsilon_m,
\end{align}
and it holds that $v\ge  C\bar\delta_{v}$. Combining the analysis of the various terms above, there exist some constants $c_1,C_1$ such that 
\[
(\hat\phi_l^S)'(x), \phi'(x)\in [c_1,C_1]\quad \text{for all} \quad x\in I_v[\underline{\hat z_{l-1}},\overline{\hat z_{l-1}}].
\]

\paragraph{Perturbation}
In the perturbation step, for $x\in I_v[\underline{\hat z_{l-1}},\overline{\hat z_{l-1}}]$, we set
\begin{align*}
\hat\phi_l(x)=\hat\phi_l^S(x),
\end{align*} 
while for $x\in (-\infty,l_v[\underline{\hat z_{l-1}},\overline{\hat z_{l-1}}]]\cup [r_v[\underline{\hat z_{l-1}},\overline{\hat z_{l-1}}],\infty)$, we use linear extrapolation as the perturbation with slope $c_1/2$ (where we recall that $c_1$ is the lower bound of $(\hat\phi_l^S)'(x)$). 
In a nutshell, without perturbation, the points near the boundary may never be sampled, and cumulatively this incurs large regret.
The main purpose of this step is to ensure the last two inequalities in~\eqref{equ:induction_regularity}.

Specifically, define
\begin{subequations}
%\label{def:perturbation}
\begin{align}
\hat\phi_l(x)=\begin{cases}
\hat\phi_l^S(v_1)+\frac{c_1}{2}(x-v_1),\quad&\text{if}\ x\in (-\infty,v_1],\\
\hat\phi_l^S(x), &\text{if}\ x\in [v_1,v_2],\\
\hat\phi_l^S(v_2)+\frac{c_1}{2}(x-v_2),\quad&\text{if}\ x\in [v_2,\infty),
\end{cases}
\end{align}
\end{subequations}
where we abbreviate $v_1=l_v[\underline{\hat z_{l-1}},\overline{\hat z_{l-1}}]$, $v_2=r_v [\underline{\hat z_{l-1}},\overline{\hat z_{l-1}}]$.

We claim that the perturbation satisfies the following properties:

\paragraph{Claim 1.} It is an increasing function with derivative bounded in $[c_1/2,C_1]$.

\paragraph{Claim 2.} The range coverage property is satisfied:
\begin{align*}
\hat\phi_l(\underline{\hat z_{l-1}})\ge \phi(\underline{\hat z_{l-1}}),\qquad \hat\phi_l(\overline{\hat z_{l-1}})\le \phi(\overline{\hat z_{l-1}}).
\end{align*}

\paragraph{Claim 3.} The estimation error is controlled and hence does not inflate the regret:
\begin{align*}
|\hat\phi_l(x)-\phi(x)|\le \begin{cases}
b(x)\quad &\text{for}\ x\in [v_1,v_2],\\
\bar\delta_v\quad &\text{for}\ x\in [\underline{\hat z_{l-1}},v_1]\cup[v_2,\overline{\hat z_{l-1}}],
\end{cases}
\end{align*}
where $\bar\delta_v:=\bar\delta_{\alpha(v_1)}$.

Now we verify each claim.

\vspace{1em}
\noindent\textbf{Proof of Claim 1.}
On the extrapolation region, the derivative is $c_1/2$, and on the interior it lies in $[c_1,C_1]$ by construction. Hence $\hat\phi_l$ is increasing with derivative in $[c_1/2,C_1]$.

\vspace{1em}
\noindent\textbf{Proof of Claim 2.}
We have that
\begin{align*}
\hat\phi_l(\underline{\hat z_{l-1}})
&=\hat\phi_l(v_1)-\frac{c_1}{2}(v_1-\underline{\hat z_{l-1}})\\
&\ge \phi(v_1)-b(v_1)-\frac{c_1}{2}(v_1-\underline{\hat z_{l-1}})\\
&\ge \phi(v_1)-c_1(v_1-\underline{\hat z_{l-1}})\ge \phi(\underline{\hat z_{l-1}}),
\end{align*}
where we use that $\phi'(x)\ge c_1$ for $x\in [\underline{\hat z_{l-1}},v_1]$ and $b(v_1)\le \frac{c_1}{2}(v_1-\underline{\hat z_{l-1}})$. The proof of $\hat\phi_l(\overline{\hat z_{l-1}})\le \phi(\overline{\hat z_{l-1}})$ follows a similar line.

\vspace{1em}
\noindent\textbf{Proof of Claim 3.}
For $x\in [v_1,v_2]$, the bound follows immediately from the definition of $b(x)$. For $x\in [\underline{\hat z_{l-1}},v_1]$, it holds that
\begin{align*}
|\hat\phi_l(x)-\phi(x)|
&\le |\hat\phi_l(v_1)-\phi(v_1)|+|\hat\phi_l(x)-\hat\phi_l(v_1)|+|\phi(x)-\phi(v_1)|\\
&\le b(v_1)+\frac{c_1}{2}|x-v_1|+C_1|x-v_1|\\
&\lesssim \bar\delta_v,
\end{align*}
where we again use $b(v_1)\le \frac{c_1}{2}(v_1-\underline{\hat z_{l-1}})$ and Lipschitzness of both $\hat\phi_l$ and $\phi$. The case $x\in [v_2,\overline{\hat z_{l-1}}]$ is symmetric.

\qed

Consider $\hat \phi_{l}$ and $\hat m$ as fixed. Denote the density of $\hat \phi_{l}^{-1}(-\hat m(\bx))$ as $\mathrm{d}_l$, it follows that $\mathrm{d}_l$ is supported on $[\underline{u_l},\overline{u_l}]$ and satisfies the $\kappa$-decay condition as well as the regularity condition defined in Section~\ref{sec:upper_bound_induction_1}.

\subsection{Regret Analysis}
\label{sec:upper_bound_regret_analysis}
The regret can be decomposed into error from estimating $\hat m$ and from estimating $\hat F$. Specifically, we first bound the regret in stage $l\ge 1$. Suppose all the high-probability events defined before hold.

\begin{align*}
\operatorname{Reg}_l
&\lesssim \sum_{t\in \calT_l}(p_t^*-p_t)^2\\
&= \sum_{t\in \calT_l}\left(\phi^{-1}(-m(\bx_t))+m(\bx_t)-\hat \phi_l^{-1}(-\hat m(\bx_t))-\hat m(\bx_t)\right)^2\\
&\lesssim \overbrace{\sum_{t\in \calT_l}\left(\phi^{-1}(-m(\bx_t))- \phi^{-1}(-\hat m(\bx_t))\right)^2}^{T_1}
+\overbrace{\sum_{t\in \calT_l}\left(\phi^{-1}(-\hat m(\bx_t))-\hat \phi_l^{-1}(-\hat m(\bx_t))\right)^2}^{T_2}
\\&\qquad+\overbrace{\sum_{t\in \calT_l}\left(\hat m(\bx_t)-m(\bx_t)\right)^2}^{T_3}\\
&\lesssim|\calT_l|\left(\|\hat\Delta\|_\infty^2+|\calT_{l-1}|^{-\frac{\beta-1}{2\beta+1}}\right)\\
&\lesssim |\calT_l|\epsilon_m^2(T_0)+|\calT_l|^{\frac{3}{2\beta+1}}.
\end{align*}
It is relatively easier to bound $T_1$ and $T_3$. By $\phi'\ge c_1$, it is straightforward to show
\begin{align*}
T_1\le \frac{1}{c_1^2}T_3,
\quad \text{and}\quad
T_3\le |\calT_l|\cdot\|\hat\Delta\|_\infty^2\le |\calT_l|\cdot\epsilon_m^2(T_0).
\end{align*}

To bound $T_2$, denote $z_t=\phi^{-1}(-\hat m(\bx_t))$ and $\hat z_t=\hat \phi_l^{-1}(-\hat m(\bx_t))$. By $\hat\phi_l'\ge c_1/2$, it holds that
\begin{align*}
|\phi(z_t)-\hat\phi_l(z_t)|
&=|\hat\phi_l(\hat z_t)-\hat\phi_l(z_t)|
=\left|\int_{z_t}^{\hat z_t}\hat\phi_l'(z)dz\right|
\ge \frac{c_1}{2} |z_t-\hat z_t|.
\end{align*}
It follows that
\begin{align*}
|z_t-\hat z_t|\le \frac{2}{c_1}|\phi(z_t)-\hat\phi_l(z_t)|.
\end{align*}
Therefore, we have
\begin{align*}
T_2=\sum_{t\in \calT_l}(z_t-\hat z_t)^2\lesssim  \sum_{t\in \calT_l}|\phi(z_t)-\hat\phi_l(z_t)|^2.
\end{align*}

Conditional on $\hat m$, the $z_t$ are i.i.d. with density satisfying the $\kappa$-decay condition. Hence
\begin{align*}
\EE[T_2\big|\hat m]=|\calT_l|\cdot \EE\left[|\phi(z_t)-\hat\phi_l(z_t)|^2\Big|\hat m\right].
\end{align*}

Note that by range coverage, $[\underline z,\bar z]\subset [\underline z_l,\bar z_l]$. It holds that
\begin{align*}
\EE\left[|\phi(z_t)-\hat\phi_l(z_t)|^2\Big|\hat m\right]
&\lesssim \int_{0}^v \bar\delta_v^2 x^\kappa dx + \int_{v}^{1/2}b^2(l_x[\underline{\hat z_{l-1}},\overline{\hat z_{l-1}}])x^\kappa dx + \int_{1-v}^1 \bar\delta_v^2 x^\kappa dx\\
&\lesssim 2v^{\kappa+1}\bar\delta_v^2+\int_{v}^{1-v}C^2|\calT_l|^{-\frac{2\beta-2}{2\beta+1}}\log T \, dx+\epsilon_m^2(T_0)\\
&\lesssim |\calT_l|^{-\frac{2\beta-2}{2\beta+1}}\log T+\epsilon_m^2(T_0),
\end{align*}
where the first inequality follows from the $\kappa$-decay condition, the second inequality is due to Claim 3, and the last is by the definition of $v$ in \eqref{def:v-explicit} and of $\bar\delta_v$.

Note that $|\phi(z_t)-\hat\phi_l(z_t)|$ is uniformly bounded by $\bar\delta_v+\epsilon_m$, hence
\begin{align*}
\EE\left[|\phi(z_t)-\hat\phi_l(z_t)|^4\Big|\hat m\right]
\lesssim \bar\delta_v^2|\calT_l|^{-\frac{2\beta-2}{2\beta+1}}\log T+\epsilon_m^4(T_0).
\end{align*}

By Bernstein's inequality, we have with probability at least $1-\delta$,
\begin{align*}
T_2
&\lesssim \EE[T_2\big|\hat m]+\sqrt{|\calT_l|\cdot\log(1/\delta)\cdot\EE\left[|\phi(z_t)-\hat\phi_l(z_t)|^4\Big|\hat m\right]}\ + (\bar\delta_v+\epsilon_m)\log(1/\delta)\\
&\lesssim |\calT_l|\cdot |\calT_l|^{-\frac{2\beta-2}{2\beta+1}}\log T+|\calT_l|\epsilon_m^2(T_0).
\end{align*}

We can also naively bound the regret incurred in the initial stage as $\calO(T_0)$.
Adding up the stages yields, with probability at least $1-\calO(T^{-3})$,
\begin{align*}
\operatorname{Regret}(T)
&=\operatorname{Reg}_0+\sum_{l=1}^L \operatorname{Reg}_l\\
&\lesssim T_0+\sum_{l=1}^L |\calT_l|\cdot\epsilon_m^2(T_0)+|\calT_l|^{\frac{3}{2\beta+1}}\log T\\
&\le T_0+T\epsilon_m^2(T_0)+T^{\frac{3}{2\beta+1}}\log T.
\end{align*}

As a byproduct, we obtain
\begin{align*}
\EE\operatorname{Regret}(T)\lesssim T_0+T\epsilon_m^2(T_0)+T^{\frac{3}{2\beta+1}}\log T.
\end{align*}

This concludes the proof.
\qed

%% file: app-Proof_lb.tex
\section{Proof of Lower bound: Theorem~\ref{thm:lower_bound}}
\label{sec:proof_lb}

\subsection{Lower Bound 1}
In this section, we prove a lower bound of $\Omega(\sqrt{T})$ in the case where $m(\bx)$ is a linear function. The proof is in part inspired by \cite{broder2012dynamic}. {Throughout this subsection we restrict attention to one-dimensional contexts $d=1$, and we will consider instances $(m,F)$ belonging to the function class used in the upper bound, i.e., satisfying Assumptions~\ref{asp:transform-strict-increasing}--~\ref{asp:conditional-mean-Lip}.}
%\g{Assumption 5 needs to be verified.}

We start by constructing a hard instance. For simplicity, we consider the case that the context is one-dimensional, i.e., $d=1$. 
Define the function classes $\mathfrak{F}$ and $\mathfrak{M}$ as
\begin{align*}
\mathfrak{F}
&=\Big\{F_{a,b}:\RR\to[0,1]\;\Big|\;
F_{a,b}(u):=\calT_{[0,1]}(au+b),\ a>0\Big\},\\
\mathfrak{M}
&=\Big\{m_\theta:[0,1]\to\RR\;\Big|\;m_\theta(x):=\theta x\Big\},
\end{align*}
where $\calT_{[0,1]}$ denotes truncation to $[0,1]$.
{We assume that the algorithm only posts prices in a compact interval $\calP\subset(0,1)$ (here chosen explicitly below), so that the truncation never binds on $\calP$ and the model coincides with the linear CDF $u\mapsto au+b$ on the relevant region.}

The demand model is determined jointly by the CDF and the mean utility. Let the two instances be $(F_{a_i,b_i},m_{\theta_i})$ for $i=0,1$, and denote the conditional distribution of $y_t$ given $(x_t,p_t)$ under instance $i$ by $Q_i(\cdot\mid x_t,p_t)$. Specifically, for  $\tilde\epsilon=T^{-1/4}$, choose~\footnote{Note that for the optimal price of both instances, one approximately has $p^*(u)-u\approx\frac{1}{4}-\frac{u}{2}$ for $u\in [-1/2,3/2]$. This motivates the choice of parameters but will not be used explicitly.}
\begin{align*}
a_0=1,\quad b_0=\frac{1}{2},\qquad
a_1=1-\tilde \epsilon,\quad b_1=\frac{1}{2}+\frac{\tilde\epsilon}{4},\\
\theta_0=1-\tilde\epsilon,\qquad 
\theta_1=1-\frac{\tilde\epsilon}{2}.
\end{align*}

For both instances, the price interval is taken to be $\calP=[1/8,7/16]$, and the context is sampled i.i.d.\ from the uniform distribution on $\calX=[0,1/4]$.

{We first verify that these instances belong to the admissible class.}
\begin{itemize}
\item \emph{Feature diversity.} Since $X\sim\mathrm{Unif}[0,1/4]$, the density of $X$ is constant on its support, so Assumption~\ref{asp:feature-diversity} holds with exponent $\kappa=0$. The induced distribution of $-m_{\theta_i}(X)$ is also uniform on $[-\theta_i/4,0]$ and hence exhibits polynomial boundary decay with exponent $\kappa=0$.
\item \emph{Smoothness of $F$.} On the interval
\[
[-1/2,1/2]\cap\Big[-\frac{1/2-\tilde\epsilon/4}{1-\tilde\epsilon},\frac{1/2+\tilde\epsilon/4}{1-\tilde\epsilon}\Big],
\]
both $F_{a_0,b_0}$ and $F_{a_1,b_1}$ are linear and hence $C^\infty$. For $\tilde\epsilon$ small enough, this interval contains $\calU=\calP-\calX=[-1/8,7/16]$, so $F_{a_i,b_i}$ are $\beta$-Hölder smooth on $\calU$ for any $\beta>0$. 

%{Outside $\calU$ we can smoothly extend each $F_{a_i,b_i}$ to a CDF on $\RR$ while preserving global $\beta$-smoothness; such an extension does not affect the regret on horizon $T$ since the algorithm never observes prices or utilities outside $\calP$.}

% \item \emph{Range of optimal prices.} A direct computation shows that the optimal price under instance $i$ is
% \[
% p_i^*(x)=\frac{1+2x\theta_i}{4},
% \]
% so for $x\in[0,1/4]$ and $\tilde\epsilon$ small enough we have
% \[
% p_i^*(x)\in[1/8-\tilde\delta,7/16-\tilde\delta],\qquad \forall x,
% \]
% for some small constant $\tilde\delta>0$. Thus Assumption~\ref{asp:price-range} (range of optimal prices) is satisfied.

\item \emph{Transform strictly increasing.} For $F_{a_i,b_i}$ we have, on the linear region,
\[
\phi_i(u)
=u-\frac{1-F_{a_i,b_i}(u)}{F_{a_i,b_i}'(u)}
=u-\frac{1-a_i u-b_i}{a_i}
=2u-\frac{1-b_i}{a_i},
\]
so $\phi_i'(u)=2$ for all $u$ in the region of interest. Thus Assumption~\ref{asp:transform-strict-increasing} holds with $c_\phi=C_\phi=2$. 
\end{itemize}

Conditional mean regularity and estimation accuracy follow directly from the same arguments used in the upper-bound analysis of the linear model. Hence both instances $(F_{a_i,b_i},m_{\theta_i})$, $i=0,1$, satisfy all standing assumptions of the upper-bound analysis.

We now prove two lemmas which constitute the main pillars of this lower bound. For any given pricing policy $\pi$, let $Q_{i,\pi}^t$ denote the distribution of the trajectory observed up to time $t$ under instance $i$ and policy $\pi$. The first lemma relates the KL divergence between $Q_{0,\pi}^T$ and $Q_{1,\pi}^T$ to the regret under instance~\emph{0}. The second lemma shows that the sum of regrets under the two instances is large whenever the KL divergence is small.

\begin{lemma}
There exists some constant $C>0$ such that 
\begin{align*}
\operatorname{KL}(Q_{0,\pi}^T,Q_{1,\pi}^T)\le C\tilde\epsilon^2 \operatorname{Regret}_0(\pi,T).
\end{align*}
\end{lemma}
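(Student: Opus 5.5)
The plan is to apply the chain rule for KL divergence to the trajectory laws and to show that each one-step term is bounded by $\tilde\epsilon^2$ times the instantaneous regret under instance $0$. The trajectory up to time $T$ consists of $(x_t,p_t,y_t)_{t\le T}$. Under both instances the contexts have the same law $\mathrm{Unif}[0,1/4]$, and $p_t$ is the same (possibly randomized) function of the past under the fixed policy $\pi$. Only the conditional law of $y_t$ given $(x_t,p_t)$ differs between the instances. The chain rule therefore gives
\[
\operatorname{KL}(Q_{0,\pi}^T,Q_{1,\pi}^T)=\sum_{t=1}^T \EE_{0,\pi}\Big[\operatorname{KL}\big(Q_0(\cdot\mid x_t,p_t),\,Q_1(\cdot\mid x_t,p_t)\big)\Big],
\]
and each summand is a KL divergence between two Bernoulli laws with means $q_i(x,p)=1-a_i(p-\theta_i x)-b_i$. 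The truncation in $F_{a_i,b_i}$ never binds here, because $p-\theta_i x\in\calU$ and $\calU$ lies inside the linear region verified above.

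\textbf{Step 1 (one-step KL).} I would use the standard bound $\operatorname{KL}(\mathrm{Ber}(q_0),\mathrm{Ber}(q_1))\le (q_0-q_1)^2/\big(q_1(1-q_1)\big)$. For $p\in\calP=[1/8,7/16]$, $x\in[0,1/4]$ and $\tilde\epsilon$ small, $q_1$ stays in a compact subinterval of $(0,1)$. For instance, under instance $0$ we have $q_0=1/2-p+\theta_0x\in[1/16,5/8]$, and $q_1$ differs from $q_0$ by $O(\tilde\epsilon)$. Hence the one-step KL is at most $C(q_0-q_1)^2$.

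\textbf{Step 2 (key algebraic identity).} I expand the difference of purchase probabilities:
\[
q_0-q_1=-\tilde\epsilon p+\theta_0x-(1-\tilde\epsilon)\theta_1x+\tfrac{\tilde\epsilon}{4}.
\]
Since $(1-\tilde\epsilon)(1-\tilde\epsilon/2)=1-\tfrac32\tilde\epsilon+\tfrac12\tilde\epsilon^2$, the coefficient of $x$ is $\tfrac{\tilde\epsilon}{2}-\tfrac{\tilde\epsilon^2}{2}$. This gives
\[
q_0-q_1=\tilde\epsilon\Big(\tfrac14+\tfrac{(1-\tilde\epsilon)x}{2}-p\Big).
\]
Under instance $0$ the revenue is $r_0(x,p)=p\,(1/2+\theta_0x-p)$. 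This is maximized at $p_0^*(x)=\tfrac14+\tfrac{\theta_0x}{2}=\tfrac14+\tfrac{(1-\tilde\epsilon)x}{2}$, which lies in the interior of $\calP$. Therefore $q_0-q_1=\tilde\epsilon\,(p_0^*(x)-p)$ exactly. This identity is the crux of the argument: the parameters were tuned so that the two instances are indistinguishable precisely at the optimal price of instance $0$. It is the step I would double-check most carefully, since any leftover term of order $\tilde\epsilon$ not proportional to $p-p_0^*$ would destroy the bound.

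\textbf{Step 3 (link to regret).} The revenue under instance $0$ is an exact concave quadratic, so $r_0(x,p_0^*)-r_0(x,p)=(p-p_0^*(x))^2$. Combining Steps 1–3 gives
\[
\operatorname{KL}(Q_{0,\pi}^T,Q_{1,\pi}^T)\le C\tilde\epsilon^2\sum_{t=1}^T\EE_{0,\pi}\big[(p_t-p_0^*(x_t))^2\big]=C\tilde\epsilon^2\,\operatorname{Regret}_0(\pi,T),
\]
where $\operatorname{Regret}_0$ is understood as the expected regret under instance $0$. The only remaining care is to check the boundedness of $q_1$ away from $0$ and $1$ in Step 1, which is a routine range calculation.
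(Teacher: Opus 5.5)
Your proposal is correct and follows essentially the same route as the paper. Both use the chain rule for the trajectory KL, a $\chi^2$-type bound on each Bernoulli KL (valid because the purchase probabilities stay bounded away from $0$ and $1$), the exact identity $q_0-q_1=\tilde\epsilon\,(p_0^*(x)-p)$, and the exact quadratic regret $r_0(x,p_0^*)-r_0(x,p)=(p-p_0^*(x))^2$ under instance $0$. Your version is slightly more careful: you verify that identity explicitly where the paper only asserts it, and you use the correct $\chi^2$ denominator $q_1(1-q_1)$, whereas the paper writes $F_0(1-F_0)$, which is harmless up to constants here.
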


\begin{proof}
We recursively decompose the KL divergence (denote the trajectory up to $T$ by $\tau_T$):
\begin{align}
\operatorname{KL}(Q_{0,\pi}^T,Q_{1,\pi}^T)
&=\EE_0\log\left(\frac{Q_{0,\pi}^T(\tau_T)}{Q_{1,\pi}^T(\tau_T)}\right)\nonumber\\
&=\sum_{t=1}^T \EE_0\log\left(\frac{Q_{0,\pi}^t(y_t\mid x_t,p_t)}{Q_{1,\pi}^t(y_t\mid x_t,p_t)}\right)\nonumber\\
&=\sum_{t=1}^T \EE_0\operatorname{KL}\big(Q_{0,\pi}^t(\cdot\mid x_t,p_t),Q_{1,\pi}^t(\cdot\mid x_t,p_t)\big).
\label{equ:lower-1-KL-decomposition}
\end{align}
For fixed $(x_t,p_t)$, the distribution of $y_t$ under instance $i$ is Bernoulli with mean
\[
q_i(x_t,p_t):=1-F_i(p_t-x_t\theta_i),
\]
and for our choice of parameters one checks that $q_i(x_t,p_t)\in [1/32,3/4]$ for both $i=0,1$ and all admissible $(x_t,p_t)$. Hence the KL divergence between the two Bernoulli distributions can be bounded as
\begin{align}
\operatorname{KL}\big(Q_{0,\pi}^t(\cdot\mid x_t,p_t),Q_{1,\pi}^t(\cdot\mid x_t,p_t)\big)
&\le \frac{\big(F_0(p_t-x_t\theta_0)-F_1(p_t-x_t\theta_1)\big)^2}{F_0(p_t-x_t\theta_0)\big(1-F_0(p_t-x_t\theta_0)\big)}\nonumber\\
&\le C_0\tilde\epsilon^2\big(p_t-p_0^*(x_t)\big)^2,
\label{equ:lower-1-KL-calculation}
\end{align}
where $p_0^*(x_t)=\frac{1+2x_t\theta_0}{4}$ is the optimal price under instance \emph{0}, and $C_0$ is an absolute constant (e.g.\ $C_0=64$ suffices).

On the other hand, a direct computation shows that under instance 0,
\begin{align*}
p_0^*(x_t)\big(1-F_0(p_0^*(x_t)-x_t\theta_0)\big)-p_t\big(1-F_0(p_t-x_t\theta_0)\big)
=\big(p_0^*(x_t)-p_t\big)^2.
\end{align*}
Hence
\begin{align}
\label{equ:lower-1-regret-to-price}
\operatorname{Regret}_0(\pi,T)
=\sum_{t=1}^T\EE_{0,\pi}\big(p_t-p_0^*(x_t)\big)^2.
\end{align}

Combining \eqref{equ:lower-1-KL-decomposition}, \eqref{equ:lower-1-KL-calculation}, and \eqref{equ:lower-1-regret-to-price} yields the desired bound with $C=C_0$.
\end{proof}

\begin{lemma}
There exists some constant $c>0$ such that
\begin{align*}
\operatorname{Regret}_0(\pi,T)+\operatorname{Regret}_1(\pi,T)
\ge cT\tilde\epsilon^2 e^{-\operatorname{KL}(Q_{0,\pi}^T,Q_{1,\pi}^T)}.
\end{align*}
\end{lemma}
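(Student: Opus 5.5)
The plan is a standard two-point argument. First I show that the optimal prices of the two instances are separated by a gap of order $\tilde\epsilon$, uniformly in the context. Then I convert "being close to one optimal price" into a testing problem and apply the Bretagnolle--Huber inequality at each time step.

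\emph{Step 1 (uniform separation of optimal prices).} On the linear region the revenue under instance $i$ is $p\,(1-b_i-a_i(p-\theta_i x))$, which is a concave quadratic in $p$. Its maximizer is $p_i^*(x)=\frac{1-b_i+a_i\theta_i x}{2a_i}$, and
\[
r_i(x,p_i^*(x))-r_i(x,p)=a_i\bigl(p-p_i^*(x)\bigr)^2 .
\]
Plugging in the chosen parameters, I expect
\[
p_1^*(x)-p_0^*(x)=\frac{\tilde\epsilon}{8(1-\tilde\epsilon)}+\frac{\tilde\epsilon x}{4}.
\]
Hence $p_1^*(x)-p_0^*(x)\ge \tilde\epsilon/8$ for all $x\in[0,1/4]$. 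I also need both $p_i^*(x)$ to lie inside $\calP=[1/8,7/16]$ for small $\tilde\epsilon$, so that projecting onto $\calP$ does not help the policy. Since $a_i\ge 1/2$, each per-period regret under instance $i$ is at least $\frac12(p_t-p_i^*(x_t))^2$.

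\emph{Step 2 (reduction to testing).} Let $\bar p(x):=\frac12\bigl(p_0^*(x)+p_1^*(x)\bigr)$ and define the event $A_t:=\{p_t\ge \bar p(x_t)\}$. This event is measurable with respect to the trajectory observed up to time $t$. If the policy is randomized, I include its internal randomness in the trajectory; this leaves the KL divergence unchanged because that randomness has the same law under both instances. On $A_t$ we have $|p_t-p_0^*(x_t)|\ge \tilde\epsilon/16$, and on $A_t^c$ we have $|p_t-p_1^*(x_t)|\ge\tilde\epsilon/16$. Therefore
\[
\operatorname{Regret}_0+\operatorname{Regret}_1\;\ge\; c\,\tilde\epsilon^2\sum_{t=1}^T\Bigl(Q_{0,\pi}^t(A_t)+Q_{1,\pi}^t(A_t^c)\Bigr).
\]

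\emph{Step 3 (Bretagnolle--Huber and monotonicity of KL).} For each $t$, the Bretagnolle--Huber inequality gives
\[
Q_{0,\pi}^t(A_t)+Q_{1,\pi}^t(A_t^c)\ \ge\ \tfrac12\exp\bigl(-\operatorname{KL}(Q_{0,\pi}^t,Q_{1,\pi}^t)\bigr).
\]
By the chain-rule decomposition \eqref{equ:lower-1-KL-decomposition}, $\operatorname{KL}(Q_{0,\pi}^t,Q_{1,\pi}^t)$ is a partial sum of nonnegative conditional KL terms. It is therefore nondecreasing in $t$ and bounded by $\operatorname{KL}(Q_{0,\pi}^T,Q_{1,\pi}^T)$. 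The contexts $x_s$ have the same law under both instances, so they contribute no KL. Summing over $t$ gives the claim with $c$ an absolute constant.

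The only step needing care is Step 1. I must verify the explicit gap and that both optimal prices stay in the interior of $\calP$, so that the quadratic regret identity holds on the region where the truncation in $F_{a,b}$ never binds. The remaining steps are routine.
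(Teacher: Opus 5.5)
Your proof is correct and follows essentially the same route as the paper: you establish a uniform $\tilde\epsilon/8$ gap between $p_0^*(x)$ and $p_1^*(x)$, lower-bound the summed regrets by a per-period two-point testing error for an event depending on $(x_t,p_t)$, and finish with a Bretagnolle--Huber bound (the paper cites Tsybakov's Theorem~2.2) together with monotonicity of the trajectory KL. The differences are cosmetic: you use the midpoint event $\{p_t\ge\bar p(x_t)\}$ where the paper uses the ball $E_{0,t}$ around $p_0^*$, and you put $Q^t$ rather than $Q^{t-1}$ in the exponent; your formula $p_i^*(x)=(1-b_i+a_i\theta_i x)/(2a_i)$ is also more accurate than the paper's stated $p_i^*(x)=(1+2x\theta_i)/4$ for instance~1, although both give the same gap.
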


\begin{proof}
We look at the regret stepwise. Define the event
\begin{align*}
E_{0,t}
=\left\{\Big|p_t-p_0^*(x_t)\Big|< \frac{|p_1^*(x_t)-p_0^*(x_t)|}{2}\right\},
\end{align*}
where $p_i^*(x)=\frac{1+2x\theta_i}{4}$ is the optimal price under instance $i$.
A simple computation gives
\[
|p_0^*(x_t)-p_1^*(x_t)|
=\frac{\tilde\epsilon x_t}{4}+\frac{\tilde\epsilon/8}{1-\tilde\epsilon}
\ge \frac{\tilde\epsilon}{8},
\]
for all $x_t\in[0,1/4]$ and $\tilde\epsilon$ small enough. Therefore,
\begin{align*}
\operatorname{Regret}_0(\pi,T)+\operatorname{Regret}_1(\pi,T)
&=\sum_{t=1}^T \EE_{0,\pi}\big(p_t-p_0^*(x_t)\big)^2
+\EE_{1,\pi}(1-\tilde\epsilon)\big(p_t-p_1^*(x_t)\big)^2\\
&\ge
\frac{1}{2}\sum_{t=1}^T 
\EE_{0,\pi}\big[\mathds{1}(E_{0,t}^c)\big]\Big(\tfrac{\tilde\epsilon}{16}\Big)^2
+\EE_{1,\pi}\big[\mathds{1}(E_{0,t})\big]\Big(\tfrac{\tilde\epsilon}{16}\Big)^2\\
&\stackrel{(a)}{\ge} \frac{\tilde\epsilon^2}{1024}
\sum_{t=2}^T e^{-\operatorname{KL}(Q_{0,\pi}^{t-1},Q_{1,\pi}^{t-1})}\\
&\ge cT\tilde\epsilon^2 e^{-\operatorname{KL}(Q_{0,\pi}^{T},Q_{1,\pi}^{T})},
\end{align*}
where we have set $c=1/1024$, and $(a)$ follows from Theorem 2.2 in \cite{Tsybakov2009}, applied to the binary hypotheses corresponding to instances~\emph{0} and~\emph{1}.
\end{proof}

We now combine the two lemmas to conclude the lower bound. For any policy $\pi$,
{\begin{align*}
\max_{(m,F)\in\{(m_{\theta_0},F_{a_0,b_0}),(m_{\theta_1},F_{a_1,b_1})\}}\operatorname{Regret}(\pi, T)
&\ge \frac{1}{2}\Big(\operatorname{Regret}_0(\pi,T)+\operatorname{Regret}_1(\pi,T)\Big)\\
&\ge\frac{1}{2}\Big[cT\tilde\epsilon^2 e^{-\operatorname{KL}(Q_{0,\pi}^{T},Q_{1,\pi}^{T})}\Big].
\end{align*}}
{On the other hand, Lemma~1 implies}
\[
{\operatorname{KL}(Q_{0,\pi}^{T},Q_{1,\pi}^{T})\le C\tilde\epsilon^2\operatorname{Regret}_0(\pi,T).}
\]
{Let $x:=\operatorname{KL}(Q_{0,\pi}^{T},Q_{1,\pi}^{T})\ge 0$. Then $\operatorname{Regret}_0(\pi,T)\ge x/(C\tilde\epsilon^2)$, and hence}
\[
{\operatorname{Regret}_0(\pi,T)+\operatorname{Regret}_1(\pi,T)
\ge \max\left\{cT\tilde\epsilon^2 e^{-x},\frac{x}{C\tilde\epsilon^2}\right\}\ge \frac{1}{2}\left(cT\tilde\epsilon^2 e^{-x}+\frac{x}{C\tilde\epsilon^2}\right).}
\]
{The right-hand side is a function of $x$ whose infimum over $x\ge 0$ is of order $\sqrt{T}$: more precisely, there exists a universal constant $\tilde c>0$ (depending only on $c$ and $C$) such that}
\[
{\inf_{x\ge 0}\Big\{cT\tilde\epsilon^2 e^{-x}+\frac{x}{C\tilde\epsilon^2}\Big\}
\;\ge\; \tilde c\sqrt{T},}
\] Consequently,
\[
{\max_{(m,F)}\operatorname{Regret}(\pi, T)\;\ge\;\frac{1}{2}\tilde c\sqrt{T}
=:c_0\sqrt{T}.}
\]
This establishes the $\Omega(\sqrt{T})$ lower bound for the class of linear $m$ and $\beta$-smooth $F$ satisfying Assumptions~\ref{asp:transform-strict-increasing}-\ref{asp:conditional-mean-Lip}.
\qed

\subsection{Lower Bound 2}
We next prove a lower bound of $\Omega(T^{\frac{3}{2\beta+1}})$ using a standard construction for minimax risk in nonparametric problems~\citep{Tsybakov2009} and Fano's method; see also \cite{wang2025tight}. {The instances constructed here again satisfy the same assumptions as in the upper-bound theorem: the context satisfies feature diversity with exponent $\kappa=0$, and the CDFs $F$ are $\beta$-smooth with a transform $\phi$ whose derivative is bounded away from zero.}

In this section, we again consider $d=1$. Let $m(x)=x$ and $X\sim \mathrm{Unif}[-1/4,1/4]$, so Assumption~\ref{asp:feature-diversity} holds with $\kappa=0$ and $(\underline u,\bar u)=(-1/4,1/4)$. Meanwhile, this is the linear known-utility case and Assumption~\ref{asp:conditional-mean-Lip} holds as well.

Define the baseline CDF on $[-1/4,1/4]$ as 
\begin{align*}
F_0(u)=u+\frac{1}{2}, \quad \forall u\in [-1/4,1/4].
\end{align*}
{We extend $F_0$ smoothly to a CDF on $\RR$ so that $F_0$ is $\beta$-smooth on $\RR$ and coincides with $u\mapsto u+1/2$ on $[-1/4,1/4]$.}

We then construct a family of CDFs by adding small localized bumps onto the baseline. Pick an integer parameter $N$ and set $h=1/N$. Choose an infinitely smooth function $V$ with support $[-1,1]$. Discretize $[-1/4,1/4]$ into $N$ evenly spaced intervals $I_1,\cdots, I_N$ and denote by $\mu_j$ the midpoint of $I_j$. For each $\iota=(\iota_1,\ldots,\iota_N)\in \{0,1\}^{N}$, define
\begin{align*}
F_{\iota}(u)
=
F_0(u)+\sum_{j=1}^N \iota_j\, \rho\, h^\beta \,
V\!\left(\frac{u-\mu_j}{h}\right),\qquad u\in I_j,
\end{align*}
{The parameter $\rho>0$ will be chosen small enough (depending on $V$ and $\beta$) so that all $F_\iota$ satisfy the regularity and transform conditions.}

We will use the following three claims to complete the proof.

\vspace{1em}
\textbf{Claim 1.} $F_\iota$ satisfies the prescribed assumption that $\phi_\iota'(u)\ge c_\phi>0$ on $[-1/4,1/4]$, where 
\[
\phi_\iota(u)=u-\frac{1-F_\iota(u)}{F'_\iota(u)}.
\]

\textbf{Claim 2.} There exists a subset $\Theta\subset \{0,1\}^N$ such that for every distinct $\iota_1,\iota_2\in \Theta$, one has $d(\iota_1,\iota_2)\ge cN$ and $\log (|\Theta|)\ge cN$. {Furthermore, taking $h=T^{-1/(2\beta+1)}$ and applying Fano's inequality, one obtains}
\[
{\EE[\operatorname{Regret}(T)]\gtrsim T\cdot\inf_{\iota,\iota'\in\Theta}\EE\big[p_\iota(U)-p_{\iota'}(U)\big]^2,}
\]
{where $U=m(X)\sim \mathrm{Unif}[-1/4,1/4]$ and $p_\iota(\cdot)$ denotes the optimal price under CDF $F_\iota$.}

\textbf{Claim 3.} $\inf_{\iota,\iota'\in\Theta}\EE\big[p_\iota(U)-p_{\iota'}(U)\big]^2\gtrsim h^{2(\beta-1)}$.

Combining Claims 2 and 3 and the choice $h=T^{-1/(2\beta+1)}$ yields the desired $\Omega(T^{\frac{3}{2\beta+1}})$ lower bound.

\vspace{1em}
\textbf{Proof of Claim 1.}
Recall that 
\[
\phi_\iota(u)=u-\frac{1-F_\iota(u)}{F'_\iota(u)}.
\]
Since $F_0(u)=u+1/2$ on $[-1/4,1/4]$, we have $F_0'(u)=1$ and $F_0''(u)=0$ there. For each $j$,
\[
F_\iota'(u)
=F_0'(u)+\sum_{j=1}^N\iota_j\,\rho\,h^{\beta-1}
V'\!\left(\frac{u-\mu_j}{h}\right),\qquad
F_\iota''(u)
=\sum_{j=1}^N\iota_j\,\rho\,h^{\beta-2}
V''\!\left(\frac{u-\mu_j}{h}\right),
\]
for $u\in I_j$.
Let $M_1:=\sup_{z\in\RR}|V'(z)|$ and $M_2:=\sup_{z\in\RR}|V''(z)|$. Choosing $\rho>0$ small enough and $h$ sufficiently small guarantees
\[
|F_\iota'(u)-1|\le \rho M_1 h^{\beta-1}\le \rho'\qquad\text{and}\qquad
|F_\iota''(u)|\le \rho M_2 h^{\beta-2}\le \rho',
\]
for some $\rho'\in(0,1/5)$ and all $u\in[-1/4,1/4]$. In particular,
\[
1-\rho'\le F_\iota'(u)\le 1+\rho',\qquad |F_\iota''(u)|\le \rho'.
\]

Differentiating $\phi_\iota(u)$ yields
\begin{align*}
\phi_\iota'(u)
&=1-\frac{-F'_\iota(u)F'_\iota(u)-(1-F_\iota(u))F''_\iota(u)}{(F'_\iota(u))^2}
=2+\frac{F_\iota''(u)\big(1-F_\iota(u)\big)}{(F'_\iota(u))^2}.
\end{align*}
On $[-1/4,1/4]$ we have $F_\iota(u)\in[1/4,3/4]$ for $\rho$ small, so $|1-F_\iota(u)|\le 1$ and $(F'_\iota(u))^2\ge (1-\rho')^2$. Therefore,
\[
\phi_\iota'(u)
\ge 2-\frac{|F_\iota''(u)|}{(1-\rho')^2}
\ge 2-\frac{\rho'}{(1-\rho')^2}.
\]
Taking, for instance, $\rho'\le 1/5$ gives $\phi_\iota'(u)\ge 1/8$ on $[-1/4,1/4]$. Thus Assumption~\ref{asp:transform-strict-increasing} holds for all $F_\iota$ with a common constant $c_\phi>0$.

\vspace{1em}
\textbf{Proof of Claim 2.}
The existence of $\Theta\subset\{0,1\}^N$ with $d(\iota_1,\iota_2)\ge cN$ and $\log|\Theta|\ge cN$ follows from the Gilbert–Varshamov bound (see, e.g., \cite{Tsybakov2009}). 

For the second part, we apply Fano's inequality to the family of demand models $\{F_\iota:\iota\in\Theta\}$. Let $P_\iota(\cdot\mid u)$ denote the Bernoulli distribution with mean $1-F_\iota(u)$. For any $u$ and any distinct $\iota,\iota'$,
\begin{align*}
\operatorname{KL}\big(P_\iota(\cdot\mid u)\,\Vert\,P_{\iota'}(\cdot\mid u)\big)
&\lesssim \big(F_\iota(u)-F_{\iota'}(u)\big)^2
\lesssim \rho^2 h^{2\beta},
\end{align*}
because the two CDFs differ by at most one bump of height $\rho h^\beta$ on each interval $I_j$, and $F_\iota(u)\in[1/4,3/4]$ on $[-1/4,1/4]$. Therefore,
\begin{align*}
\sup_{u}\operatorname{KL}\big(P_\iota(\cdot\mid u)\,\Vert\,P_{\iota'}(\cdot\mid u)\big)
\cdot \frac{T}{\log|\Theta|}
\lesssim \rho^2 h^{2\beta+1}T.
\end{align*}
Choosing 
\[
h=T^{-\frac{1}{2\beta+1}}
\]
and taking $\rho>0$ small enough (as a function of the constants in the inequality) ensures the Fano condition
\begin{align}
\label{equ:fano-condition}
\sup_{u} \operatorname{KL}\big(P_\iota(\cdot\mid u)\,\Vert\,P_{\iota'}(\cdot\mid u)\big)
\le \frac{c_1\log|\Theta|}{T},\qquad\forall\iota\neq\iota'\in\Theta,
\end{align}
for some absolute constant $c_1>0$.

We then invoke the following version of Fano's lemma adapted to squared-regret loss.

\begin{lemma}
\label{lemma:fano}
Denote $P_\iota(\cdot\mid u)$ as the Bernoulli distribution with mean $1-F_\iota(u)$. Suppose that for every distinct $\iota,\iota'\in\Theta$ the condition~\eqref{equ:fano-condition} holds. Then there exists a constant $c>0$ such that
\begin{align*}
\operatorname{Regret}(T)
&\gtrsim\inf_\pi \sup_{\iota\in\Theta}\EE_\iota^\pi\left[\sum_{t=1}^T\big(p_t-p_\iota(u_t)\big)^2\right]\\
&\ge c\,T\cdot\inf_{\iota,\iota'\in\Theta}\EE_{U}\big[p_\iota(U)-p_{\iota'}(U)\big]^2,
\end{align*}
where $U\sim F_0$.
\end{lemma}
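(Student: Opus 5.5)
The plan is to run Fano's method on the multiple hypotheses $\{F_\iota:\iota\in\Theta\}$, with the per-period squared price error as the loss, and to cash in the KL condition~\eqref{equ:fano-condition} one period at a time.

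\emph{Reduction to squared price error.} By Claim 1, each $\phi_\iota'$ is bounded below, and each $F_\iota$ is uniformly close to the linear baseline. Hence every revenue function $p\mapsto p(1-F_\iota(p-u))$ is strongly concave near its maximiser $p_\iota(u)=u+\phi_\iota^{-1}(-u)$, uniformly in $\iota$. This gives $r_\iota(u,p_\iota(u))-r_\iota(u,p)\gtrsim (p-p_\iota(u))^2$, which is the first inequality of the lemma. It also lets us restrict attention to policies that price in a fixed compact interval.

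\emph{Fano at each period.} Fix $t$ and a policy $\pi$. Conditional on the history $\mathcal{H}_{t-1}$ and on $u_t$, the price $p_t$ is a function of $(\mathcal{H}_{t-1},u_t)$. Define an estimator $\hat\iota_t\in\Theta$ that minimises $\EE_U[(\tilde p_t(U)-p_{\iota}(U))^2]$, where $\tilde p_t(\cdot)$ is the pricing map the policy would use at time $t$ given $\mathcal{H}_{t-1}$. Write $\Delta:=\inf_{\iota\neq\iota'}\EE_U[(p_\iota(U)-p_{\iota'}(U))^2]$. Because $u_t$ is independent of $\mathcal{H}_{t-1}$, we have $\EE[(p_t-p_\iota(u_t))^2\mid\mathcal{H}_{t-1}]=\EE_U[(\tilde p_t(U)-p_\iota(U))^2]$. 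If $\hat\iota_t\neq\iota$, the triangle inequality together with the definition of $\hat\iota_t$ gives $\EE_U[(\tilde p_t-p_\iota)^2]\ge \Delta/4$. It follows that
\[
\sup_\iota\EE_\iota\big[(p_t-p_\iota(u_t))^2\big]\ge \tfrac{\Delta}{4}\,\max_\iota \PP_\iota(\hat\iota_t\neq\iota).
\]

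\emph{Bounding the testing error.} The KL chain rule, applied exactly as in~\eqref{equ:lower-1-KL-decomposition}, gives
\[
\operatorname{KL}(Q^{t-1}_{\iota},Q^{t-1}_{\iota'})\le (t-1)\sup_u \operatorname{KL}\big(P_\iota(\cdot\mid u)\,\Vert\,P_{\iota'}(\cdot\mid u)\big).
\]
By~\eqref{equ:fano-condition}, the right-hand side is at most $c_1\log|\Theta|$. The contexts have the same law under every hypothesis, so they contribute nothing to the KL. Fano's inequality (Tsybakov, Theorem~2.5) then gives $\max_\iota\PP_\iota(\hat\iota_t\neq\iota)\ge 1-\frac{c_1\log|\Theta|+\log 2}{\log|\Theta|}\ge 1/2$ for $c_1$ small and $N$ large. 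Summing over $t$ yields $\sup_\iota\sum_t\EE_\iota[(p_t-p_\iota(u_t))^2]\ge c\,T\Delta$.

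\emph{Main obstacle.} The step needing the most care is the per-period reduction. The policy's pricing map may depend on $\mathcal{H}_{t-1}$ in an arbitrary measurable way, so the estimator $\hat\iota_t$ must be built from that map, as above. We must also justify that the supremum over $\iota$ can be taken per period and then bounded below by a sum. I would handle this by bounding the supremum below by the uniform average over $\Theta$ and applying the averaged form of Fano's inequality in each period, so that the same prior is used for every $t$.
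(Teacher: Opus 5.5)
The paper never proves this lemma. It introduces it as ``the following version of Fano's lemma adapted to squared-regret loss'' and applies it directly, leaving the details to the literature on Fano's method. Your proposal therefore supplies the missing derivation, and it is the standard, correct one:
\begin{itemize}
\item build a per-period test $\hat\iota_t$ by projecting the history-measurable pricing map $\tilde p_t$ onto $\{p_\iota\}$ in $L^2$ of the context law;
\item use the triangle inequality to get loss at least $\Delta/4$ on $\{\hat\iota_t\neq\iota\}$;
\item bound $\operatorname{KL}(Q_\iota^{t-1}\,\Vert\,Q_{\iota'}^{t-1})\le c_1\log|\Theta|$ by the chain rule, which needs the KL condition uniformly in the argument because the queried markups $p_s-u_s$ are chosen adaptively;
\item conclude with Fano.
\end{itemize}

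Three things to tighten.

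\emph{(i) Use the averaged bound from the start.} As you note yourself, the per-period display with $\sup_\iota$ cannot be summed over $t$. Write the argument under the uniform prior on $\Theta$ from the outset:
$\sup_\iota\EE_\iota\big[\sum_t(p_t-p_\iota(u_t))^2\big]\ge \tfrac{1}{|\Theta|}\sum_t\sum_\iota\EE_\iota\big[(p_t-p_\iota(u_t))^2\big]\ge \tfrac{\Delta}{4}\,T\,\big(1-c_1-\tfrac{\log 2}{\log|\Theta|}\big)\ge \tfrac{T\Delta}{8}$.
The last step holds once $c_1\le 1/4$ and $|\Theta|\ge 16$, which is compatible with the paper's freedom to shrink $\rho$.

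\emph{(ii) Get global quadratic growth, not just local curvature.} Local strong concavity alone does not give the first inequality on the whole price interval. Add the identity $\partial_p\big[p(1-F_\iota(p-u))\big]=F_\iota'(p-u)\big[-u-\phi_\iota(p-u)\big]$. Monotonicity of $\phi_\iota$ (Assumption~\ref{asp:transform-strict-increasing}) then makes each revenue curve unimodal. Unimodality, local curvature and compactness of $\calP$ together give quadratic growth uniformly in $\iota$. Unimodality also justifies clipping prices to a compact interval without loss for every $\iota$ at once.

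Relatedly, the curvature is needed at the optimal markup $p_\iota(u)-u$. For the baseline this equals $1/4-u/2\in[1/8,3/8]$, which partly leaves $[-1/4,1/4]$, the only interval on which Claim~1 controls $F_\iota$. The smooth extension of $F_0$ should therefore be chosen linear on a slightly larger interval.

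\emph{(iii) Fix the law of $U$ and the infimum.} Your identity $\EE[(p_t-p_\iota(u_t))^2\mid\mathcal{H}_{t-1}]=\EE_U[(\tilde p_t(U)-p_\iota(U))^2]$ requires $U$ to have the law of $u_t=m(x_t)$, i.e.\ $\mathrm{Unif}[-1/4,1/4]$ as in Claim~2. The infimum must also run over distinct pairs. The ``$U\sim F_0$'' and the unrestricted infimum in the statement are typos, and what you prove is the intended version.
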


\noindent Applying Lemma~\ref{lemma:fano} yields the statement in Claim~2.
\qed

\vspace{1em}
\textbf{Proof of Claim 3.}
Recall that the optimal price under CDF $F_\iota$ at utility level $u$ is
\[
p_\iota(u)=\frac{1-F_\iota(u)}{F_\iota'(u)}.
\]
For two indices $\iota,\iota'\in\Theta$, let $j$ be such that $\iota_j=0$ and $\iota'_j=1$ (or vice versa). Then on $I_j$ we can write
\[
F_\iota(u)=F_0(u),\qquad
F_{\iota'}(u)=F_0(u)+\Delta(u),\qquad u\in I_j,
\]
where $\Delta(u)=\rho h^\beta V((u-\mu_j)/h)$. Using $F_0'(u)\equiv 1$ and the bounds on $F_\iota'$ from Claim~1, a Taylor expansion yields
\begin{align*}
|p_\iota(u)-p_{\iota'}(u)|
&=\left|\frac{1-F_\iota(u)}{F_\iota'(u)}
-\frac{1-F_{\iota'}(u)}{F_{\iota'}'(u)}\right|\\
&=\left|\frac{\Delta'(u)}{F_0'(u)\big(F_0'(u)+\mathcal{O}(\rho h^{\beta-1})\big)}\right|
\gtrsim |\Delta'(u)|
\asymp \rho h^{\beta-1}\left|V'\!\left(\frac{u-\mu_j}{h}\right)\right|,
\end{align*}
for all $u\in I_j$, where the constants are uniform in $\iota,\iota'$ and $j$. Therefore,
\[
|p_\iota(u)-p_{\iota'}(u)|^2\gtrsim \rho^2 h^{2(\beta-1)}
\Big|V'\!\left(\frac{u-\mu_j}{h}\right)\Big|^2.
\]

Since $d(\iota,\iota')\ge cN$ for all distinct $\iota,\iota'\in\Theta$, the set of indices $j$ where the two vectors differ has cardinality at least $cN$. Integrating over $U\sim F_0$ and summing contributions over such $j$, we obtain
\begin{align*}
\EE\big[p_\iota(U)-p_{\iota'}(U)\big]^2
&\gtrsim \rho^2 h^{2(\beta-1)}\,d(\iota,\iota')\,h
\gtrsim \rho^2 h^{2(\beta-1)}.
\end{align*}
Since $\rho>0$ is a fixed small constant, this implies
\[
\inf_{\iota,\iota'\in\Theta}\EE\big[p_\iota(U)-p_{\iota'}(U)\big]^2
\gtrsim h^{2(\beta-1)},
\]
as claimed.
\qed

Combining Claims~2 and~3, choosing $h=T^{-1/(2\beta+1)}$, and recalling Lemma~\ref{lemma:fano}, we obtain
\[
\EE[\operatorname{Regret}(T)]
\gtrsim T\cdot h^{2(\beta-1)}
= T\cdot T^{-\frac{2(\beta-1)}{2\beta+1}}
= T^{\frac{3}{2\beta+1}},
\]
which establishes the $\Omega\!\big(T^{\frac{3}{2\beta+1}}\big)$ lower bound for the class of $\beta$-smooth CDFs with strictly increasing transform and linear mean utility $m(x)=x$.
\qed

%% file: app-utility.tex
\section{Linear Model}
\label{sec:linear-model}

\begin{proposition}
Let $X \in \RR^d$ be a random vector supported on the unit ball 
\[
B_d=\{x\in\RR^d:\|x\|\le 1\}
\]
with density $f$. 
Assume that
\begin{enumerate}
\item $0<m\le f(x)\le M<\infty$ for all $x\in B_d$,
\item $f$ is $L$-Lipschitz: $|f(x)-f(x')|\le L\|x-x'\|$.
\end{enumerate}
For any $\theta\in \RR^d$ with $\|\theta\|_2=1$, define
\[
\mu_\theta(t):=\EE[X\mid \theta^\top X=t], \qquad t\in[-1,1].
\]
Then $\mu_\theta(\cdot)$ is Lipschitz in $t$, with constant depending only on 
$(d,m,M,L)$ and not on $\theta$.
\end{proposition}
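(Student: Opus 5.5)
The plan is to reduce to $\theta=e_1$, write the conditional mean as an explicit ratio of integrals over a rescaled unit ball, and show that the one non-Lipschitz ingredient, the slice radius $r(t)=\sqrt{1-t^2}$, is always compensated by a matching factor of $r$.

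\emph{Step 1: reduction to $\theta=e_1$.} Fix $\theta$ and choose an orthogonal matrix $Q$ with $Q\theta=e_1$. Then $QX$ has density $f\circ Q^{\top}$ on $B_d$, which obeys the same bounds $m\le f\le M$ and the same Lipschitz constant $L$. Moreover $\mu_\theta(t)=Q^{\top}\EE[QX\mid e_1^{\top}QX=t]$, and $Q$ is an isometry. So it suffices to treat $\theta=e_1$ for an arbitrary density with parameters $(d,m,M,L)$; the resulting constant is then automatically uniform in $\theta$.

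\emph{Step 2: explicit formula.} Write $x=(t,y)$ with $y\in\RR^{d-1}$. The slice $\{x_1=t\}\cap B_d$ is the ball of radius $r(t)=\sqrt{1-t^2}$. Substituting $y=r(t)z$ with $z$ in the unit ball $B_{d-1}$ gives
\[
\mu(t)=\bigl(t,\;r(t)\,G(t)\bigr),\qquad G(t)=\frac{N(t)}{D(t)},
\]
where
\[
N(t)=\int_{B_{d-1}} z\,f(t,r(t)z)\,dz,\qquad D(t)=\int_{B_{d-1}} f(t,r(t)z)\,dz .
\]
The first coordinate is trivially Lipschitz, and $D(t)\ge m\,\mathrm{vol}(B_{d-1})$.

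\emph{Step 3: the key cancellation.} Since $\int_{B_{d-1}} z\,dz=0$, we may write
\[
N(t)=\int_{B_{d-1}} z\bigl(f(t,r(t)z)-f(t,0)\bigr)\,dz,
\]
so $|N(t)|\lesssim L\,r(t)$. It follows that $|G(t)|\lesssim r(t)$, which says the conditional mean of $y$ lies within $O(r^2)$ of the axis.

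\emph{Step 4: Lipschitz estimate.} This is where I expect the main obstacle. The radius $r(t)$ is not Lipschitz near $t=\pm1$, since $r'(t)=-t/r(t)$ blows up, so a naive product rule fails. Instead I will use the identity
\[
|r(t)-r(s)|\,\bigl(r(t)+r(s)\bigr)=|t^2-s^2|\le 2|t-s| .
\]
Decompose
\[
r(t)G(t)-r(s)G(s)=\bigl(r(t)-r(s)\bigr)G(t)+r(s)\bigl(G(t)-G(s)\bigr).
\]
For the first term, $|G(t)|\lesssim r(t)$ together with the identity gives a bound $\lesssim|t-s|$.

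For the second term, the Lipschitz property of $f$ gives
\[
|N(t)-N(s)|,\;|D(t)-D(s)|\lesssim L\bigl(|t-s|+|r(t)-r(s)|\bigr).
\]
Combining this with $D\gtrsim m$, $|N|\lesssim L\,r(t)$ and $D\lesssim M$ yields
\[
|G(t)-G(s)|\lesssim |t-s|+|r(t)-r(s)| .
\]
Multiplying by $r(s)$ and applying the identity once more bounds the second term by $\lesssim|t-s|$ as well.

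The constants depend only on $(d,m,M,L)$. By Step 1 this gives a Lipschitz constant for $\mu_\theta$ that is uniform in $\theta$, which proves the proposition.
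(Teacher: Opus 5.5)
Your proposal is correct and follows the paper's proof essentially step for step: rotation to a coordinate axis, rescaling the slice to the unit ball, the cancellation $\int z\,dz=0$ giving $|N(t)|\lesssim L\,r(t)$, and the split $(r(t)-r(s))G(t)+r(s)(G(t)-G(s))$, controlled through $|r(t)-r(s)|\,(r(t)+r(s))\le 2|t-s|$. Your handling of the second term is more careful than the paper's. The paper asserts that the ratio $H=A/B$ (your $G$) is itself Lipschitz in $t$, which is false in general: for $d=2$ and $f(t,y)\propto c+ay$ with $c>|a|>0$ one gets $G(t)=a\,r(t)/(3c)$, which is not Lipschitz near $t=\pm1$. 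Your bound $|G(t)-G(s)|\lesssim |t-s|+|r(t)-r(s)|$, multiplied by $r(s)$, is what actually closes the argument.
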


\begin{proof}
By rotation invariance, it suffices to consider $\theta=e_d$. 
Write $Y=(U,T)$ with $U\in\RR^{d-1}$, $T=Y_d\in[-1,1]$. 
Conditioning on $T=t$, the support of $U$ is the disk
\[
D_t=\{u\in\RR^{d-1}: \|u\|\le r(t)\}, \qquad r(t):=\sqrt{1-t^2}.
\]
The conditional law of $U$ given $T=t$ has density
\[
u \mapsto \frac{f(u,t)}{\int_{D_t} f(w,t)\,dw}, \quad u\in D_t.
\]
Thus
\[
\mu(t)=\EE[Y\mid T=t] = \big(\mu_\perp(t),\,t\big), \qquad 
\mu_\perp(t):=\frac{\int_{D_t} u\,f(u,t)\,du}{\int_{D_t} f(u,t)\,du}.
\]

\paragraph{Rescaling.}  
Set $u=r(t)z$ with $z\in B_{d-1}(1)$, the $(d-1)$-dimensional unit ball. Then
\[
\mu_\perp(t) = r(t)\,\frac{A(t)}{B(t)}, \quad 
A(t):=\int_{B_{d-1}(1)} z f(r(t)z,t)\,dz,\quad 
B(t):=\int_{B_{d-1}(1)} f(r(t)z,t)\,dz.
\]

\paragraph{Bounds.}  
Since $f\ge m$, we have 
\[
B(t)\ge m\,\operatorname{vol}(B_{d-1}(1))=:c_0>0.
\]  
Moreover, by symmetry $\int_{B_{d-1}(1)} z\,dz=0$, so
\[
A(t)=\int_{B_{d-1}(1)} z \big(f(r(t)z,t)-f(0,t)\big)\,dz,
\]
which yields
\[
\|A(t)\| \le L\,r(t) \int_{B_{d-1}(1)}\|z\|^2dz =: C_1 L r(t).
\]
Therefore
\[
\|\mu_\perp(t)\| \le \frac{C_1}{c_0} L r(t)^2 
= \frac{C_1}{c_0} L (1-t^2). \tag{$\star$}
\]

\paragraph{Difference bound.}  
For $s,t\in[-1,1]$,
\[
\mu_\perp(t)-\mu_\perp(s)=(r(t)-r(s))H(t) + r(s)(H(t)-H(s)), 
\quad H(t)=\tfrac{A(t)}{B(t)}.
\]

\emph{Term I.} Using ($\star$) and $|r(t)-r(s)|r(t)\le |t^2-s^2|\le 2|t-s|$,
\[
\|(r(t)-r(s))H(t)\|\le C_2 L |t-s|.
\]

\emph{Term II.} A direct Lipschitz estimate on $A(\cdot)$ and $B(\cdot)$ gives
\[
\|H(t)-H(s)\|\le C_3 L |t-s|,
\]
hence $\|r(s)(H(t)-H(s))\|\le C_3 L |t-s|$.

Thus 
\[
\|\mu_\perp(t)-\mu_\perp(s)\|\le C_* L |t-s|.
\]  
Adding the last coordinate difference $|t-s|$, we obtain
\[
\|\mu(t)-\mu(s)\|\le (C_* L+1)|t-s|,
\]
uniformly over $s,t$. Finally, for general $\theta$, note that $\mu_\theta(t)=R^\top \mu(t)$ for some orthogonal $R$, so the same Lipschitz bound holds.
\end{proof}

%%%%%%%%%%%%%%%%%%%%%%%%%%%%%%%%%%%%%%%%%%%%%%%%%%%%%%%%%%%%%%%%%%%%%%%%%%%%%%%
%%%  ADDITIVE MODEL: FEATURE DIVERSITY, LIPSCHITZ CONDITIONAL MEAN, RATES   %%%
%%%%%%%%%%%%%%%%%%%%%%%%%%%%%%%%%%%%%%%%%%%%%%%%%%%%%%%%%%%%%%%%%%%%%%%%%%%%%%%

%\section{Additive Model: Feature Diversity, Lipschitz Conditional Mean, and Supremum-Norm Rates}

\section{Additive Model}
\label{sec:additive-model}

We consider a random covariate $X$ supported on $[0,1]^d$ and an additive
mean utility. We write $n=T_0$ in this section. Our goals (corresponding to Assumptions~\ref{asp:feature-diversity}--\ref{asp:conditional-mean-Lip}) are:

\begin{enumerate}
\item \textbf{Feature diversity:} The density of $T=m(X)$ obeys
      \[
      c_d\,\delta(z) \le p_m(z)\le C_d\,\delta(z),
      \qquad \delta(z)=\min\{z^{d-1},(d-z)^{d-1}\}.
      \]

\item \textbf{Lipschitz conditional mean:} The estimator $\hat m$ satisfies
      \[
      \mathfrak m(t)
      = \E\!\Big[\frac{\Delta(X)}{\|\Delta\|_\infty}\,\Big|\,\hat m(X)=t\Big]
      \]
      is Lipschitz on $J_{\hat m}$.

\item \textbf{Supremum-norm estimation rate:} The nonparametric least-squares estimator $\hat m$ over a suitable additive
      sieve satisfies, with probability at least $1-n^{-2}$,
      \[
      \|\hat m-m\|_\infty
      \lesssim \Big(\frac{\log n}{n}\Big)^{\frac{\gamma}{2\gamma+1}},
      \]
      which can be plugged into the utility estimation accuracy assumption.
\end{enumerate}

%%%%%%%%%%%%%%%%%%%%%%%%%%%%%%%%%%%%%%%%%%%%%%%%%%%%%%%%%%%%
\subsection{Structural assumptions for additive model}

We work on the hypercube $\cX=[0,1]^d$ for simplicity.

\begin{assumption}[Design and additive geometry]
\label{asp:additive-geometry}
Let $X=(X_1,\dots,X_d)$ take values in $\cX=[0,1]^d$ and assume:
\begin{enumerate}
\item \textbf{Design.} $X$ is quasi-uniform, i.e., there exist constants $c,C>0$ such that
\[
c \le p_X(x) \le C
\quad\text{for every } x\in \cX,
\]
where $p_X$ denotes the density of $X$.

\item \textbf{Additive mean utility.} The mean utility is additive:
\[
m(x)=\sum_{j=1}^d m_j(x_j),\qquad x\in[0,1]^d,
\]
with each $m_j$ being $\gamma$-smooth. $\gamma\ge 2$.

\item \textbf{Monotonicity.}
Each component $m_j$ is strictly increasing,\footnote{We only need each component to be monotone; by the change of variable $x_i\mapsto -x_i$ we can reduce to the increasing case. We stick with the increasing case for simplicity.} 
and there exist constants $0<\gamma_1<\Gamma_1<\infty$ such that
\[
\gamma_1 \le m_j'(x)\le \Gamma_1,
\quad\text{for all } x\in[0,1] \text{ and } j=1,\dots,d.
\]

\end{enumerate}
Let $T:=m(X)$ and denote its support by $J_m=[\underline u, \bar u]$.
\end{assumption}

\noindent
\textbf{Remark.}
Assumption~\ref{asp:additive-geometry}(iii) implies
\[
\|\nabla m(x)\|
= \Big\|\big(m_1'(x_1),\dots,m_d'(x_d)\big)\Big\|
\in [\gamma_1\sqrt{d},\Gamma_1\sqrt{d}],
\quad\text{for all } x\in[0,1]^d,
\]
so the gradient of $m$ never vanishes on $\cX$. This will be used in later analysis.

\subsection{Geometric lemma for level-set integrals}

We next prove a technical lemma ensuring the differentiability of the level-set
integrals with respect to the level parameter. This will be used to show
that the conditional mean of the normalized error is Lipschitz.

\begin{lemma}[Differentiability of level-set integrals]
\label{lem:level-set-C1}
Let $\cX\subset\RR^d$ be a bounded open set with Lipschitz boundary.
Let $\phi\in C^2(\overline\cX)$ and $h\in C^1(\overline\cX)$.
Let $I\subset\RR$ be an open interval such that
\[
\|\nabla\phi(x)\|\ge \gamma>0
\quad\text{for all } x\in\cX \text{ with } \phi(x)\in I.
\]
For $t\in I$, define
\[
F(t)
:= \int_{\{x\in\cX:\ \phi(x)=t\}} 
   \frac{h(x)}{\|\nabla\phi(x)\|}\,d\cH^{d-1}(x),
\]
where $\cH^{d-1}$ denotes $(d-1)$-dimensional Hausdorff measure.
Then $F\in C^1(I)$, and there exists $C_F<\infty$ such that
\[
\sup_{t\in I} |F(t)| + \sup_{t\in I} |F'(t)| \le C_F.
\]
\end{lemma}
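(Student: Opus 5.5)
The plan is to express $F$ through the divergence theorem and the coarea formula, so that $F(t_2)-F(t_1)$ becomes an integral in $t$ of quantities that can be bounded directly.

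\emph{Step 1: the vector field.} Let $W:=h\,\nabla\phi/\|\nabla\phi\|^2$ on the open set $\Omega_I:=\{x\in\cX:\phi(x)\in I\}$. Since $\phi\in C^2$, $h\in C^1$ and $\|\nabla\phi\|\ge\gamma$ there, $W$ is $C^1$ with $\|W\|_{C^1}\le C(\|\phi\|_{C^2},\|h\|_{C^1},\gamma)$. Moreover $W\cdot\nabla\phi/\|\nabla\phi\|=h/\|\nabla\phi\|$, which is exactly the integrand defining $F$. The level sets $\{\phi=t\}$ are $C^2$ hypersurfaces in $\Omega_I$ with unit normal $\nabla\phi/\|\nabla\phi\|$.

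\emph{Step 2: divergence theorem on slabs.} For $t_1<t_2$ in $I$, apply the divergence theorem to $W$ on the slab $S=\{x\in\cX: t_1<\phi(x)<t_2\}$. This is legitimate because $S$ is a Lipschitz domain: its boundary consists of pieces of $\partial\cX$ and pieces of the two level sets. The outcome is
\[
F(t_2)-F(t_1)=\int_S \operatorname{div}W\,dx-\int_{\partial\cX\cap\{t_1<\phi<t_2\}} W\cdot\nu\,d\cH^{d-1}.
\]

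\emph{Step 3: coarea on both terms.} The coarea formula in $\cX$ with respect to $\phi$ rewrites the volume term as
\[
\int_{t_1}^{t_2}a(s)\,ds,\qquad a(s):=\int_{\{\phi=s\}}\frac{\operatorname{div}W}{\|\nabla\phi\|}\,d\cH^{d-1}.
\]
Applying the coarea formula on the $(d-1)$-dimensional Lipschitz manifold $\partial\cX$ with respect to $\phi|_{\partial\cX}$ rewrites the flux term as $\int_{t_1}^{t_2}b(s)\,ds$, where
\[
b(s)=\int_{\partial\cX\cap\{\phi=s\}}\frac{W\cdot\nu}{\|\nabla_\tau\phi\|}\,d\cH^{d-2}.
\]
This gives $F(t)=F(t_1)+\int_{t_1}^t(a-b)$. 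Hence $F$ is absolutely continuous with $F'=a-b$ almost everywhere, and $F\in C^1(I)$ as soon as $a$ and $b$ are continuous. The bound on $\sup_I|F|$ is easy: $|F(t)|\le \|h\|_\infty\gamma^{-1}\cH^{d-1}(\{\phi=t\})$, and the level-set area is bounded uniformly. The bound on $|a|$ follows the same way, using $\|\operatorname{div}W\|_\infty<\infty$. Continuity of $a$ follows by applying Steps 2--3 again with $\operatorname{div}W/\|\nabla\phi\|^2\cdot\nabla\phi$ in place of $W$; this only needs $W\in C^1$ in order to get continuity rather than differentiability.

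\emph{Main obstacle: the boundary term $b$.} The difficulty is the factor $1/\|\nabla_\tau\phi\|$, where $\nabla_\tau\phi$ is the tangential gradient of $\phi$ along $\partial\cX$. It blows up wherever a level set of $\phi$ meets $\partial\cX$ tangentially. At such points $F'$ can genuinely fail to exist; think of a level set sweeping past a corner. So the key step is to obtain a lower bound $\|\nabla_\tau\phi\|\ge\gamma'>0$ on $\partial\cX\cap\Omega_I$, together with continuity of $s\mapsto b(s)$, which the $C^1$ structure of $\phi|_{\partial\cX}$ gives once transversality holds.

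What I would try first is to extract this bound in the setting where the lemma is actually used: $\cX=(0,1)^d$ and $\phi=\hat m$ additive with each component derivative in $[\gamma_1,\Gamma_1]$. On a face $\{x_j=0\}$ or $\{x_j=1\}$, the tangential gradient is $\nabla\phi$ with its $j$-th coordinate removed. Its norm is therefore at least $\gamma_1\sqrt{d-1}$ when $d\ge 2$, so $b$ is bounded there. On the lower-dimensional edges and corners I would repeat the same slab argument one dimension down; this is an induction on face dimension, using the same coordinatewise derivative bounds to keep every level set transversal to every face. Continuity of $b$ then follows as for $a$, which completes the $C^1$ claim and the bound $\sup_I|F'|\le\sup_I|a|+\sup_I|b|\le C_F$.
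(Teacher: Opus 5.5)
Your identification of the boundary flux $b$ as the crux is correct, and it settles the matter: the lemma as stated is false, so no argument can prove it. Two examples show this.
\begin{itemize}
\item Take $\cX$ the open unit disk, $\phi(x)=x_2$, $h\equiv 1$ and $I=(-1,1)$. Then $F(t)=2\sqrt{1-t^2}$, so $\sup_I|F'|=\infty$.
\item Take $\cX=(0,1)^2$, $\phi(x)=x_1+x_2$, $h\equiv 1$ and $I=(0,2)$. Then $F(t)=\min\{t,2-t\}$, which is not differentiable at $t=1$.
\end{itemize}
The paper's chart proof misses this because it writes $F_\alpha(t)=\int_{U_\alpha}G_\alpha(y,t)\,dy$ over a $U_\alpha$ that does not depend on $t$. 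That integral parametrizes $\{\phi=t\}\cap V_\alpha$, not $\{\phi=t\}\cap\cX$. For charts centered on $\partial\cX$ the correct domain is $\{y\in U_\alpha:\Psi_\alpha(y,t)\in\cX\}$, which moves with $t$, and differentiating that moving domain produces precisely your $b$. The paper also asserts that $\phi^{-1}(I)\cap\overline\cX$ is compact, which fails for open $I$; the disk example is where this matters. So your Gauss--Green/coarea identity $F(t_2)-F(t_1)=\int_{t_1}^{t_2}(a-b)\,ds$ is the better tool. The paper's argument is valid only when $h$ vanishes near $\partial\cX$. Yours also handles, for example, a smooth $\partial\cX$ with $\|\nabla_\tau\phi\|$ bounded below on $\partial\cX\cap\phi^{-1}(I)$.

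The step that fails is your fallback to the additive cube, specifically ``continuity of $b$ then follows as for $a$.'' Transversality on $k$-dimensional faces with $k\ge 1$ is fine (your $\gamma_1\sqrt{k}$ bound). But the induction bottoms out at vertices, where there is nothing to be transversal to. As $s$ crosses $\phi(v)$ for a vertex $v$, the points of $\partial\cX\cap\{\phi=s\}$ near $v$ switch faces, and hence switch normals $\nu$.

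For $d=2$ these points are isolated and $b$ is a finite point sum, so it jumps. Take $h=\rho$ and $v=(0,1)$: the point moves from $\{x_1=0\}$ to $\{x_2=1\}$, and its contribution to $b$ increases by $\rho(v)/(\partial_1\phi(v)\,\partial_2\phi(v))>0$. The vertex $(1,0)$ behaves the same way, while $a$ stays continuous. Hence $F'=a-b$ is discontinuous for every admissible additive $\hat m$ and every positive density; the square example above is the case $\rho\equiv 1$.

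For $d\ge 3$ your induction does work. Point sums enter only the derivatives of the $2$-dimensional face terms, so those terms are Lipschitz. Consequently every face term of dimension at least $2$ is continuous, and so is $b$. In every dimension, your identity shows that $F$ is Lipschitz on $I$ in the cube setting, since $a$ and $b$ are bounded; that is the correct statement for $d=2$.

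One smaller repair. Continuity of $a$ does not follow by reapplying Step~2 to $\operatorname{div}W\,\nabla\phi/\|\nabla\phi\|^2$, because that field is only $C^0$ (it involves $D^2\phi$ and $\nabla h$). Instead, approximate $\operatorname{div}W$ uniformly by $C^1$ functions and use the uniform bound on $\cH^{d-1}(\{\phi=s\})$.
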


\begin{proof}
Fix $t_0\in I$. Let $K:=\phi^{-1}(I)\cap\overline\cX$, which is compact.
For each $x_0\in K$ with $\phi(x_0)=t_0$, the implicit function theorem
applies because $\nabla\phi(x_0)\neq 0$. Thus, there exist:
\begin{itemize}
\item an index $k\in\{1,\dots,d\}$ with $\partial_k\phi(x_0)\neq 0$;
\item open sets $U_{x_0}\subset\RR^{d-1}$ and $J_{x_0}\subset I$ with $t_0\in J_{x_0}$;
\item a $C^2$-map $\Psi_{x_0}:U_{x_0}\times J_{x_0}\to\RR^d$,
\end{itemize}
such that, for all $(y,t)\in U_{x_0}\times J_{x_0}$,
\begin{enumerate}
\item $\phi(\Psi_{x_0}(y,t))=t$;
\item for fixed $t$, $y\mapsto\Psi_{x_0}(y,t)$ parametrizes 
      $\{x\in\cX:\phi(x)=t\}$ in a neighborhood of $x_0$;
\item the map $(y,t)\mapsto\Psi_{x_0}(y,t)$ is $C^2$.
\end{enumerate}

Since $K$ is compact, we may extract a finite subcover:
there exist points $x_1,\dots,x_N$ with associated charts
$(U_\alpha,J_\alpha,\Psi_\alpha)$, $\alpha=1,\dots,N$, such that
\[
K\subset \bigcup_{\alpha=1}^N V_\alpha,\qquad
V_\alpha := \Psi_\alpha(U_\alpha\times J_\alpha).
\]
Choose a smooth partition of unity $\{\eta_\alpha\}_{\alpha=1}^N$ subordinate
to this cover: each $\eta_\alpha\in C^\infty_c(V_\alpha)$, $0\le\eta_\alpha\le 1$,
and $\sum_{\alpha=1}^N \eta_\alpha(x)=1$ for all $x\in K$.

Define $h_\alpha(x):=\eta_\alpha(x)h(x)$.
Then $h_\alpha\in C^1(\overline\cX)$ and $h=\sum_{\alpha=1}^N h_\alpha$ on $K$.
It follows that
\[
F(t) = \sum_{\alpha=1}^N F_\alpha(t),\qquad
F_\alpha(t)
:= \int_{\{x:\phi(x)=t\}} \frac{h_\alpha(x)}{\|\nabla\phi(x)\|}\,d\cH^{d-1}(x).
\]

For $t\in I\cap J_\alpha$, the level set $\{x\in V_\alpha:\phi(x)=t\}$ is
parametrized by $y\mapsto \Psi_\alpha(y,t)$, $y\in U_\alpha$, so the area
formula (change of variables on hypersurfaces) yields
\[
F_\alpha(t)
= \int_{U_\alpha} 
    \frac{h_\alpha(\Psi_\alpha(y,t))}{\|\nabla\phi(\Psi_\alpha(y,t))\|}
    J_\alpha(y,t)\,dy,
\]
where $J_\alpha(y,t)$ is the Jacobian factor (the square root of the
determinant of the Gram matrix of the partial derivatives of $\Psi_\alpha$
in $y$). It is well known that $\Psi_\alpha\in C^2$ implies $J_\alpha\in C^1$.

Define
\[
G_\alpha(y,t)
:= \frac{h_\alpha(\Psi_\alpha(y,t))}{\|\nabla\phi(\Psi_\alpha(y,t))\|}
   J_\alpha(y,t),
\qquad (y,t)\in U_\alpha\times J_\alpha.
\]
Because $h_\alpha,\phi\in C^1$ and $\Psi_\alpha\in C^2$, and because
$\|\nabla\phi(x)\|\ge\gamma$ on $K$, the map $G_\alpha$ is $C^1$ on
$U_\alpha\times J_\alpha$.

Moreover, since $\cX$ is bounded and $K$ is compact, 
$\Psi_\alpha(U_\alpha\times J_\alpha)\subset \overline\cX$, and all derivatives
of $\Psi_\alpha$ up to second order are bounded on $U_\alpha\times J_\alpha$.
Thus there exists $M_\alpha<\infty$ such that
\[
|G_\alpha(y,t)| + |\partial_t G_\alpha(y,t)| \le M_\alpha
\quad\text{for all }(y,t)\in U_\alpha\times J_\alpha.
\]

Fix a compact subinterval $\tilde I\subset I\cap J_\alpha$.
Then, for $t\in\tilde I$,
\[
F_\alpha(t) = \int_{U_\alpha} G_\alpha(y,t)\,dy.
\]
The function $G_\alpha(\cdot,t)$ is dominated by the integrable function
$y\mapsto \sup_{s\in\tilde I}|G_\alpha(y,s)|\le M_\alpha$, independent of $t$.
Therefore, by dominated convergence, the difference quotient
\[
\frac{F_\alpha(t+h)-F_\alpha(t)}{h}
= \int_{U_\alpha} 
  \frac{G_\alpha(y,t+h)-G_\alpha(y,t)}{h}\,dy
\]
converges, as $h\to 0$, to
\[
F_\alpha'(t) = \int_{U_\alpha} \partial_t G_\alpha(y,t)\,dy.
\]
Furthermore,
\[
|F_\alpha(t)| \le \int_{U_\alpha} |G_\alpha(y,t)|\,dy
\le M_\alpha \lambda^{d-1}(U_\alpha),
\]
and similarly
\[
|F_\alpha'(t)|
\le \int_{U_\alpha} |\partial_t G_\alpha(y,t)|\,dy
\le M_\alpha \lambda^{d-1}(U_\alpha),
\]
uniformly in $t\in\tilde I$, where $\lambda^{d-1}$ denotes Lebesgue measure
on $\RR^{d-1}$.

Since there are finitely many charts, summing over $\alpha$ yields that
$F\in C^1(I)$, and
\[
\sup_{t\in I} |F(t)| + \sup_{t\in I} |F'(t)|
\le \sum_{\alpha=1}^N M_\alpha \lambda^{d-1}(U_\alpha)
=: C_F < \infty.
\]
This proves the lemma.
\end{proof}

%%%%%%%%%%%%%%%%%%%%%%%%%%%%%%%%%%%%%%%%%%%%%%%%%%%%%%%%%%%%
\subsection{Lipschitz conditional mean of the normalized error}

We now prove that, under a nondegenerate gradient condition on $\hat m$, the
conditional mean of the normalized error is Lipschitz in $t$.

\begin{proposition}[Lipschitz conditional mean of normalized error]
\label{prop:cond-mean-Lipschitz}
Let $\cX\subset\RR^d$ be a bounded open set with Lipschitz boundary, and let
$X$ be a random vector supported on $\cX$ with density $\rho$ satisfying
\[
\rho\in C^1(\overline\cX),\qquad
0<\rho_{\min}\le\rho(x)\le\rho_{\max}<\infty\quad\forall x\in\cX.
\]
Let $m,\hat m\in C^2(\overline\cX)$, and assume that there exists $\gamma_0>0$ such that
\[
\|\nabla \hat m(x)\|\ge \gamma_0
\quad\text{for all } x\in\cX.
\]
Define
\[
\Delta(x):=\hat m(x)-m(x),\qquad
u(x):=\frac{\Delta(x)}{\|\Delta\|_\infty},\quad |u(x)|\le 1.
\]
Assume $\Delta$ is not identically zero so that $\|\Delta\|_\infty>0$.
Let $T_{\hat m}:=\hat m(X)$ and denote its range by
$J_{\hat m}=[\underline u,\bar u]$.
Then:
\begin{enumerate}
\item $T_{\hat m}$ admits a density $p_{\hat m}$ on $J_{\hat m}$, with
      $p_{\hat m}\in C^1((\underline u,\bar u))$ and constants $0<c\le C<\infty$
      such that
      \[
      c\le p_{\hat m}(t)\le C,\qquad t\in J_{\hat m},
      \]
      and $\sup_{t\in(\underline u,\bar u)}|p_{\hat m}'(t)|\le C$.

\item The signed density
      \[
      N(t):=\int_{\{x\in\cX:\hat m(x)=t\}}
           \frac{u(x)\rho(x)}{\|\nabla\hat m(x)\|}\,d\cH^{d-1}(x)
      \]
      belongs to $C^1((\underline u,\bar u))$, and there exists $C'<\infty$ such
      that
      \[
      \sup_{t\in(\underline u,\bar u)}|N(t)|\le C,\qquad
      \sup_{t\in(\underline u,\bar u)}|N'(t)|\le C'.
      \]

\item The conditional mean
      \[
      \mathfrak m(t)
      := \E\!\left[\frac{\Delta(X)}{\|\Delta\|_\infty}
          \,\middle|\,\hat m(X)=t\right]
      = \E[u(X)\mid T_{\hat m}=t]
      \]
      admits a version given by
      \[
      \mathfrak m(t)=\frac{N(t)}{p_{\hat m}(t)},
      \qquad t\in(\underline u,\bar u),
      \]
      which extends continuously to $J_{\hat m}$ and is Lipschitz:
      there exists $L<\infty$ such that
      \[
      |\mathfrak m(t_1)-\mathfrak m(t_2)|
      \le L|t_1-t_2|
      \quad\text{for all } t_1,t_2\in J_{\hat m}.
      \]
\end{enumerate}
\end{proposition}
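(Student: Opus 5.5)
We work with the coarea formula applied to $\hat m$, and use Lemma~\ref{lem:level-set-C1} (with $\phi=\hat m$) as the main engine.

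\emph{Density of $T_{\hat m}$ (part 1).} Since $\|\nabla\hat m\|\ge\gamma_0>0$ on $\cX$, the coarea formula gives, for any bounded measurable $\psi$,
\[
\E[\psi(\hat m(X))]=\int_{\RR}\psi(t)\int_{\{\hat m=t\}}\frac{\rho(x)}{\|\nabla\hat m(x)\|}\,d\cH^{d-1}(x)\,dt .
\]
Hence $T_{\hat m}$ has density $p_{\hat m}(t)=\int_{\{\hat m=t\}}\rho/\|\nabla\hat m\|\,d\cH^{d-1}$. We apply Lemma~\ref{lem:level-set-C1} with $\phi=\hat m\in C^2$ and $h=\rho\in C^1$ on $I=(\underline u,\bar u)$. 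This yields $p_{\hat m}\in C^1$ with bounded value and bounded derivative.

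For the lower bound $p_{\hat m}\ge c$, we use $\rho\ge\rho_{\min}$ and $\|\nabla \hat m\|\le \sup\|\nabla\hat m\|<\infty$. It then suffices to bound $\cH^{d-1}(\{\hat m=t\})$ from below uniformly in $t$. This is the delicate point. For the domains of interest (convex $\cX$, as in the additive/cube setting), we follow the gradient flow of $\hat m$, which is nonvanishing, and note that the sublevel-set volume $V(t)=\lambda^d\{\hat m\le t\}$ has $V'(t)\ge \cH^{d-1}(\{\hat m=t\})/\sup\|\nabla \hat m\|$.

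Near the endpoints $\underline u,\bar u$ the level sets may shrink, so a strictly positive uniform constant may genuinely fail. In the cube or ball this is exactly the $\delta(z)$ profile of Assumption~\ref{asp:feature-diversity}. If a uniform lower bound is not available, we instead aim for $c\,\delta(t)\le p_{\hat m}(t)$ and carry the argument in part 3 only on compact subintervals, closing by continuity. I expect this boundary behaviour to be the main obstacle.

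\emph{Signed density (part 2).} By the same coarea identity with $\psi(\hat m(X))u(X)$, we get $\E[u(X)\psi(\hat m(X))]=\int\psi(t)N(t)\,dt$. Applying Lemma~\ref{lem:level-set-C1} with $h=u\rho$ gives $N\in C^1$ with bounded $N$ and $N'$. Here $u=(\hat m-m)/\|\Delta\|_\infty\in C^2$ and $\rho\in C^1$, so $u\rho\in C^1$. Moreover $|N|\le p_{\hat m}$ because $|u|\le1$.

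\emph{Conditional mean (part 3).} The identity $\E[u(X)\psi(T_{\hat m})]=\E[\psi(T_{\hat m})N(T_{\hat m})/p_{\hat m}(T_{\hat m})]$ holds for all bounded $\psi$, so $N/p_{\hat m}$ is a version of $\mathfrak m$. Its derivative is
\[
\mathfrak m'=\frac{N'p_{\hat m}-Np_{\hat m}'}{p_{\hat m}^2}.
\]
Under a uniform lower bound $p_{\hat m}\ge c$, this is bounded by $(C'C+C^2)/c^2$, which gives Lipschitz continuity on the open interval. Uniform continuity then extends $\mathfrak m$ to $J_{\hat m}$.

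In the degenerate-boundary case, we would rewrite $\mathfrak m(t)$ as an average of $u$ over the level set $\{\hat m=t\}$ with respect to the normalized measure $\rho/\|\nabla\hat m\|\,d\cH^{d-1}$. Lipschitzness then follows by transporting level sets via the normalized gradient flow $\dot x=\nabla\hat m/\|\nabla\hat m\|^2$. This flow moves $\{\hat m=t\}$ to $\{\hat m=t+s\}$ with displacement $O(s)$, and $u$ is Lipschitz. We would use this flow-based argument as the fallback.
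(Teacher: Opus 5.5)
Your plan for all three parts is the paper's own: the coarea formula, then Lemma~\ref{lem:level-set-C1} with $h=\rho$ and with $h=u\rho$, then the quotient rule for $N/p_{\hat m}$. You have also identified exactly the step that breaks. But the proposal leaves that step open, and it cannot be closed, because the uniform lower bound in part~1 is false. Take $\cX$ the open unit disk in $\RR^2$, $\rho\equiv 1/\pi$, $\hat m(x)=x_1$. Then $p_{\hat m}(t)=\frac{2}{\pi}\sqrt{1-t^2}$, which vanishes at both ends of $J_{\hat m}=[-1,1]$. The paper's proof gets past this by bounding $\cH^{d-1}(\{\hat m=t\})$ below on $[\underline u+\epsilon,\bar u-\epsilon]$ for each fixed $\epsilon$ and then ``extending continuously to the endpoints''. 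That produces no $\epsilon$-free constant. Its own Proposition~\ref{prop:additive_pushforward_density} gives $p_{\hat m}(t)\asymp\min\{t-\underline u,\bar u-t\}^{d-1}$ on the cube, which vanishes at the endpoints when $d\ge 2$; this is the decaying profile of Assumption~\ref{asp:feature-diversity}, not a floor.

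The same disk gives $p_{\hat m}'(t)=-2t/(\pi\sqrt{1-t^2})$, which is unbounded. So Lemma~\ref{lem:level-set-C1}, the engine you rely on for the bounded values and derivatives of $p_{\hat m}$ and $N$, is itself false as stated. Its proof integrates over the whole chart $U_\alpha$, i.e.\ over $\{\phi=t\}$ rather than $\{\phi=t\}\cap\cX$, and so drops the $t$-dependent truncation by $\partial\cX$ (also, $\phi^{-1}(I)\cap\overline\cX$ is not compact for open $I$). Hence the derivative bounds you quote in parts~1--2 are unavailable too, and the estimate $(C'C+C^2)/c^2$ in part~3 has nothing to stand on.

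Your fallbacks do not rescue part~3. Lipschitz bounds proved on compact subintervals carry constants of order $c_\epsilon^{-2}$ that blow up, and ``closing by continuity'' does not transfer a Lipschitz constant. The normalized gradient flow maps $\{\hat m=t\}$ into $\{\hat m=t+s\}$, but not $\{\hat m=t\}\cap\cX$ onto $\{\hat m=t+s\}\cap\cX$. The mass crossing $\partial\cX$ is exactly what has to be controlled, and it need not be $O(s)$. Concretely, let $\cX=\{(x_1,x_2):0<x_1<3,\ 0<x_2<1+(x_1-3/2)^2\}$, which is bounded with Lipschitz boundary. Take $\rho$ uniform, $\hat m(x)=x_2$ and $m(x)=x_2-\cos(x_1-3/2)$, so $u(x)=\cos(x_1-3/2)$. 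Then $\mathfrak m(t)=\tfrac23\sin(3/2)$ for $t\le 1$. For $1<t<13/4$, with $s=\sqrt{t-1}$, one gets $\mathfrak m(t)=2\bigl(\sin(3/2)-\sin s\bigr)/(3-2s)=\tfrac23\sin(3/2)-\bigl(\tfrac23-\tfrac49\sin(3/2)\bigr)s+O(s^2)$. So $\mathfrak m$ is not Lipschitz at the interior level $t=1$, where $\{\hat m=1\}$ is tangent to $\partial\cX$.

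The proposition is therefore false under its stated hypotheses. What is missing is an assumption on how the level sets of $\hat m$ meet $\partial\cX$, for example the cube with additive $\hat m$ and $\gamma_1\le\hat m_j'\le\Gamma_1$, or the ball with linear $\hat m$ as in Appendix~\ref{sec:linear-model}. The statement also needs three further changes:
\begin{itemize}
\item replace the floor in part~1 by $c\,\delta(t)\le p_{\hat m}(t)$;
\item drop the uniform $C^1$ bounds on $p_{\hat m}$ and $N$;
\item prove the Lipschitz property of $\mathfrak m$ directly rather than through $1/p_{\hat m}^2$.
\end{itemize}
For the last point on the cube, near an extreme vertex you can rescale $\{\hat m=t\}$ to a fixed simplex as in the proof of Proposition~\ref{prop:additive_pushforward_density}. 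The common factor $(t-\underline u)^{d-1}$ then cancels in $N/p_{\hat m}$.
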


\begin{proof}
\textbf{Step 1: Existence and regularity of $p_{\hat m}$.}
By the coarea formula, for $t\in(\underline u,\bar u)$,
\[
p_{\hat m}(t)
= \int_{\{x\in\cX:\hat m(x)=t\}}
    \frac{\rho(x)}{\|\nabla\hat m(x)\|}\,d\cH^{d-1}(x).
\]
Apply Lemma~\ref{lem:level-set-C1} with $\phi=\hat m$ and $h=\rho$, and
$I=(\underline u,\bar u)$. Since $\rho\in C^1$ and
$\|\nabla\hat m(x)\|\ge\gamma_0$ for all $x$ with $\hat m(x)\in(\underline u,\bar u)$,
the lemma gives $p_{\hat m}\in C^1((\underline u,\bar u))$ and
\[
\sup_{t\in(\underline u,\bar u)}|p_{\hat m}(t)|
+ \sup_{t\in(\underline u,\bar u)}|p_{\hat m}'(t)| \le C_0
\]
for some $C_0<\infty$ depending on $(\rho,\hat m,\gamma_0,\cX)$.

To see that $p_{\hat m}(t)$ is bounded away from zero on $J_{\hat m}$, note that
$\rho(x)\ge\rho_{\min}>0$ and $\|\nabla\hat m\|\le M$ on the compact set
$\overline\cX$ (since $\hat m\in C^2$), so for $t\in(\underline u,\bar u)$,
\[
p_{\hat m}(t)
\ge \frac{\rho_{\min}}{\|\nabla\hat m\|_\infty}\,
    \cH^{d-1}\big(\{x\in\cX:\hat m(x)=t\}\big).
\]
Because $(\underline u,\bar u)$ is the interior of the range of $\hat m$ on
$\cX$, each level set $\{\hat m=t\}$ intersects the interior of $\cX$ and has
positive $(d-1)$-dimensional measure; continuity of the level sets in $t$ and
compactness imply there exists $c>0$ such that
$\cH^{d-1}(\{\hat m=t\})\ge c$ for all $t\in[\underline u+\epsilon,\bar u-\epsilon]$,
for any fixed $\epsilon>0$. Hence, for such $t$,
$p_{\hat m}(t)\ge c'>0$. Extending continuously to the endpoints (using the fact
that $p_{\hat m}$ is integrable and the distribution function is continuous),
we obtain $p_{\hat m}(t)\ge c>0$ for all $t\in J_{\hat m}$, and
$\sup_{t\in(\underline u,\bar u)}|p_{\hat m}'(t)|\le C_0$.

\textbf{Step 2: Existence and regularity of $N(t)$.}
By the coarea formula with $h=u\rho$, for $t\in(\underline u,\bar u)$,
\[
N(t)
= \int_{\{x\in\cX:\hat m(x)=t\}}
    \frac{u(x)\rho(x)}{\|\nabla\hat m(x)\|}\,d\cH^{d-1}(x).
\]
We first check that $u\rho\in C^1(\overline\cX)$. Since $m,\hat m\in C^2$,
$\Delta$ is $C^2$ and $\|\Delta\|_\infty$ is a finite constant. Thus $u$ is
$C^2$, hence $C^1$, and $\rho\in C^1$. Therefore, $u\rho\in C^1$.
Apply Lemma~\ref{lem:level-set-C1} again with $\phi=\hat m$ and $h=u\rho$. We
conclude that $N\in C^1((\underline u,\bar u))$ and
\[
\sup_{t\in(\underline u,\bar u)}|N(t)| + \sup_{t\in(\underline u,\bar u)}|N'(t)|
\le C_1,
\]
for some $C_1<\infty$. Also, since $|u|\le 1$, we have
\[
|N(t)|
\le \int_{\{\hat m=t\}} \frac{|\rho(x)|}{\|\nabla\hat m(x)\|}\,d\cH^{d-1}(x)
= p_{\hat m}(t) \le C_0,
\]
so $N$ is bounded.

\textbf{Step 3: Lipschitzness of $\mathfrak m(t)=N(t)/p_{\hat m}(t)$.}
Define, for $t\in(\underline u,\bar u)$,
\[
\mathfrak m(t) := \frac{N(t)}{p_{\hat m}(t)}.
\]
Since $N,p_{\hat m}\in C^1((\underline u,\bar u))$ and $p_{\hat m}$ is bounded
away from zero, $\mathfrak m\in C^1((\underline u,\bar u))$ and
\[
\mathfrak m'(t)
= \frac{N'(t)p_{\hat m}(t)-N(t)p_{\hat m}'(t)}{p_{\hat m}(t)^2}.
\]
Using the bounds from Steps 1 and 2,
\[
|\mathfrak m'(t)|
\le \frac{|N'(t)|\,|p_{\hat m}(t)| + |N(t)|\,|p_{\hat m}'(t)|}{p_{\hat m}(t)^2}
\le \frac{C_1 C_0 + C_0 C_0}{c^2}
=: L_0<\infty,
\]
for all $t\in(\underline u,\bar u)$.
Thus $\mathfrak m$ is Lipschitz on $J_{\hat m}$, as claimed.
\end{proof}

%%%%%%%%%%%%%%%%%%%%%%%%%%%%%%%%%%%%%%%%%%%%%%%%%%%%%%%%%%%%
\subsection{Sup-norm rate for additive nonparametric least squares}

We finally discuss the estimation accuracy of $\hat m$ in the additive model,
focusing on supremum-norm rates.

\begin{proposition}[Sup-norm rate for additive series least squares]
\label{prop:additive-supnorm}
Let $X$ satisfy Assumption~\ref{asp:additive-geometry}(i), and let $m$ satisfy
Assumption~\ref{asp:additive-geometry}(ii) with components
$m_j\in\cH^\gamma([0,1])$ (H\"older class) for some $\gamma>1/2$, and
$\|m_j\|_{\cH^\gamma}\le L$.

Let $\{\phi_k\}_{k\ge1}$ be a univariate basis on $[0,1]$ with:
\begin{enumerate}
\item $\phi_k\in C^1([0,1])$ and
      $\sup_k\|\phi_k\|_\infty\le C_\phi$,
      $\sup_k\|\phi_k'\|_\infty\le C_\phi$;
\item for every $g\in\cH^\gamma([0,1])$ there exist coefficients
      $\theta_1,\dots,\theta_K$ such that
      \[
      g_K(x):=\sum_{k=1}^K \theta_k\phi_k(x)
      \quad\text{satisfies}\quad
      \|g-g_K\|_\infty\le C_a K^{-\gamma},
      \]
      for some $C_a$ depending only on $(\gamma,L)$ and the basis.
\end{enumerate}
Consider the additive sieve\footnote{Here, we incorporate derivative-bound constraints to support our guarantee of feature diversity.}
\[
\cM_K := \left\{
m(x)=\sum_{j=1}^d m_j(x_j): m_j(x_j)=\sum_{k=1}^K \theta_{jk}\phi_k(x_j),
\theta_{jk}\in\RR,\ \gamma_1\le m_j'(x_j)\le\Gamma_1
\right\},
\]
Let $(X_i,Y_i)_{i=1}^n$ be i.i.d.\ from
\[
Y = m(X) + \varepsilon,\qquad
\E[\varepsilon\mid X]=0,\quad
\E[\varepsilon^2\mid X]\le\sigma^2<\infty.
\]
Define the least-squares estimator
\[
\hat m_K \in \arg\min_{g\in\cM_K} \frac{1}{n}\sum_{i=1}^n (Y_i-g(X_i))^2.
\]
Then there exists a constant $C>0$ such that, for
\[
K_n \asymp \Big(\frac{n}{\log n}\Big)^{\frac{1}{2\gamma+1}},
\]
we have, for all large $n$,
\[
\PP\Big(
\|\hat m_{K_n}-m\|_\infty
\le C\sqrt{d}\Big(\frac{\log n}{n}\Big)^{\frac{\gamma}{2\gamma+1}}
\Big)
\ge 1-n^{-2}.
\]
\end{proposition}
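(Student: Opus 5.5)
The plan is the standard two-part sieve argument: a stochastic term over a finite-dimensional linear space and a deterministic approximation term. The stochastic term is controlled in $L_2$ and then lifted to sup norm by an inverse (Nikolskii-type) inequality for the sieve.

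\emph{Approximation term.} For each $m_j$, assumption (ii) on the basis gives a $g_{j,K}$ with $\|m_j-g_{j,K}\|_\infty\le C_aK^{-\gamma}$. Since $\gamma\ge 2$ and $\gamma_1\le m_j'\le\Gamma_1$, I would use a basis with simultaneous derivative approximation (splines or Legendre polynomials with Jackson-type bounds). Then $\|m_j'-g_{j,K}'\|_\infty\lesssim K^{1-\gamma}\to 0$. After shrinking the derivative constraint slightly, or choosing it with a little slack, this puts $m_K^*:=\sum_j g_{j,K}$ in $\cM_K$, with $\|m-m_K^*\|_\infty\le C_a dK^{-\gamma}$. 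Using the $L_2$ version instead, $\|m-m_K^*\|_{L_2}\lesssim \sqrt d K^{-\gamma}$, and this is what gives the $\sqrt d$ scaling.

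\emph{Stochastic term in $L_2$.} The class $\cM_K$ is a convex subset of a linear space of dimension $dK$. Quasi-uniformity of $p_X$ makes the additive components nearly orthogonal in $L_2(P_X)$, via a standard additive-model eigenvalue bound. A matrix Bernstein/Chernoff bound then gives $\lambda_{\min}$ of the empirical Gram matrix $\gtrsim$ its population value with probability at least $1-n^{-3}$ once $dK\log n\ll n$. Next use the basic inequality for least squares over a convex set, comparing $\hat m_K$ with $m_K^*$. Bounding the noise inner product $n^{-1}\sum_i\varepsilon_i(\hat m_K-m_K^*)(X_i)$ by a projection argument, and using Bernstein or a truncation step since only conditional second moments are assumed, gives
\[
\|\hat m_K-m_K^*\|_{L_2(P_X)}^2\lesssim \frac{dK\log n}{n}+dK^{-2\gamma}.
\]

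\emph{From $L_2$ to sup norm.} For $g=\sum_j\sum_k\theta_{jk}\phi_k(x_j)$, boundedness of the basis combined with the Gram lower bound gives the inverse inequality $\|g\|_\infty\le C\sqrt{K}\,\|g\|_{L_2}$, applied componentwise. The normalization $g_j$ with $\E g_j(X_j)=0$ removes the additive constants, and quasi-uniformity gives $\sum_j\|g_j\|_{L_2}^2\lesssim \|g\|_{L_2}^2$. Naively this yields $\|\hat m_K-m_K^*\|_\infty\lesssim \sqrt{dK}\sqrt{K\log n/n}$, which loses a factor $\sqrt K$.

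\emph{Main obstacle: removing the $\sqrt K$ loss.} This is the step I expect to be hardest. The fix I would try first is to bound the sup norm directly through the linear-estimator representation. Ignoring the derivative constraint, which is inactive with high probability because the unconstrained fit is uniformly close to $m_K^*$, whose derivative is strictly inside $[\gamma_1,\Gamma_1]$, we have $\hat m_K(x)-m_K^*(x)=\Phi(x)^\top\hat G^{-1}n^{-1}\sum_i\Phi(X_i)(\varepsilon_i+r_i)$, where $r_i=m(X_i)-m_K^*(X_i)$. For localized bases such as B-splines, $\hat G^{-1}$ has exponentially decaying off-diagonal entries. That makes $\Phi(x)^\top\hat G^{-1}\Phi(\cdot)$ a localized kernel with $L_1$ norm $O(1)$. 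Bernstein's inequality on a grid of size $n^{C}$, together with the Lipschitz bound $\|\phi_k'\|_\infty\le C_\phi$ (times $K$ for splines), then gives the sup bound $\sqrt{dK\log n/n}+dK^{-\gamma}$ with probability at least $1-n^{-2}$. The stability of the projection in sup norm (Huang's/Belloni et al.'s Lebesgue-constant bound) also bounds the bias contribution by $K^{-\gamma}$. Balancing with $K_n\asymp (n/\log n)^{1/(2\gamma+1)}$ yields $C\sqrt d(\log n/n)^{\gamma/(2\gamma+1)}$, as claimed.
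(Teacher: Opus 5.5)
\textbf{The stochastic bound cannot hold under the stated noise assumption.} With only $\E[\varepsilon^2\mid X]\le\sigma^2$ you have no Bernstein inequality, and truncation cannot deliver probability $1-n^{-2}$. Forcing $\PP(\max_i|\varepsilon_i|>\tau)\le n\sigma^2/\tau^2\le n^{-2}$ requires $\tau\gtrsim n^{3/2}$. The range term $\tau\log n\,\max_i|w_i(x)|\asymp\tau K\log n/n$ is then of order $K\sqrt n\log n$. Your $L_2$ step at level $1-n^{-3}$ has the same problem.

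No other argument fixes this, because the conclusion itself is false. Take a rescaled B-spline sieve, which satisfies (i)--(ii). Its leverages are $h_{ii}\asymp dK/n$, so a single $|\varepsilon_i|\ge C\sqrt{n\log n/K}$ moves $\hat m_K(X_i)$ by about $C\sqrt{K\log n/n}$. If $\varepsilon$ is symmetric, this is not cancelled with conditional probability at least $1/2$. For a symmetric finite-variance law with $\PP(|\varepsilon|>t)\asymp t^{-2}(\log t)^{-2}$, the expected number of such $i$ is $\asymp K/(C^2\log^3 n)\to\infty$. Since $K\lesssim n^{1/5}$ and $\max_i|\varepsilon_i|=o_P(\sqrt n)$, the induced slope perturbation is $o_P(1)$. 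So the derivative constraint stays inactive whenever the true slopes are interior.

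You need bounded or sub-Gaussian noise. This costs nothing in the paper's application: there $Y=By$ with $y\in\{0,1\}$, so $|\varepsilon|\le B$, and your grid-plus-Bernstein step then goes through.

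\textbf{Your removal of the $\sqrt K$ factor uses structure that (i)--(ii) do not provide.} Least squares over $\cM_K$ depends only on $V_K=\mathrm{span}\{\phi_1,\dots,\phi_K\}$. Condition (i) can be imposed on any span by rescaling each $\phi_k$, so it gives you nothing: no Gram lower bound, no Nikolskii constant $\sqrt K$, no banded Gram matrix. For example, $B_k/K$ satisfies (i) but has Gram eigenvalues of order $K^{-3}$.

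Concretely, $\phi_k(x)=P_{k-1}(2x-1)/k^2$ with Legendre $P_{k-1}$ satisfies (i), and (ii) by Jackson's theorem. Yet $\Phi(1)^\top G^{-1}\Phi(1)\asymp K^2$. So already for $d=1$ with Gaussian noise, the fit at $x=1$ has standard deviation $\asymp K/\sqrt n\gg\sqrt{K\log n/n}$. For $\gamma>5/2$ and interior true slopes, Markov's inequality keeps the derivative constraint inactive. A B-spline or wavelet sieve of order exceeding $\gamma$ therefore has to be an added hypothesis, not a choice you make inside the proof.

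Even with B-splines, Demko-type exponential decay covers only the banded diagonal blocks. Under a non-product $p_X$, the cross blocks $\E[\Phi_j(X_j)\Phi_k(X_k)^\top]$ of the additive Gram matrix are dense. You can handle them by backfitting: $\Phi_j^\top G_{jj}^{-1}G_{jk}\theta_k$ is the univariate projection of $s\mapsto\E[g_k(X_k)\mid X_j=s]$. Because the bivariate marginals of $p_X$ are bounded, its sup norm is $\lesssim\|g_k\|_{L_2}$, so cross terms enter only at the $L_2$ rate.

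Two smaller points. First, putting $m_K^*$ in $\cM_K$ and declaring the constraint inactive needs strict slack $\gamma_1<m_j'<\Gamma_1$, since the statement uses the same constants for truth and sieve. It also needs sup-norm control of fitted derivatives, for instance $\|g'\|_\infty\lesssim K\|g\|_\infty$ on splines. Second, the sup-norm bias is $\sum_j\|m_j-g_{j,K}\|_\infty\lesssim dK^{-\gamma}$, so the constant you actually obtain is linear in $d$, not $\sqrt d$. Your $L_2$ remark does not change this, and even in $L_2$ you only get $\sqrt d$ if the component errors are nearly orthogonal.

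\textbf{Comparison with the paper.} The paper's proof is only a pointer to \cite{raskutti2012minimax} and \cite{wainwright2019high}. Those give $L_2$-type rates for additive models and never do the $L_2$-to-$L_\infty$ lift that you correctly identify as the crux. Your localized-series route is the right machinery. What it establishes, though, is a corrected statement: bounded noise, a localized sieve, interior slopes, and a $d$-dependent constant. These hold, or are mild, in the paper's application.
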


\begin{proof}
The proof is standard and follows from empirical process arguments; see, for example, \cite{raskutti2012minimax}(Corollary 2) and \cite{wainwright2019high}.
\end{proof}

\subsection{Feature diversity: polynomial boundary density decay}
We now proceed to show feature diversity.
For every $\tilde m\in \calM_K$, recall that by definition, the derivative of $\tilde m_j$ satisfies $\gamma_1\le\tilde m_j'\le \Gamma_1$.

\begin{proposition}\label{prop:additive_pushforward_density}
Let $X$ be a random vector supported on $[0,1]^d$ with density $f_X$ satisfying
\[
   0<c \le f_X(x)\le C<\infty,\qquad x\in[0,1]^d,
\]
and let $m_j\in C^1([0,1])$, $j=1,\dots,d$, satisfy
\[
   0<\gamma_1 \le m_j'(u)\le \Gamma_1<\infty,\qquad u\in[0,1].
\]
Define the additive map
\[
   m(x)=\sum_{j=1}^d m_j(x_j),\qquad
   \underline u:=\sum_{j=1}^d m_j(0),\qquad 
   \bar u:=\sum_{j=1}^d m_j(1),
\]
and let $Y:=m(X)$. Then $Y$ admits a density $f_Y$ on $[\underline u,\bar u]$ and there exist constants
$0<k_1\le k_2<\infty$, depending only on $(d,c,C,\gamma_1,\Gamma_1)$, such that for all
$t\in(\underline u,\bar u)$,
\begin{equation}\label{eq:additive_pushforward_bound}
   k_1\,\bigl(\min\{t-\underline u,\;\bar u-t\}\bigr)^{d-1}
   \;\le\;
   f_Y(t)
   \;\le\;
   k_2\,\bigl(\min\{t-\underline u,\;\bar u-t\}\bigr)^{d-1}.
\end{equation}
\end{proposition}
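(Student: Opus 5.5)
The plan is to reduce to the uniform case by a change of variables and then compute a volume. Set $Z_j:=m_j(X_j)$. Each $m_j$ is a $C^1$ bijection from $[0,1]$ onto $[m_j(0),m_j(1)]$ with derivative in $[\gamma_1,\Gamma_1]$, so the map $x\mapsto (m_1(x_1),\dots,m_d(x_d))$ is a diffeomorphism. Its Jacobian determinant $\prod_j m_j'(x_j)$ lies in $[\gamma_1^d,\Gamma_1^d]$. Hence $Z=(Z_1,\dots,Z_d)$ has a density $f_Z$ on the box $R:=\prod_j[m_j(0),m_j(1)]$, and
\[
c\,\Gamma_1^{-d}\le f_Z(z)\le C\,\gamma_1^{-d},\qquad z\in R.
\]
The side lengths $\ell_j:=m_j(1)-m_j(0)$ also satisfy $\gamma_1\le \ell_j\le \Gamma_1$. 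Since $Y=\sum_j Z_j$, for $t\in(\underline u,\bar u)$ we get
\[
f_Y(t)=\int_{\{z\in R:\,\sum_j z_j=t\}} f_Z(z)\,\frac{d\mathcal H^{d-1}(z)}{\sqrt d}.
\]
This follows from the coarea formula with $\|\nabla(\sum_j z_j)\|=\sqrt d$, or equivalently from writing $f_Y$ as a convolution. It follows that $f_Y(t)$ is sandwiched between constant multiples of $A(t):=\mathcal H^{d-1}(\{z\in R:\sum_j z_j=t\})$. The problem is thus reduced to showing
\[
A(t)\asymp \bigl(\min\{t-\underline u,\bar u-t\}\bigr)^{d-1}.
\]

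Next I would translate to $R=\prod_j[0,\ell_j]$, so that $\underline u=0$ and $\bar u=L:=\sum_j\ell_j$. The reflection $z\mapsto \ell-z$ maps the slice at level $t$ onto the slice at level $L-t$, so it suffices to treat $t\le L/2$ with $s:=t$.

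For the upper bound, the slice is contained in the full simplex $\{z\ge0,\sum_j z_j=s\}$. That simplex has $(d-1)$-volume $\sqrt d\,s^{d-1}/(d-1)!$, so $A(s)\le C s^{d-1}$.

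For the lower bound, first take $s\le \gamma_1$. Then every point of the simplex satisfies $z_j\le s\le \ell_j$, so the slice is the whole simplex and $A(s)\asymp s^{d-1}$ exactly. For $\gamma_1<s\le L/2$, I need $A(s)\ge c>0$, which is enough because $s^{d-1}\le (d\Gamma_1/2)^{d-1}$ is bounded. Here I use the scaled-copy trick: with $\lambda:=s/L\in(0,1/2]$, the point $\lambda\ell$ lies on the slice. The slice contains the set $\lambda\ell+\{w:\sum_j w_j=0,\ |w_j|\le \lambda\ell_j\}$, and since $\lambda\ge \gamma_1/(d\Gamma_1)$ this set has $(d-1)$-volume bounded below by a constant depending only on $(d,\gamma_1,\Gamma_1)$. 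Concavity of $A^{1/(d-1)}$ (Brunn--Minkowski applied to parallel slices of a convex body) gives an alternative route to the same conclusion.

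I expect the main obstacle to be this intermediate lower bound, which must be made uniform in the side lengths. The scaled-copy argument handles it cleanly, and all constants depend only on $(d,c,C,\gamma_1,\Gamma_1)$, as required. The final step is to combine the two regimes with the symmetric case $t>L/2$, which gives \eqref{eq:additive_pushforward_bound}.
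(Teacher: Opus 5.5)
Your proposal is correct. It reaches the same reduction as the paper (level sets become hyperplane slices of a box) by a somewhat different and more elementary route.

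\textbf{How the reduction differs.} The paper applies the coarea formula directly in $x$-coordinates, using $\sqrt d\,\gamma_1\le\|\nabla m\|\le\sqrt d\,\Gamma_1$. It then transfers $\mathcal H^{d-1}(S_t)$ to the slice $L_\varepsilon$ through the area formula for $\Phi^{-1}$, bounding the $(d-1)$-dimensional tangential Jacobian between $\Gamma_1^{-(d-1)}$ and $\gamma_1^{-(d-1)}$. You instead push the density forward through the full-dimensional diffeomorphism $x\mapsto(m_j(x_j))_j$, whose ordinary Jacobian $\prod_j m_j'(x_j)$ is trivially controlled. The only remaining coarea step is then for the linear functional $\sum_j z_j$, which is equivalent to a convolution identity, so no surface Jacobians on tangent spaces are needed.

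\textbf{Near-boundary regime.} Both arguments identify the slice with a full simplex of area $\sqrt d\,s^{d-1}/(d-1)!$. The paper does this for $\varepsilon\le\varepsilon_0<\min_j M_j$. Your threshold $s\le\gamma_1$ makes explicit that the cutoff depends only on $\gamma_1$.

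\textbf{Interior regime.} This is the substantive difference. The paper only asserts that $\mathcal H^{d-1}(S_t)$ and $f_Y(t)$ are bounded above and below on $[\underline u+\varepsilon_0,\bar u-\varepsilon_0]$ by constants depending on $(d,c,C,\gamma_1,\Gamma_1)$. Your scaled-copy argument actually proves the uniform lower bound. To make that step airtight, state two facts explicitly:
\begin{itemize}
\item Since $\lambda\le 1/2$, you have $0\le\lambda\ell_j+w_j\le 2\lambda\ell_j\le\ell_j$, so the translated set lies inside the box.
\item The set contains $\lambda\gamma_1$ times the unit $(d-1)$-ball of the hyperplane $\{\sum_j w_j=0\}$. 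Its area is therefore at least $\omega_{d-1}\bigl(\gamma_1^2/(d\Gamma_1)\bigr)^{d-1}$, where $\omega_{d-1}$ is the volume of the unit $(d-1)$-ball.
\end{itemize}

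\textbf{Trade-off.} The paper's formulation stays consistent with the level-set representation it uses elsewhere, for example in the conditional-mean regularity result. Yours gives a self-contained proof with explicit constants.
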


\begin{proof}
Let $S_t=\{x\in[0,1]^d:m(x)=t\}$. Since $\nabla m(x)=(m_1'(x_1),\dots,m_d'(x_d))$ and
$\sqrt d\,\gamma_1\le\|\nabla m\|\le \sqrt d\,\Gamma_1$, the coarea formula gives
\[
   f_Y(t)
   =\int_{S_t}\frac{f_X(x)}{\|\nabla m(x)\|}\,d\mathcal H^{d-1}(x),
\qquad t\in(\underline u,\bar u),
\]
and hence
\begin{equation}\label{eq:dens_surface_comparison}
   \frac{c}{\sqrt d\,\Gamma_1}\,\mathcal H^{d-1}(S_t)
   \le f_Y(t)\le
   \frac{C}{\sqrt d\,\gamma_1}\,\mathcal H^{d-1}(S_t).
\end{equation}

We compute $\mathcal H^{d-1}(S_t)$ up to multiplicative constants.  
Define $v_j=m_j(x_j)-m_j(0)$ and $\Phi(x)=(v_1,\dots,v_d)$.  
Then $\Phi$ is a $C^1$ diffeomorphism from $[0,1]^d$ onto $V:=\prod_{j=1}^d[0,M_j]$, 
$M_j:=m_j(1)-m_j(0)$, and
\[
   m(x)=\underline u+\sum_{j=1}^d v_j.
\]
Thus $S_t=\Phi^{-1}(L_\varepsilon)$, where $\varepsilon:=t-\underline u$ and
\[
   L_\varepsilon=\{v\in V:\textstyle\sum_{j=1}^dv_j=\varepsilon\}.
\]

Fix $\varepsilon_0<\min_j M_j$.  
For $0<\varepsilon\le\varepsilon_0$, the box constraints do not bind, and
\[
   L_\varepsilon=\bigl\{v\in\mathbb R_+^d:\sum_{j=1}^dv_j=\varepsilon\bigr\}.
\]
A standard computation (parametrizing $v_d=\varepsilon-\sum_{j<d}v_j$) shows that
\begin{equation}\label{eq:Lv_surface}
   \mathcal H^{d-1}(L_\varepsilon)
   =\frac{\sqrt d}{(d-1)!}\,\varepsilon^{\,d-1}.
\end{equation}

The map $\Phi^{-1}$ is diagonal with entries in $[1/\Gamma_1,1/\gamma_1]$, so its
$(d-1)$-dimensional Jacobian on tangent spaces satisfies
\[
   \Gamma_1^{-(d-1)}\le J_{d-1}\Phi^{-1}\le \gamma_1^{-(d-1)}.
\]
Applying the area formula for $C^1$ maps between $(d-1)$-manifolds,
\[
   \Gamma_1^{-(d-1)}\,\mathcal H^{d-1}(L_\varepsilon)
   \le \mathcal H^{d-1}(S_t)
   \le \gamma_1^{-(d-1)}\,\mathcal H^{d-1}(L_\varepsilon),
\qquad t=\underline u+\varepsilon,
\]
and together with \eqref{eq:Lv_surface} we obtain, for $0<\varepsilon\le\varepsilon_0$,
\[
   c_1\,\varepsilon^{d-1}\le \mathcal H^{d-1}(S_t)\le c_2\,\varepsilon^{d-1},
\]
with $c_1,c_2>0$ depending only on $(d,\gamma_1,\Gamma_1)$.

Combining with \eqref{eq:dens_surface_comparison} gives
\[
   k_1^{(L)}\, (t-\underline u)^{d-1}
   \le f_Y(t)\le
   k_2^{(L)}\, (t-\underline u)^{d-1},\qquad 
   t\in(\underline u,\underline u+\varepsilon_0].
\]

The same argument applied to the functions $\tilde m_j(u)=-m_j(1-u)$ yields an identical bound near 
$t=\bar u$, namely
\[
   k_1^{(U)}\,(\bar u-t)^{d-1}
   \le f_Y(t)\le
   k_2^{(U)}\,(\bar u-t)^{d-1},
\qquad t\in[\bar u-\varepsilon_0,\bar u).
\]

On the compact interior interval 
$[\underline u+\varepsilon_0,\bar u-\varepsilon_0]$, both 
$\mathcal H^{d-1}(S_t)$ and $f_Y(t)$ are bounded above and below by positive constants depending only on
$(d,c,C,\gamma_1,\Gamma_1)$, while $\min\{t-\underline u,\bar u-t\}$ is bounded away from zero.
Absorbing constants yields \eqref{eq:additive_pushforward_bound}.
\end{proof}

%% file: app-technical.tex
\section{Technical Lemmas}
\label{sec:technical}

\subsection{Proof of Lemma~\ref{lemma:weights}}
\label{sec:proof_lemma_weights}
The first three statements follow similar lines of \cite{Tsybakov2009}. For completeness, we present it here.

\noindent\textbf{Proof of (i)}
\begin{align}
|w^1(x,u_i)|&=\frac{1}{nh^2}\left|e_2^T B_n^{-1}(x)K\left(\frac{u_i-x}{h}\right)U\left(\frac{u_i-x}{h}\right)\right| \notag\\&\le
\frac{1}{nh^2}\left\| B_n^{-1}(x)K\left(\frac{u_i-x}{h}\right)U\left(\frac{u_i-x}{h}\right)\right\| \notag\\
&\overset{(a)}{\lesssim} \frac{1}{nh^2\mu_h(x)}\left\|K\left(\frac{u_i-x}{h}\right)U\left(\frac{u_i-x}{h}\right)\right\| \notag\\
&\le \frac{K_{\max}}{nh^2\mu_h(x)}\left\|U\left(\frac{u_i-x}{h}\right)\right\|I\left(\left|\frac{u_i-x}{h}\right|\le 1\right)\label{equ:weights-main-inequality}\\
&\le \frac{K_{\max}}{nh^2\mu_h(x)}\sqrt{1+1+\sum_{i=2}^p \frac{1}{(i!)^2}}\lesssim \frac{1}{nh^2\mu_h(x)}\notag
\end{align}
where $(a)$ is due to Lemma~\ref{lemma:matrix_min_eigen}. The proof for zero-order weights follows a similar line.

\vspace{1em}
\noindent\textbf{Proof of (ii)}

We reuse the above inequality and obtain
\begin{align*}
\sum_{i=1}^n |w^1(x,u_i)|&\le \frac{K_{\max}}{nh^2\mu_h(x)}\sum_{i=1}^n\left\|U\left(\frac{u_i-x}{h}\right)\right\|I\left(\left|\frac{u_i-x}{h}\right|\le 1\right)\\
&\lesssim \frac{1}{nh^2\mu_h(x)}\sum_{i=1}^n I\left(x-h\le u_i\le x+h\right)\\
&\overset{(b)}{\lesssim}  \frac{1}{nh^2}
\end{align*}
where $(b)$ is due to Lemma~\ref{lemma:uniform_scattering}.

Note that $(iii)$ immediately follows from~\eqref{equ:weights-main-inequality}. It remains to prove the last statement. 

\vspace{1em}
\noindent\textbf{Proof of (iv)}

Recall that
\begin{align*}
B_n(x)=\frac{1}{nh}\sum_{i=1}^n K\left(\frac{u_i-x}{h}\right)U\left(\frac{u_i-x}{h}\right) U\left(\frac{u_i-x}{h}\right)^T
\end{align*}
and
\begin{align*}
w^1(x,u_i)=\frac{1}{nh^2}e_2^T B_n^{-1}(x)K\left(\frac{u_i-x}{h}\right)U\left(\frac{u_i-x}{h}\right).
\end{align*}

Taking the derivative with respect to $x\in [u_i-h,u_i+h]$ (outside this interval the value is zero, as well as the derivative), we have
\begin{align*}
\frac{d}{dx}w^1(x,u_i)&=-\frac{1}{nh^2}e_2^T B_n^{-1}(x)B_n'(x)B_n^{-1}(x) K\left(\frac{u_i-x}{h}\right)U\left(\frac{u_i-x}{h}\right)\\&+\frac{1}{nh^2}e_2^T B_n^{-1}(x)\left[-\frac{1}{h}K'\left(\frac{u_i-x}{h}\right)U\left(\frac{u_i-x}{h}\right)-\frac{1}{h}K\left(\frac{u_i-x}{h}\right)U'\left(\frac{u_i-x}{h}\right)\right]
\end{align*}

By Lemma~\ref{lemma:matrix_min_eigen}, it holds that $B_n(x)\succeq c\mu_h(x)I$. Moreover, 
\begin{align*}
B'_n(x)=\frac{1}{nh^2}\sum_{i=1}^n \Big\{K'\left(\frac{u_i-x}{h}\right)U\left(\frac{u_i-x}{h}\right) U\left(\frac{u_i-x}{h}\right)^T&+K\left(\frac{u_i-x}{h}\right)U'\left(\frac{u_i-x}{h}\right) U\left(\frac{u_i-x}{h}\right)^T\\&+K\left(\frac{u_i-x}{h}\right)U\left(\frac{u_i-x}{h}\right) U'\left(\frac{u_i-x}{h}\right)^T\Big\}.
\end{align*}
This together with boundedness of $|K'|,\|U\|$ implies that $\|B_n'(x)\|\le C\frac{\mu_h(x)}{h}$.

Hence, one can deduce
\begin{align*}
\left|\frac{d}{dx}w^1(x,u_i)\right|\lesssim
\frac{1}{nh^3\mu_h(x)}\lesssim n,
\end{align*}
hence completing the proof.

\section{Additional Experimental Details}
\label{sec:experiment-detail}

We here provide details of constructing the CDF function $F$ in Section~\ref{sec:simulation} on $[u_{\min},u_{\max}]$ by adding smooth ``bumps'' to a $C^4$ baseline.

Let $u_{\min}=-0.25$, $u_{\max}=0.25$, and define the normalized coordinate
\[
t(u) \;=\; \frac{u-u_{\min}}{u_{\max}-u_{\min}}.
\]
Our baseline CDF is the degree-9 polynomial~\footnote{It is the quintic smoothstep-5 function widely used in computer vision and graphics. It is the unique degree-9 polynomial that interpolates from 0 to 1 on $[0,1]$ with zero derivatives up to order four at both endpoints.} 
\begin{equation}
\label{eq:base-cdf}
F_0(u) \;=\; \mathbf{1}\{t(u)\in[0,1]\}\,\Big[70\,t(u)^9 - 315\,t(u)^8 + 540\,t(u)^7 - 420\,t(u)^6 + 126\,t(u)^5\Big],
\end{equation}
which satisfies $F_0(u_{\min})=0$, $F_0(u_{\max})=1$, and has vanishing derivatives up to order four at the boundaries.
Its density on $(u_{\min},u_{\max})$ is
\begin{equation}
\label{eq:base-pdf}
f_0(u) \;=\; F_0'(u) \;=\; \frac{630}{u_{\max}-u_{\min}}\; t(u)^4\big(1-t(u)\big)^4 \cdot \mathbf{1}\{t(u)\in(0,1)\}.
\end{equation}

To add local non-homogeneities, we use a standard $C^\infty$ bump
\[
b(s) \;=\;
\begin{cases}
\exp\!\big(-\tfrac{1}{1-s^2}\big), & |s|<1,\\
0, & |s|\ge 1,
\end{cases}
\qquad
b'(s) \;=\;
\begin{cases}
b(s)\,\Big(-\tfrac{2s}{(1-s^2)^2}\Big), & |s|<1,\\
0, & |s|\ge 1.
\end{cases}
\]
Fix a number of bumps $K\ge 1$, centers $\{c_k\}_{k=1}^K \subset [-0.2,0.2]$ (e.g., equally spaced), and a common half-width $h_b>0$.
Let $\sigma_k \in \{+1,-1\}$ be alternating signs (e.g., $\sigma_k = (-1)^k$) and set the bump amplitude
\begin{equation}
\label{eq:bump-amplitude}
A \;=\; \rho\, h_b^{\beta},
\end{equation}
so that the perturbation scales with the smoothness parameter $\beta$. In our experimental setup, we choose $\rho=5$ and $h_b=1/45$.

We define the bumped CDF as
\begin{equation}
\label{eq:bumpy-cdf}
F(u)
\;=\;
\mathrm{clip}_{[0,1]}\!\left(
F_0(u)
\;+\;
\sum_{k=1}^K \sigma_k\,A\,
b\!\left(\frac{u-c_k}{h_b}\right)
\right),
\end{equation}
where $\mathrm{clip}_{[0,1]}(x)=\min\{1,\max\{0,x\}\}$ is applied pointwise for numerical safety.

Its density (before clipping) is
\begin{equation}
\label{eq:bumpy-pdf}
f(u)
\;=\;
F'(u)
\;=\;
f_0(u)
\;+\;
\sum_{k=1}^K \sigma_k\,A\,
\frac{1}{h_b}\,
b'\!\left(\frac{u-c_k}{h_b}\right),
\end{equation}
and in implementation we enforce nonnegativity by $f(u)\leftarrow \max\{f(u),0\}$.

\subsection{Details on post-smoothing}
The following details and hyperparameter choices apply to both synthetic and semi-real experimental settings.
\paragraph{Variable bandwidth}
For each grid point \(u\) and each refitting time, we choose the location-dependent bandwidth
\[
\delta_x(u)
\;=\;
C_\delta \, n^{-\frac{\beta-1}{2\beta+1}}
\sqrt{\frac{\log T}{\alpha(u)^\kappa}},
\]
where
\[
\alpha(u)
=
\min\!\left\{
\frac{u - z_{\min}}{z_{\max} - z_{\min}},
\;
\frac{z_{\max} - u}{z_{\max} - z_{\min}}
\right\},\quad C_\delta=2.5,\quad \kappa=0.
\]

\paragraph{Boundary parameter.}
We define
\[
v
\;=\;
\left(
C_v^2 \, n^{-\frac{\beta-1}{2\beta+1}} \sqrt{\log T}
\right)^{\frac{2}{\kappa+2}}
,
\]
and clip \(v\) to $[0,0.01]$, where $C_v=3$.

Here, \(\delta_x(u)\) controls the amount of local smoothing, increasing near the boundary through \(\alpha(u)\), while \(v\) determines the size of the boundary region where linear extrapolation is applied.

\subsection{Details on implementation of policy comparisons}

\paragraph{Synthetic experiments.} For ILPR, the main hyperparameters are $T_0$ and $T_{0m}$, where $T_0$ is the length of the initial random-pricing phase used to obtain a preliminary nonparametric estimate of the transformation function, and $T_{0m}$ is the number of samples used for the first-stage utility estimation step. In the code, $T_0=100$ and $T_{0m}=4\sqrt{T}$. The nonparametric estimator uses local-polynomial degree $\texttt{degree}=2$, grid size $\texttt{gridN}=301$, and bandwidth scaling $\texttt{band}=0.5$. The post-smoothing step follows the specification in the last section.

For the kernel baseline, episodes have a base length of $200$ and then double over time. In each episode, the exploratory sample size is
\[
n_{\mathrm{exp}}=\left\lfloor c\, b^\alpha \right\rfloor,
\]
where $b$ is the episode length, $c=\texttt{baseline\_explore\_c}=5.0$, and $\alpha=1/2$ in the known-utility case, while $\alpha=(2\beta+1)/(4\beta-1)$ in the unknown-utility case. Thus \texttt{baseline\_explore\_c} controls how aggressively the method explores inside each block. The baseline then fits a one-shot kernel estimator with bandwidth constant \texttt{baseline\_band}$=0.5$, and its policy root-finding routine uses step size \texttt{baseline\_stepsize}$=0.35$. Here \texttt{baseline\_band} controls the kernel smoothing scale and \texttt{baseline\_stepsize} controls the numerical update used to invert the estimated first-order condition.

For DIP, the initialization length is $2^{\texttt{dip\_init\_exponent}}$ with \texttt{dip\_init\_exponent}$=7$, so the algorithm starts with $128$ random-price samples. Thereafter it proceeds in episodes of lengths $2^j$, $j \ge 7$, with price discretization size
\[
\max\!\left\{2,\ \left\lfloor \texttt{dip\_discretization\_c}\,\lceil b^{1/6}\rceil \right\rfloor \right\},
\]
where \texttt{dip\_discretization\_c}$=20.0$ and $b$ is the current episode length. The confidence term in the UCB index is scaled by \texttt{dip\_ucb\_c}$=1/40$, and the ridge-type stabilization constant is \texttt{dip\_lambda}$=0.1$. Thus \texttt{dip\_discretization\_c} controls the number of price bins considered in each episode, \texttt{dip\_ucb\_c} controls optimism in the UCB score, and \texttt{dip\_lambda} regularizes the estimated mean-revenue statistics inside each bin.

\paragraph{Semi-real experiments.}
In the semi-real experiments, the three policies are run on the product-specific semi-real environment constructed from the real dataset, with slightly adjusted tuning. For ILPR, we use $T_0=80$ and $T_{0m}=200$. The nonparametric estimator again uses local-polynomial degree $\texttt{degree}=2$ and grid size $\texttt{gridN}=301$, with smoothness parameter fixed at $\beta=2$ and bandwidth scaling $\texttt{band}=0.6$. The post-smoothing parameters are $C_\delta=2.5$, $C_v=3.0$, and $\kappa=0.0$, consistent with the definition of $\delta_x(u)$ and $v$ above. 

The nuisance estimation error is computed as $\varepsilon_m = 0.05\|\hat\theta\|/\sqrt{n}$ after each first-stage refit, which provides a scale-adaptive correction reflecting the uncertainty in the estimated utility function.

For the semi-real kernel baseline, the episode base length is reduced to $160$, the exploration constant is \texttt{baseline\_explore\_c}$=4.0$, the kernel bandwidth constant is \texttt{baseline\_band}$=0.6$, and the root-finding step size remains \texttt{baseline\_stepsize}$=0.35$. Hence, relative to the synthetic experiments, the semi-real baseline uses slightly less aggressive exploration per block and a slightly larger bandwidth constant.

For semi-real DIP, the hyperparameters are the same as in the synthetic experiments: \texttt{dip\_init\_exponent}$=7$, \texttt{dip\_discretization\_c}$=20.0$, \texttt{dip\_ucb\_c}$=1/40$, and \texttt{dip\_lambda}$=0.1$. Thus, the method again starts with $128$ random samples, uses discretization proportional to $b^{1/6}$ in each episode, and applies the same UCB scaling and ridge stabilization as in the synthetic benchmark. Across both synthetic and semi-real experiments, all reported averages are computed over repeated Monte Carlo runs; in the release code, the command-line default is \texttt{trials}$=50$.